\documentclass[aps,pra,reprint,a4paper,10pt,nofootinbib,floatfix]{revtex4-2}
\usepackage[T1]{fontenc}
\usepackage[utf8]{inputenc}
\usepackage{lmodern}
\usepackage[a4paper,left=18mm,right=18mm,top=20mm,bottom=20mm]{geometry}
\usepackage{amsmath,amssymb,amsthm,mathtools,bm}
\usepackage{microtype,booktabs,array,enumitem,graphicx}
\usepackage[dvipsnames]{xcolor}
\usepackage[colorlinks=true,linkcolor=MidnightBlue,citecolor=MidnightBlue,urlcolor=MidnightBlue]{hyperref}
\usepackage[nameinlink,noabbrev]{cleveref}

\newtheorem{theorem}{Theorem}

\newtheorem{lemma}{Lemma}
\newtheorem{corollary}[theorem]{Corollary}
\theoremstyle{remark}

\DeclareMathOperator{\Tr}{Tr}
\DeclareMathOperator{\supp}{supp}
\DeclareMathOperator{\SEP}{SEP}
\DeclareMathOperator{\Bin}{Bin}
\DeclareMathOperator{\spanop}{span}
\newcommand{\id}{\mathrm{id}}
\newcommand{\ER}{E_R}
\newcommand{\ERi}{E_R^{\infty}}
\newcommand{\cL}{\mathcal L}
\newcommand{\cA}{\mathcal A}

\newcommand{\cP}{\mathcal P}
\newcommand{\cH}{\mathcal H}
\newcommand{\EE}{\mathbb E}

\newcommand{\ket}[1]{\lvert #1\rangle}
\newcommand{\bra}[1]{\langle #1\rvert}

\newcommand{\proj}[1]{\ket{#1}\!\bra{#1}}
\newcommand{\norm}[1]{\left\lVert#1\right\rVert}
\newcommand{\dd}{\mathrm d}

\usepackage{tikz,needspace}
\usetikzlibrary{arrows.meta,positioning,calc,decorations.pathreplacing}
\DeclareMathOperator{\Prb}{Pr}

\allowdisplaybreaks[2]
\begin{document}

\title{Energy-constrained two-way capacities of pure-loss \\ and quantum-limited amplifier channels}
\author{Stefano Pirandola}
\affiliation{Department of Computer Science, University of York, York YO10 5GH, United Kingdom}

\begin{abstract}
We determine the two-way quantum, entanglement-distribution, private, and secret-key capacities of the pure-loss bosonic channel under an unconditional mean transmitted-photon-number constraint. For transmissivity $\eta$ and mean photon number $N$, all four capacities coincide and are given by $g(N)-g((1-\eta)N)$,
where $g$ is the entropy of a thermal mode. This resolves the longstanding energy-constrained capacity problem by establishing the optimality of the reverse-coherent-information rate introduced in 2009. Our central tool is sector teleportation simulation, which we develop specifically to transfer the input-energy constraint to the shared entanglement resource, thereby preserving the constraint throughout converse arguments for general adaptive protocols. For every $N>0$, we further show that the corresponding strong-converse thresholds coincide with the unconstrained value $-\log_2(1-\eta)$. For parallel protocols with a hard total-photon-number cutoff, the same energy-constrained capacity formula instead satisfies an exponential strong converse. As an extension, we determine the energy-constrained two-way capacities of the quantum-limited amplifier, showing that classical assistance does not improve its unassisted quantum and private rates. Together, these results establish a unified finite-energy framework for adaptive bosonic communication and reveal how energy constraints fundamentally reshape both attainable rates and converse bounds.
\end{abstract}
\maketitle

\section{Introduction}\label{sec:introduction}
The pure-loss bosonic channel is the canonical model of attenuation in optical communication~\cite{HolevoWerner,Weedbrook}. Its two-way assisted capacities characterize the ultimate rates of quantum communication, entanglement distribution, private communication, and secret-key generation when unlimited two-way classical communication is available~\cite{Scarani,PirandolaAdvances}. Under a finite transmitted-photon budget, these rates depend jointly on channel loss and input energy.

Exact benchmarks are known for the classical and unassisted quantum capacities of bosonic attenuation~\cite{Giovannetti2004,Wolf2007}. In the two-way-assisted setting, the reverse coherent information~\cite{RCI}, extended to bosonic channels in Ref.~\cite{PGPBL}, provided an achievable rate in 2009; see also Ref.~\cite[Supp. Note 7, Eq.~(257)]{PLOB} for its finite-energy expression. The PLOB bound subsequently established the corresponding unconstrained two-way capacities in 2017~\cite{PLOB}. At finite energy, however, the known energy-dependent upper bounds~\cite{DSW,Goodenough} did not meet the 2009 achievable rate, leaving its optimality open.

Here we resolve this question under an unconditional mean transmitted-photon-number constraint. We prove that the reverse-coherent-information rate is optimal and obtain the exact two-way quantum, entanglement-distribution, private, and secret-key capacities of the pure-loss bosonic channel, including for general adaptive protocols with quantum memories. The key ingredient is \emph{sector teleportation simulation}, which transfers the input-energy constraint to the shared entanglement resource and thereby preserves the photon budget throughout the adaptive converse. Our more advanced construction combines ideas from channel simulation~\cite{PLOB,Simulation}, irreducible covariance~\cite{GBW}, and photon-number conditioning~\cite{Winnel}.

We further show that energy-constrained weak-converse capacities and strong-converse thresholds are distinct. Using vacuum randomization together with the unconstrained strong converse~\cite{WTB,Simulation}, we find that, for every nonzero mean-energy constraint, the strong-converse threshold remains equal to the unconstrained PLOB value. For parallel protocols with a hard total-photon-number cutoff, however, we prove an exponential strong converse at the energy-constrained capacity. This distinguishes an unconditional mean constraint from a constraint on photon-number occupation, a distinction also relevant to classical optical communication~\cite{WW}.

The pure-loss resource calculation also supports an extension to quantum-limited amplification. We show that the known finite-energy unassisted quantum and private rates~\cite{QiWilde,WildeQi} remain optimal with adaptive two-way assistance. The extension identifies each photon-number-conditioned amplifier resource with a conditional loss resource after exchanging the parties, and then uses degradability~\cite{CarusoGiovannetti} to retain the input-energy constraint. The exponential strong converse under a parallel photon-number cutoff extends to this channel as well.

Below we present our main results starting from the pure loss channel and then extending to the quantum-limited amplification. Detailed proofs are given in Appendices~\ref{sec:preliminaries}--\ref{sec:cutoff-strong}.

\section{Main results}\label{sec:results}
Let $\cL_\eta$ be the pure-loss channel with transmissivity
$0<\eta<1$, and write $q=1-\eta$. Alice and Bob start with local
preparations and public randomness, without shared entanglement or
secret key. They may interleave channel uses with adaptive local
operations and unlimited authenticated two-way public classical
communication (LOCC). Each finite protocol has finitely many local
rounds and finite or countable outcome alphabets. Local memories have
no energy or entropy constraint. Where relevant, Eve is assumed to receive the loss environments
and the entire public transcript.

For an $n$-use protocol, let $\rho^{(i)}_{A_i}$ be the unconditional
state of the transmitted mode at use $i$, and $\hat n_{A_i}$ its
photon-number operator. The mean-energy constraint is given by
\begin{equation}
 x_i:=\Tr(\hat n_{A_i}\rho^{(i)}_{A_i}),\qquad
 \frac1n\sum_{i=1}^{n}x_i\le N,
 \label{eq:energy}
\end{equation}
where $0\le N<\infty$. Every random choice, public outcome, and
unsuccessful branch is included, and all $n$ channel slots count
toward the rate. No peak or branchwise energy constraint is imposed.
For $x\ge0$, define
\begin{equation}
 \begin{aligned}
 g(x)&=(x+1)\log_2(x+1)-x\log_2x,\\
 c_\eta(x)&=g(x)-g(qx),\qquad g(0)=0.
 \end{aligned}
 \label{eq:cdef}
\end{equation}

The capacities $Q_2$, $D_2$, $P_2$, and $K_2$ describe, respectively,
quantum transmission, entanglement distribution, private classical
transmission, and secret-key agreement. Here $D_2$ concerns entanglement
generated from an initially unentangled preparation. Each capacity is
the supremum of asymptotic rates $\log_2d_n/n$ under \eqref{eq:energy},
where $d_n$ is the target dimension or classical alphabet size and
the error tends to zero. Quantum tasks use entanglement infidelity;
classical tasks use half the trace distance from a correct uniform
key or message independent of Eve, with her actual marginal retained
\cite{RennerKonig,PortmannRenner} (precise error conventions are given
in Appendix~\ref{sec:task-errors}). Rates are in qubits, ebits, or secret
bits per channel use, as appropriate. Two operational equivalences preserve the optical energy budget:
\begin{equation}
 \begin{aligned}
 Q_2(\cL_\eta,N)&=D_2(\cL_\eta,N),\\
 P_2(\cL_\eta,N)&=K_2(\cL_\eta,N).
 \end{aligned}
 \label{eq:operational-equivalences}
\end{equation}
In fact, quantum transmission distributes entanglement, which conversely
transmits quantum states by teleportation~\cite{BennettTeleport}.
Private transmission of a random message generates key; a key
transmits a private message by a one-time pad and public ciphertext
\cite{ShannonSecrecy,PLOB}. Note that these conversions add only local operations
and public communication.

\subsection{Pure-loss capacities and sector teleportation}
Our central result is the following:
\begin{theorem}[Energy-constrained capacities]\label{thm:capacity}
Under \eqref{eq:energy}, we have
\begin{equation}
 \boxed{\begin{aligned}
 Q_2(\cL_\eta,N)&=D_2(\cL_\eta,N)\\
 &=P_2(\cL_\eta,N)=K_2(\cL_\eta,N)\\
 &=g(N)-g((1-\eta)N).
 \end{aligned}}
 \label{eq:capacity}
\end{equation}
\end{theorem}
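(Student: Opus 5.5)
The proof splits into achievability and converse. Because the operational equivalences \eqref{eq:operational-equivalences} cost no optical energy, and because $Q_2\le P_2$ (entanglement yields private key), it is enough to prove two things: $D_2(\cL_\eta,N)\ge c_\eta(N)$, and $K_2(\cL_\eta,N)\le c_\eta(N)$ for general adaptive protocols under \eqref{eq:energy}.

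\emph{Achievability.} Alice sends one half of a two-mode squeezed vacuum with mean photon number $N$ through each channel use. By construction every slot meets $x_i=N$. The output state is $\rho_{A'B}$. Its reverse coherent information $I_R=S(A')-S(A'B)$ equals $g(N)-g(qN)$, because the global output purity equals the environment's entropy $g(qN)$. The hashing inequality, applied with one-way communication from Bob to Alice, then distils entanglement at rate $I_R$ (i.e.\ reverse coherent information rate), as in Ref.~\cite{PGPBL}. This gives $D_2\ge c_\eta(N)$.

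\emph{Converse.} The plan is to reduce any adaptive protocol to LOCC acting on a product of resource states whose energy is controlled. Teleportation-covariance of $\cL_\eta$ expresses each use as teleportation through its Choi state. However, the ideal Choi state has infinite energy, which is why PLOB only gives $-\log_2(1-\eta)$. The fix is sector teleportation. Decompose the input into photon-number sectors $\Pi_k$. The channel preserves the total excitation number of input plus environment, so $\cL_\eta$ is covariant under a sector-wise teleportation. In that teleportation the resource is a photon-number-conditioned truncated Choi state $\sigma_k=(\id\otimes\cL_\eta)(\proj{\phi_k})$, with $\phi_k$ maximally entangled on the $k$-photon sector. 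The sector label is measured first and communicated publicly. Using irreducible covariance within each sector ($U(1)$ phases together with the permutation structure of Fock sectors), Bob can correct the Weyl-type errors. The result is that use $i$ is simulated by LOCC applied to a mixture $\sum_k p^{(i)}_k\sigma_k\otimes\proj{k}$. Here $p^{(i)}_k$ is the photon distribution of the unconditional input, so $\sum_k k\,p^{(i)}_k=x_i$.

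With this simulation, the standard stretching argument gives
\[
K\le \frac1n\sum_i E_R\Big(\sum_k p_k^{(i)}\sigma_k\otimes\proj k\Big)
\le\frac1n\sum_i\sum_k p^{(i)}_k E_R(\sigma_k),
\]
up to vanishing terms. The flags are classical and known to both parties, so $E_R$ is affine over them. The per-sector bound I would prove is $E_R(\sigma_k)\le f(k)$. For this I would bound $E_R$ by the reverse coherent information of $\sigma_k$ (a degradable, Winnel-type conditional loss resource), which gives $f(k)=\log_2\binom{k}{\cdot}$-type entropies, i.e.\ $f(k)=S(\text{binomial output})$ differences. Next, I would show that the averaged map $x\mapsto\sup\{\sum_k p_kf(k):\sum kp_k=x\}$ is bounded by the concave function $c_\eta(x)$. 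A cleaner route is to compare with the Gaussian optimum: among inputs of mean photon number $x$, the thermal mixture of sectors maximizes $\sum_kp_k\,[\text{RCI of sector }k]$, and that maximum is exactly $g(x)-g(qx)$. This uses maximum entropy together with the entropy-power / Gaussian optimality for degradable pure loss. Finally, concavity and monotonicity of $c_\eta$ with Jensen turn $\frac1n\sum_i c_\eta(x_i)$ into $c_\eta(N)$. Asymptotic continuity issues are handled by the finite-sector construction and the fact that the errors are controlled per flag.

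\emph{Main obstacle.} The step I expect to be hardest is the sector-wise bound showing that the averaged relative entropy of entanglement of the conditional resources does not exceed $g(x)-g(qx)$. Bounding $E_R(\sigma_k)$ by each sector's own capacity-like quantity and then averaging may lose too much, since $c_\eta$ is not affine in $k$. My first attempt would therefore avoid per-sector bounds altogether. I would bound $E_R$ of the flagged mixture directly by the reverse coherent information of the full purified input $\sum_k\sqrt{p_k}\ket{\phi_k}$ minus flag entropy corrections, and then invoke Gaussian optimality of reverse coherent information for $\cL_\eta$ at fixed mean energy, which yields exactly $c_\eta(x_i)$.
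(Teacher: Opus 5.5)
\textbf{There is a genuine gap in the converse.} Your reduction to $D_2(\cL_\eta,N)\ge c_\eta(N)$ and $K_2(\cL_\eta,N)\le c_\eta(N)$ is sound. Two-mode squeezed inputs with reverse hashing do give the 2009 rate; the paper uses sector hashing on $\Phi_{s,m}$ instead, which keeps all hashing finite dimensional.

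The gap is in the simulation step. Each channel use carries a single mode, whose $k$-photon sector is one dimensional. So $\phi_k=\ket{k}\ket{k}$ is a product state and every $\sigma_k$ is separable. ``Measure the sector, then teleport'' therefore implements $\cL_\eta\circ\cP_A$, where $\cP_A$ is number dephasing. This composite is entanglement breaking, and your chain would yield $K_2\le0$, contradicting your own achievability. All the entanglement sits in the intersector coherence that your simulation discards.

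You could pay for that coherence with the pinching bound $\ER(\omega)\le\ER(\cP_B\omega)+H(K)$, using phase covariance $\cP_B\omega=\cL_\eta(\cP_A\rho)$. But this costs up to $g(\eta x_i)$ per use. That overhead does not vanish, and the resulting bound $g(\eta N)$ exceeds $c_\eta(N)$ by strict subadditivity of $g$.

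The missing idea is to pinch only the \emph{total} photon number of a large block of $s$ modes. Then $\cH_{s,m}$ has dimension $\binom{s+m-1}{m}$, and both the pinching and loss-count entropies are $O(\log s)$. The irreducible covariance must come from the full passive group through $\mathrm{Sym}^mU$; phases and mode permutations alone act reducibly on $\cH_{s,m}$ for $s,m\ge2$. Because successive adaptive inputs cannot be grouped, the paper groups the $i$th use across $r$ independent replicas of the whole protocol. It then takes $r\to\infty$ at fixed $n$, obtaining $\ERi(\zeta_n)\le\sum_i c_\eta(x_i)$.

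There is also no fixed resource to stretch into, since the required $\sigma_k$ depends on the measured input. The paper instead proves a blockwise entanglement-gain inequality: it averages the conditional outputs and uses strong monotonicity, $\sum_m p_m\ER(\rho_m)\le\ER(\rho)$.

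\textbf{Your per-sector estimate runs in the wrong direction.} Reverse coherent information is a \emph{lower} bound on $\ER$, by the Plenio--Virmani--Papadopoulos inequality \eqref{eq:rci-lower}; equivalently, hashing rates cannot exceed $\ER$. So ``bound $\ER(\sigma_k)$ by its reverse coherent information'' requires an explicit separable comparison state, which you do not supply. Your fallback—bounding $\ER$ of the flagged mixture by the reverse coherent information of the purified input, then invoking Gaussian optimality—presupposes exactly the inequality that was open.

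The paper's Lemma~\ref{lem:resource} supplies the comparison. The condensate state $\sigma_{s,m,k}=\int\proj{\bar u;m}\otimes\proj{u;k}\,\dd u$ is separable and acts as $I/d_{s,m}$ on the support of the conditional resource. That resource is itself maximally mixed of rank $d_{s,m-k}$, giving $\ER(J_{s,m,k})=\log_2(d_{s,m}/d_{s,m-k})$.

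Your concern about averaging is then resolved without any Gaussian-optimality argument. Data processing against thermal states through the complementary loss channel gives $F_{\eta,s}(m)\le s\,c_\eta(m/s)+H(\Bin(m,q))$. Concavity of $c_\eta$ makes Jensen's inequality go the right way, so averaging over sectors costs only the sublinear count entropy $g(sqx)$.
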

\noindent The proof is in Appendices~\ref{sec:weak}--\ref{sec:key}. This result closes the long-standing gap
between reverse coherent information and the previous upper bounds
\cite{DSW,PLOB}. The capacities vanish at $N=0$ and approach the
PLOB value $-\log_2(1-\eta)$ in the limit of infinite energy $N\to\infty$.

An associated question is the analysis of the strong-converse thresholds under the same
mean-energy constraint. For a capacity $C$, its strong-converse
threshold $C^\dagger$ is the infimum of rates $R$ such that every
admissible sequence of $n$-use protocols satisfying
$\liminf_{n\to\infty}(\log_2 d_n)/n>R$
has error tending to one. Here $d_n$ denotes the target dimension
or the message/key alphabet size, depending on the task. For
quantum transmission and entanglement distribution, this means
vanishing entanglement fidelity. The operational conversions
in \eqref{eq:operational-equivalences} also yield the corresponding
equivalences between strong-converse thresholds, with the error
comparisons detailed in Appendix~\ref{sec:strong}. We establish
the following result.

\begin{theorem}[Mean-energy strong-converse thresholds]\label{thm:mean-sc}
For every $N>0$, %
under \eqref{eq:energy}, we have
\begin{equation}
 \boxed{\begin{aligned}
 Q_2^{\dagger}(\cL_\eta,N)&=D_2^{\dagger}(\cL_\eta,N)\\
 &=P_2^{\dagger}(\cL_\eta,N)=K_2^{\dagger}(\cL_\eta,N)\\
 &=-\log_2(1-\eta).
 \end{aligned}}
 \label{eq:mean-sc}
\end{equation}
\end{theorem}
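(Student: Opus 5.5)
Two inequalities are needed for each of the four tasks. By the operational conversions \eqref{eq:operational-equivalences} and the inclusions $Q_2^\dagger\le P_2^\dagger$ (an entanglement-distribution protocol with non-vanishing fidelity can be measured to give a key with non-vanishing success), it suffices to show two things. First, $K_2^\dagger(\cL_\eta,N)\le -\log_2(1-\eta)$. Second, $D_2^\dagger(\cL_\eta,N)\ge -\log_2(1-\eta)$ for every $N>0$.

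\emph{Upper bound.} Energy constraints only restrict the protocol class, so every admissible protocol under \eqref{eq:energy} is also an unconstrained adaptive protocol. The unconstrained strong converse for the pure-loss channel~\cite{WTB,Simulation} states that any sequence with rate exceeding $-\log_2(1-\eta)$ has success probability decaying to zero. This holds via teleportation stretching with the energy-truncated resource and a sandwiched R\'enyi relative entropy bound with a uniform error in the truncation. Hence the threshold is at most the PLOB value. I will only check that the cited result is stated for general adaptive LOCC protocols with countable alphabets and with the error criteria used here.

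\emph{Lower bound via vacuum randomization.} Fix $N>0$ and any $R<-\log_2(1-\eta)$. Choose a large energy $M\gg N$ and a protocol $\mathcal P_m$ for the unconstrained channel of rate $R'\in(R,-\log_2(1-\eta))$ with mean energy per use at most $M$. Such a protocol exists because the unconstrained capacity is approached at finite energy: $c_\eta(M)\to -\log_2(1-\eta)$ by Theorem~\ref{thm:capacity}. Its error $\varepsilon_m\to0$.

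Build an $n$-use protocol as follows. With public randomness, Alice and Bob flip a biased coin. With probability $p=N/M$ they run $\mathcal P_m$ on $m=n$ uses. Otherwise Alice sends vacuum in all slots and the output is declared a failure. The unconditional mean energy per slot is then at most $pM=N$, so \eqref{eq:energy} holds. The constraint is imposed on the averaged state, including failed branches. The rate is $R'>R$. The success fidelity is at least $p(1-\varepsilon_n)\to N/M>0$, which does not tend to zero.

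Therefore no rate below $-\log_2(1-\eta)$ (indeed none below $R'$) has error tending to one, so $D_2^\dagger\ge -\log_2(1-\eta)$. The same holds for the key task by measuring the ebits, which gives $K_2^\dagger\ge D_2^\dagger$ as above. For the secret-key error, the failed branch must output something: say a uniform key independent of Eve, whose trace-distance error is bounded by one. The successful branch has error $\varepsilon_n$, so the total error is at most $1-p(1-\varepsilon_n)$, again bounded away from one.

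\emph{Main obstacle.} The delicate step is the error bookkeeping that keeps the randomized protocol admissible under the paper's conventions. In particular, the public coin and the failed branch must be handled so that the error is measured against a fixed target of dimension $d_n$ and stays $\le 1-p+o(1)$. For entanglement fidelity, the failed branch gives fidelity $\ge0$, so the mixture's fidelity is $\ge p(1-\varepsilon_n)$ by linearity. For the private task with Eve holding the coin, I will use that the trace distance is convex over the publicly known coin value. The achievability step, a finite-energy sequence with rate $R'$, uses only the $\ge$ direction of Theorem~\ref{thm:capacity} at energy $M$.
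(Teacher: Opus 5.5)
Your proposal is correct and follows the paper's proof. The lower threshold comes from vacuum dilution of a vanishing-error code at finite energy $E>N$ with $p=N/E$, as in the paper's dilution lemma, using the same linear error bookkeeping over a public flag. The upper threshold comes from the imported unconstrained strong converse of Ref.~\cite{WTB}, with the bosonic simulation limit taken at fixed protocol.

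Two minor fixes are needed. First, the standard conversion $e\le 2\sqrt{1-f}$ does not deliver your parenthetical $Q_2^\dagger\le P_2^\dagger$, because that bound is vacuous for $f\le 3/4$. You can bypass this step as the paper does, by applying the imported private-state bound directly to maximally entangled targets with trivial shields. Second, the off branch cannot output a shared uniform key hidden from Eve, since that would require a preshared key. Let it output a fixed key such as $00$ instead; its error $1-1/d$ is all your bound needs.
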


The energy-constrained capacity therefore differs from its strong-converse
threshold under an unconditional mean constraint. Appendix~\ref{sec:strong}
proves the lower threshold by vacuum dilution and uses the established
unconstrained adaptive strong converse for the upper threshold~\cite{WTB,Simulation}.

The capacity identities in Theorem~\ref{thm:capacity} and the
strong-converse thresholds in Theorem~\ref{thm:mean-sc} remain
unchanged under the stronger per-use mean-energy constraint
$x_i\le N$ for every channel use, where $x_i$ is still averaged
over all classical outcomes and random choices. This condition
implies the block-average constraint \eqref{eq:energy}, so both
converse bounds continue to apply. The matching sector-hashing
and vacuum-dilution constructions also satisfy the per-use
constraint, as shown in Appendices~\ref{subapp:lowerbound}
and~\ref{sec:strong}, respectively. Thus both energy conventions
give the same capacities and strong-converse thresholds for
all four communication tasks.

An energy-dependent strong converse is recovered in a more restricted
setting. Suppose that all $n$ optical inputs are prepared before
transmission, with arbitrary two-way LOCC before and after the parallel
channel uses but no feedback into later inputs. Impose a hard cutoff:
the joint input is supported on states with at most $\lfloor nN\rfloor$
photons in total, allowing coherence between different photon-number
sectors. Alice and Bob still start without shared entanglement or
secret key. This support constraint excludes the vacuum-dilution
construction responsible for Theorem~\ref{thm:mean-sc}.

\begin{theorem}[Parallel photon-cutoff strong converse]
\label{thm:cutoff-sc}
For $0<N<\infty$, in the parallel cutoff model just described,
all four capacities and their strong-converse thresholds equal
$c_\eta(N)=g(N)-g((1-\eta)N)$. At any fixed rate above this value,
the quantum target fidelity vanishes exponentially, and the
operational key or private-message error approaches one exponentially.
\end{theorem}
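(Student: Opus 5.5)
Achievability is immediate. The sector-hashing construction behind the lower bound of Theorem~\ref{thm:capacity} can be run non-adaptively: Alice prepares $n$ two-mode squeezed vacua of mean photon number slightly below $N$, sends the signal modes in parallel, and the parties then perform one-way hashing on the reverse-coherent-information ensemble. I would project the joint input onto the typical total-photon-number subspace $\{\sum_i \hat n_{A_i}\le\lfloor nN\rfloor\}$, using a mean $N-\delta$. The projection succeeds with probability $1-e^{-\Omega(n)}$ by a Chernoff bound for the geometric photon statistics. Gentle measurement then shows that the post-projection state changes the final fidelity by only $o(1)$, while the supported input obeys the hard cutoff. Letting $\delta\to0$ and using continuity of $c_\eta$ gives rate $c_\eta(N)$ for all four tasks. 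It therefore remains to prove an exponential strong converse at $c_\eta(N)$. That bound implies both the weak converse and the identification of the thresholds.

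For the converse I would first reduce everything to secret-key generation. The operational equivalences \eqref{eq:operational-equivalences} and their strong-converse versions from Appendix~\ref{sec:strong} cover the other tasks: fidelity $F$ for entanglement yields a private state with key error bounded by $\sqrt{1-F}$. The parallel structure gives an LOCC$\circ\cL_\eta^{\otimes n}$ applied to an input $\rho_{A^n}$ supported on the cutoff subspace $\cH_{\le M}$ with $M=\lfloor nN\rfloor$. I would purify the input on Alice's side as $\ket{\psi}_{A'A^n}$. Then $\cL_\eta^{\otimes n}$ is covariant under the total photon-number $U(1)$, and because the input lives in $\cH_{\le M}$ only finitely many total-number sectors $m\le M$ matter. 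Sector teleportation simulation, restricted to these sectors, replaces the channel action by LOCC on the resource $\sigma_M=(\id\otimes\cL_\eta^{\otimes n})(\Phi_{\le M})$. Here $\Phi_{\le M}$ is the maximally correlated state over the number sectors, weighted by the sector multiplicities. Equivalently, since the input space is finite-dimensional, one can directly bound key rates by the max-relative entropy of entanglement of the output $\cL_\eta^{\otimes n}(\psi)$, using the symmetry to optimize over inputs diagonal in the total-number sectors.

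The exponential strong converse will come from the standard sandwiched Rényi route. For $\alpha>1$, a private state of dimension $2^{nR}$ with fidelity $F$ satisfies $\log F\le -\frac{\alpha-1}{\alpha}\bigl(nR-\tilde E_{R,\alpha}(\omega)\bigr)$, where $\omega$ is the final state and $\tilde E_{R,\alpha}$ the sandwiched Rényi relative entropy of entanglement. LOCC monotonicity plus the parallel structure reduce this to $\tilde E_{R,\alpha}$ of the single resource $\sup_{\psi\in\cH_{\le M}}(\id\otimes\cL_\eta^{\otimes n})(\psi)$. I would bound this by choosing the separable state explicitly: on each total-number sector the loss output of a photon-number-conditioned state is the binomial-loss resource from Appendices~\ref{sec:weak}--\ref{sec:key}. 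I would take the separable comparison state used there, block-diagonal in Alice's conditioning. This should give $\tilde E_{R,\alpha}\le \max_{m\le M}\bigl[\log_2\dim\text{-type entropy of } m \text{ photons in } n \text{ modes} - \text{lost part}\bigr]+O(\log n)$, which tends as $\alpha\to1$ to $n\,c_\eta(N)+o(n)$. Counting sectors adds only $\log(M+1)$. To keep the bound valid at $\alpha>1$, I would need the Rényi analogue of $g(N)-g(qN)$ to converge to $c_\eta(N)$ uniformly in $n$. For this I would use the fact that the sector states are permutation-symmetric, and apply a de Finetti or postselection bound, paying $\mathrm{poly}(n)$. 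That yields a per-copy Rényi quantity $c_{\eta,\alpha}(N)\to c_\eta(N)$.

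The main obstacle is this last step: establishing a single-letter Rényi upper bound $\tilde E_{R,\alpha}\le n\,c_{\eta,\alpha}(N)+O(\log n)$ for cutoff-supported inputs with coherence across sectors. Coherence is handled by the sector (twirling) argument, because twirling over the $U(1)$ phase is LOCC-implementable on Alice's side and costs at most $\log(M+1)$. The continuity $c_{\eta,\alpha}(N)\to c_\eta(N)$ as $\alpha\downarrow1$ then gives, for any $R>c_\eta(N)$, some $\alpha$ with $1-F\ge1-2^{-n\gamma}$ for some $\gamma>0$. That is the claimed exponential decay.
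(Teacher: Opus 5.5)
Your architecture is close to the paper's: parallel structure, the explicit separable comparison of Lemma~\ref{lem:resource}, a $\log(M+1)$ price for intersector coherence, and concentration of the loss count. But the step you defer as ``the main obstacle'' is where the proof actually lives, and the tool you propose for it does not work. Postselection and de Finetti bounds have prefactors polynomial in $n$ only at fixed local dimension. Here each of the $n$ modes may hold up to $\lfloor nN\rfloor$ photons (in first quantization, $m\approx nN$ photons live in $\mathbb C^n$), so the prefactor is superexponential. Postselecting i.i.d.\ two-mode squeezed vacua instead costs only $\mathrm{poly}(n)$, but it just moves the problem to a strong converse for LOCC processing of $J_{\eta,N}^{\otimes n}$, which is no easier.

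The missing observation is that no single-letterization is needed, because the conditional resources are flat. By \eqref{eq:spectrum-result} and \eqref{eq:eigen-proved}, $J_{n,m,k}=\Pi_{n,m,k}/d_{n,m-k}$ while $\sigma_{n,m,k}$ acts as $1/d_{n,m}$ on that support, so $J_{n,m,k}\le(d_{n,m}/d_{n,m-k})\,\sigma_{n,m,k}$. The paper pushes this domination through the exact sector simulator and the decoder, which preserve operator order and separability. It then uses $\Tr(\Pi_\gamma\sigma)\le1/d$ to bound the acceptance by $\EE_L\min\{1,d_{n,m}/(d\,d_{n,L})\}$ with $L\sim\Bin(m,q)$. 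Hoeffding's inequality for $L$, together with $a_n^{-1}2^{ng(m/n)}\le d_{n,m}\le2^{ng(m/n)}$, then gives the exponent.

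In your language, the same flatness gives, for the comparison $\sum_k p(k|m)\sigma_{n,m,k}$, exactly $\tilde D_\alpha=\frac{1}{\alpha-1}\log_2\EE_L(d_{n,m}/d_{n,L})^{\alpha-1}$. The same concentration controls this for small fixed $\alpha-1$, so your R\'enyi route can be completed, but only this way. Your ``equivalent'' max-relative-entropy route fails outright. Block diagonality in Bob's received number gives $E_{\max}(J_{n,m})=\log_2\EE_L(d_{n,m}/d_{n,L})$, which is dominated by atypically low-loss branches and has exponent strictly above $c_\eta(m/n)$. Discarding those branches, as the Hoeffding split does, is essential.

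The coherence step also needs a different justification. A single resource $(\id\otimes\cL_\eta^{\otimes n})(\Phi_{\le M})$ cannot simulate the channel on $\cH_{\le M}$. Passive unitaries act reducibly on $\bigoplus_{m\le M}\cH_{n,m}$, so the Haar average of the Bell projectors vanishes on $\cH_{n,m}\otimes\cH_{n,m'}$ for $m\ne m'$ and no Bell POVM arises. Moreover, loss is not covariant under unitaries that mix sectors.

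Phase symmetry does not reduce you to sector-diagonal inputs either. A public phase rotation undone by Bob only lets you assume a phase-invariant $A^n$ marginal; the purification keeps its coherence, and that coherence genuinely creates entanglement. For example, at $n=1$ every $J_{1,m}$ is a product state, yet the input $(\ket{00}+\ket{11})/\sqrt2$ gives an entangled output. That Alice can implement the twirl locally only shows that the pinched input is \emph{less} useful, which is the wrong direction.

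The $\log(M+1)$ price you quote is correct, but it comes from the operator pinching inequality $\rho\le(M_n+1)\cP_A(\rho)$ applied to the joint separable initial state. This inequality survives the channel and decoder and costs a factor $M_n+1$ in acceptance, or $\frac{\alpha}{\alpha-1}\log_2(M_n+1)$ in your R\'enyi bound. Your achievability argument is acceptable. The paper avoids the projection and infinite-dimensional hashing by transmitting parallel sector-hashing blocks from Lemma~\ref{lem:achieve}, whose exact photon numbers meet the cutoff automatically.
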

The same threshold holds when the probability of exceeding the cutoff
vanishes with $n$; exponential convergence is guaranteed when this
tail also decays exponentially. Appendix~\ref{sec:cutoff-strong}
gives the finite-error bounds and proof. This result concerns parallel
inputs; it does not assert an energy-dependent strong converse for
interleaved adaptive protocols.

As a consequence of the sector-resource calculation, we also determine the regularized relative
entropy of entanglement of a finite-energy channel output.
For a bipartite state $\rho$, the relative entropy of entanglement
(REE) is
$\ER(\rho)=\inf_{\sigma\in\SEP}D(\rho\Vert\sigma)$,
where $\SEP$ denotes the trace-norm closed set of separable
states and
$D(\rho\Vert\sigma)=\Tr[\rho(\log_2\rho-\log_2\sigma)]$
is the quantum relative entropy. The regularized REE is
$\ERi(\rho)=\lim_{r\to\infty}\ER(\rho^{\otimes r})/r$. We obtain the following result, proved in
Appendix~\ref{sec:direct}:

\begin{corollary}[Regularized entanglement of a finite-energy output]
\label{cor:tmsv}
Let $J_{\eta,N}$ result from sending one half of a two-mode squeezed
vacuum of local mean photon number $N$ through $\cL_\eta$. Then
\begin{equation}
 \ERi(J_{\eta,N})=c_\eta(N).
 \label{eq:tmsv-ree}
\end{equation}
\end{corollary}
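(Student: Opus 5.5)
We prove the two inequalities $\ERi(J_{\eta,N})\ge c_\eta(N)$ and $\ERi(J_{\eta,N})\le c_\eta(N)$ separately, using Theorem~\ref{thm:capacity} for the first and an explicit photon-number-sector resource calculation for the second.

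\emph{Lower bound.} We use the standard chain $\ERi(\rho)\ge E_D(\rho)\ge I_{R}(\rho)$, where $I_R(\rho)=S(\rho_A)-S(\rho_{AB})$ is the reverse coherent information. For the Choi-type state $J_{\eta,N}$, Alice's reference mode is thermal with mean photon number $N$, so $S(A)=g(N)$. The global output is pure on $A$, $B$ and the loss environment $E$. The environment mode is thermal with mean photon number $qN$, so $S(AB)=S(E)=g(qN)$. Hence $I_R=c_\eta(N)$. The hashing inequality (one-way LOCC, with Bob sending to Alice) gives $E_D\ge I_R$. The bound $E_D\le\ERi$ holds because $\ERi$ is a monotone that is asymptotically continuous on the relevant states. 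Continuity could fail in infinite dimensions, so there we would instead use the energy-bounded form of Fannes-type continuity, since all local energies are finite. Alternatively, we can bypass continuity entirely: $r$ copies of $J_{\eta,N}$ consume mean energy $N$ per use, so any distillation rate from copies is dominated by $D_2(\cL_\eta,N)$. Either route gives $\ERi\ge c_\eta(N)$.

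\emph{Upper bound.} This is where the sector-resource calculation enters. Write the TMSV as $\sum_k \sqrt{p_k}\ket{k}\ket{k}$ with geometric weights $p_k=N^k/(N+1)^{k+1}$. Alice's reference has a conserved total photon number, since the loss channel preserves $n_A-(n_B+n_E)$ only up to environment leakage. The key point is that a local photon-number measurement on Alice's reference commutes with the channel action. It decomposes $J_{\eta,N}$ into conditional states $J^{(k)}$, each of which is the output of sending one half of the maximally correlated sector state through $\cL_\eta$. We then construct an explicit separable $\sigma_r$ for $J_{\eta,N}^{\otimes r}$. It will be block-diagonal in Alice's total photon number $K$ (a flag). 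Inside each sector, $\sigma_r$ will be built from the irreducibly covariant (twirled) separable ansatz. This gives
\begin{equation}
D(J^{\otimes r}\Vert\sigma_r)\le H(\{P_K\})+\sum_K P_K\, D(J^{(K)}\Vert\sigma^{(K)}).
\end{equation}
Here $H(\{P_K\})=O(\log r)$ because $K$ is negative-binomially concentrated. Within each sector $K$, the state is supported on $K$ photons spread over $r$ modes on Alice's side, with Bob receiving a binomially thinned copy. We pick $\sigma^{(K)}$ so that the conditional relative entropy is at most $\log_2\dim\mathrm{Sym}_K(\mathbb C^r)-S(\text{environment}\mid K)$. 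Asymptotically this evaluates to $r\,g(K/r)-r\,g(qK/r)+o(r)$. Averaging over the concentrated $K\approx rN$ and dividing by $r$ gives $c_\eta(N)$.

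\emph{Main obstacle.} The hardest step is choosing $\sigma^{(K)}$ so that the per-sector bound is exactly the sector dimension entropy minus the environment entropy, rather than the cruder $\log$ dimension bound, which would only reproduce the PLOB/DSW-type bounds. The first thing to try is a flagged separable state: Bob's output given $K$ and the number $m$ of photons that reached him, with Alice's side dephased in a basis correlated with Bob's $m$-photon symmetric subspace. We would then check that its relative entropy equals $\log\binom{K+r-1}{K}-H(\Bin)$-type terms, and finally apply Stirling to recover $g(N)-g(qN)$.
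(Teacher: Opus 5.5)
Your upper bound has a genuine gap. The per-sector step carries all of the content, you leave it open, and the target you set for it cannot be met.

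The outer structure is fine. The reference-number measurement does not decompose $J_{\eta,N}^{\otimes r}$ itself, since that state has coherences between reference sectors matched by shifts in Bob's count. However, your $H(\{P_K\})$ term is exactly the pinching penalty of \Cref{lem:pinching}, so your displayed inequality holds.

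The problem is inside each sector. Conditioned on $K$, the state is the sector resource $J_{r,K}$, and its $RBE$ extension is pure. Hence your target $\log_2 d_{r,K}-S(E\mid K)=S(R)-S(RB)$ is the reverse coherent information of $J_{r,K}$, which is a \emph{lower} bound on $\ER(J_{r,K})$. In fact \Cref{lem:resource} gives $\ER(J_{r,K})=\log_2 d_{r,K}-\EE\log_2 d_{r,L}$, which is larger by $H(\Bin(K,q))>0$ for $K\ge1$. So no separable $\sigma^{(K)}$ attains your bound. The attainable target also conditions the environment entropy on Bob's received count.

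The missing idea is the comparison state that achieves this target. Your twirled ansatz is the right family only with a specific seed and flags. Flag Bob's total count $k$, and in each block take $\sigma_{r,K,k}=\int\proj{\bar u;K}_R\otimes\proj{u;k}_B\,\dd u$, built from the overcomplete family of condensates. The symmetric split $V_{K,k}\ket{u;K}=\ket{u;k}\ket{u;K-k}$ and the resolution $\int\proj{u;K}\,\dd u=\pi_{r,K}$ together show two things. First, $J_{r,K,k}=\Pi/d_{r,K-k}$ is flat. Second, $\sigma_{r,K,k}\Pi=\Pi/d_{r,K}$ on its support. This gives exactly $\log_2(d_{r,K}/d_{r,K-k})$ per block.

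Overcompleteness is essential, because conditioned loss maps each condensate to the pure state $\ket{u;k}$ at Bob. By contrast, dephasing Alice in the occupation basis leaves Bob with mixed binomial (or multivariate-hypergeometric) thinnings. Their average entropy is extensive, so the per-copy bound stays at $c_\eta(N)+\EE_{r_1}H(\Bin(r_1,\eta))>c_\eta(N)$, with $r_1$ geometric of mean $N$.

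With this choice your $\sigma_r$ is exactly the paper's $\Sigma_s$. The paper skips pinching altogether and evaluates the cross entropy directly, using that $\log\Sigma_s$ is block diagonal and that $S(J_{\eta,N}^{\otimes s})=s\,g(qN)$ follows from the thermal environment. This yields $s\,c_\eta(N)$ plus the conditional entropy of Bob's count given the reference count, with no Stirling asymptotics needed.

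Your lower bound via $\ERi\ge E_D\ge S(R)-S(RB)$ is a legitimate alternative route. It does, however, require reverse hashing for an infinite-dimensional state. The paper instead obtains $\ER(J_{\eta,N}^{\otimes s})\ge s[g(N)-g(qN)]$ for every $s$ in one step, from the reduction criterion $\sigma_{RB}\le\sigma_R\otimes I_B$ for separable $\sigma$ together with operator monotonicity of the logarithm.

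Two of your justifications should be dropped:
\begin{enumerate}
\item $E_D\le\ERi$ needs no asymptotic continuity. LOCC monotonicity of $\ER$, together with \Cref{lem:quantum-test} applied to the finite-dimensional distilled output, suffices.
\item The bypass through $D_2(\cL_\eta,N)$ runs the wrong way. It bounds $E_D$ from above and gives no lower bound on $\ERi$, so \Cref{thm:capacity} plays no role here.
\end{enumerate}
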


The crucial mechanism behind the weak converse results is \emph{sector
teleportation simulation}, which we develop in this paper.
A multimode state with a fixed total photon number occupies a
finite-dimensional subspace and can be teleported perfectly using
a maximally entangled pair whose two shares each contain that same
photon number. Combining this observation with established covariant
channel-simulation principles~\cite{PLOB,GBW} gives simulation resources
that explicitly retain the input photon budget. Their entanglement
can therefore be bounded in terms of the available input energy.

To express the construction, let $\cH_{s,m}$ be the subspace of $s$
modes containing exactly $m$ photons, and let $\Phi_{s,m}$ be a
maximally entangled state between two copies of this subspace.
For pure loss, the resource is
\begin{equation}
 J_{s,m}
 =(\id_R\otimes\cL_\eta^{\otimes s})(\Phi_{s,m}).
 \label{eq:overview-resource}
\end{equation}
It depends on the channel and the photon-number sector, but not on
the particular input state. Passive-mode covariance provides an LOCC
map $\mathcal T_{s,m}$ such that
\begin{equation}
 \begin{aligned}
 \omega_{XB^sY}
 &=(\id_{XY}\otimes\cL_\eta^{\otimes s})(\rho_{XA^sY})\\
 &=\mathcal T_{s,m}(\rho_{XA^sY}\otimes J_{s,m})
 \end{aligned}
 \label{eq:overview-simulation}
\end{equation}
whenever the transmitted block $A^s$ is supported on $\cH_{s,m}$.
The identity includes arbitrary correlations with Alice's and Bob's
retained memories $X$ and $Y$.

The proof extends this sectorwise construction to general finite-energy
inputs by estimating the resource entanglement and controlling
coherences between sectors. For adaptive protocols, these estimates
are applied to the inputs at the same round across independent
replicas of the entire protocol, preserving all feedback within
each run. For an $n$-use protocol with unconditional input photon numbers $x_i$
satisfying $\sum_i x_i\le nN$, the resulting bound is
\begin{equation}
 \ERi(\zeta_n)
 \le\sum_{i=1}^{n}c_\eta(x_i)
 \le n c_\eta(N),
 \label{eq:overview-adaptive}
\end{equation}
where $\zeta_n$ is the retained final quantum state.
Target-state and privacy tests then yield the quantum and secret-key
weak converses. The simulation and entanglement estimates are proven in
Appendix~\ref{sec:weak}, with the operational reductions in
Appendices~\ref{sec:quantum} and~\ref{sec:key}.

\subsection{Extension to quantum-limited amplification}
\label{sec:amplifier-main}
The pure-loss resource calculation also provides a route to an exact
finite-energy result for the quantum-limited amplifier. This extension
uses the same sector teleportation construction and operational
converse tools. Its additional ingredients are a correspondence
between the conditional resources of the two channels and a
degradability-based estimate of the amplifier resource cost.

Let $\cA_\kappa$ have gain $\kappa>1$ and vacuum environment. In
the Heisenberg picture its output annihilation operator is
\begin{equation}
 \hat b=\sqrt\kappa\,\hat a+\sqrt{\kappa-1}\,\hat e^\dagger.
 \label{eq:amp-channel}
\end{equation}
The energy constraint remains \eqref{eq:energy}: only photons in
the transmitted input are counted. An input of mean $x$ produces
an output of mean $\kappa x+\kappa-1$; the photons supplied by
the amplifier do not enter the input budget. Using the thermal
entropy $g$ defined in \eqref{eq:cdef}, set
\begin{equation}
 r_\kappa(x)=g(\kappa x+\kappa-1)
             -g((\kappa-1)(x+1)).
 \label{eq:amp-rate}
\end{equation}
The operational definitions and errors are exactly those used for
pure loss, with Eve receiving the amplifier environments and the
complete public transcript.

\begin{theorem}[Quantum-limited amplifier extension]
\label{thm:amplifier}
Under \eqref{eq:energy}, for every $0\le N<\infty$,
\begin{equation}
 \boxed{\begin{aligned}
 Q_2(\cA_\kappa,N)&=D_2(\cA_\kappa,N)\\
 &=P_2(\cA_\kappa,N)=K_2(\cA_\kappa,N)\\
 &=r_\kappa(N).
 \end{aligned}}
 \label{eq:amp-capacity}
\end{equation}
For every $N>0$, the strong-converse thresholds are
\begin{equation}
 \boxed{\begin{aligned}
 Q_2^\dagger(\cA_\kappa,N)&=D_2^\dagger(\cA_\kappa,N)\\
 &=P_2^\dagger(\cA_\kappa,N)=K_2^\dagger(\cA_\kappa,N)\\
 &=\log_2\frac{\kappa}{\kappa-1}.
 \end{aligned}}
 \label{eq:amp-threshold}
\end{equation}
Both identities remain valid under the per-use unconditional
constraint $x_i\le N$.
For $N>0$, in the parallel photon-cutoff model of
Theorem~\ref{thm:cutoff-sc}, all four capacities and their
strong-converse thresholds equal $r_\kappa(N)$, with exponential
convergence above this rate. The same threshold holds for a
vanishing photon-number tail, with exponential convergence when
the tail decays exponentially.
\end{theorem}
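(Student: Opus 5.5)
The plan has three parts, mirroring the pure-loss case.

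\textbf{Capacity identity \eqref{eq:amp-capacity}.} The argument splits into achievability and converse.

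\emph{Achievability.} I would send one half of a two-mode squeezed vacuum of mean photon number $N$ through $\cA_\kappa$. The output is Gaussian, and its coherent information (forward for the amplifier) equals $g(\kappa N+\kappa-1)-g((\kappa-1)(N+1))=r_\kappa(N)$. Hashing, or the sector-hashing construction of Appendix~\ref{subapp:lowerbound}, then achieves $r_\kappa(N)$ for $D_2$ and $K_2$. The per-use constraint holds automatically, since every input has mean $N$.

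\emph{Converse.} I would reuse sector teleportation simulation with the amplifier resource
\[
J^{\cA}_{s,m}=(\id_R\otimes\cA_\kappa^{\otimes s})(\Phi_{s,m}).
\]
Passive-mode covariance of $\cA_\kappa$ (it commutes with passive unitaries up to conjugation on the environment) gives the analogue of \eqref{eq:overview-simulation}. The replica argument behind \eqref{eq:overview-adaptive} then reduces the adaptive bound to a single-letter estimate of the form
\[
\ERi\bigl(\text{sector mixture of }J^{\cA}_{s,m}\bigr)\le \textstyle\sum_i r_\kappa(x_i).
\]
This step also needs concavity of $r_\kappa$, so that $\sum_i r_\kappa(x_i)\le n\,r_\kappa(N)$, together with the inter-sector coherence control already developed for pure loss.

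To estimate the resource, I would use the known identity between conditional resources. Photon-number conditioning of $\Phi_{s,m}$ followed by $\cA_\kappa$ is, after exchanging Alice and Bob, proportional to a conditional pure-loss resource with transmissivity $1/\kappa$. This is the complementary/time-reversal relation between a quantum-limited amplifier and pure loss. The REE bound is symmetric under exchanging the parties. The pure-loss sector estimates would therefore yield an upper bound in terms of the amplifier's reverse coherent information.

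The catch is that this photon budget sits on the wrong side: it constrains the output of the equivalent loss channel. To move the constraint back to the input, I would use degradability of $\cA_\kappa$. For a degradable channel, the relevant entanglement of a pure input is bounded by the coherent information $S(B)-S(E)$, and that is maximized at fixed input energy by the thermal/TMSV input, which gives exactly $r_\kappa(x)$. Concavity of coherent information for degradable channels then handles the sector mixtures.

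\emph{Main obstacle.} I expect this matching of the sector-level REE bound with the degradable coherent-information bound at fixed \emph{input} mean energy to be the crux. The plan is to bound each sector's REE by an entropy difference on $\cH_{s,m}$ plus a term that is $o(s)$. The reductions of Appendices~\ref{sec:quantum} and~\ref{sec:key} then convert this into the weak converses.

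\textbf{Strong-converse thresholds \eqref{eq:amp-threshold}.} The upper threshold comes from the unconstrained adaptive strong converse for $\cA_\kappa$ at $\log_2\frac{\kappa}{\kappa-1}$~\cite{WTB,Simulation}. For the lower threshold I would use vacuum dilution as in Theorem~\ref{thm:mean-sc}. A fraction $\epsilon$ of time-shared blocks run an unconstrained protocol with energy $N/\epsilon$ at rate near $\log_2\frac{\kappa}{\kappa-1}$, and the rest send vacuum. Success probability then stays bounded away from zero, even though the error does not vanish, so no rate below $\log_2\frac{\kappa}{\kappa-1}$ can be a strong-converse rate.

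\textbf{Parallel cutoff.} The hard-cutoff statement follows the proof of Theorem~\ref{thm:cutoff-sc}, with the amplifier sector resources substituted. A total of at most $\lfloor nN\rfloor$ photons confines the input to a direct sum of sectors whose number grows only polynomially. The sandwiched-R\'enyi version of the sector bound then gives exponential decay above $r_\kappa(N)$, with the tail term added when the cutoff holds only with high probability.
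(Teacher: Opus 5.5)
Your architecture matches the paper's: sector telesimulation with $J^\kappa_{s,m}$ from passive covariance, the party-exchanged identification $J^\kappa_{s,m,\ell}\cong J_{s,m+\ell,m}$, degradability to return the budget to the input, the replica bound, vacuum dilution with the imported unconstrained strong converse, and the cutoff argument with amplifier resources. The converse crux, however, rests on a false general claim.

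First, a labelling error. After the exchange, the pure-loss \emph{reverse} coherent information of $J_{s,m+\ell,m}$ becomes the amplifier's \emph{forward} coherent information. The amplifier's reverse one, $g(N)-g((\kappa-1)(N+1))$, lies below the achievable $r_\kappa(N)$, so it cannot be an upper bound.

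More importantly, degradability does not bound an REE by $S(B)-S(E)$. The erasure channel with erasure probability $1/2$ is degradable with zero coherent information, yet its Choi state has REE $\tfrac12\log_2 d$, and two-way assistance does raise its capacity.

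The bound you need is an exact identity supplied by the correspondence itself. One has $\ER(J^\kappa_{s,m,\ell})=\log_2(d_{s,m+\ell}/d_{s,\ell})$, and Bob's and the environment's conditional marginals are the flat states $\pi_{s,m+\ell}$ and $\pi_{s,\ell}$. After local-flag averaging, the common count entropy $H(L)$ cancels. This gives $F_{\kappa,s}(m)=S(\beta_m)-S(\gamma_m)$, with $\beta_m=\cA_\kappa^{\otimes s}(\pi_{s,m})$ and $\gamma_m=(\cA_\kappa^c)^{\otimes s}(\pi_{s,m})$, and there is no $o(s)$ remainder.

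Degradability enters only afterwards. Since $\mathcal D_\kappa(\tau_{\kappa x+\kappa-1})=\tau_{(\kappa-1)(x+1)}$ and thermal logarithms are affine in photon number, data processing of relative entropy to thermal products gives $F_{\kappa,s}(m)\le s\,r_\kappa(m/s)$. Jensen on the concave $r_\kappa$ then handles the sector mixture. Your route via concavity and thermal optimality of coherent information would also work, but that optimality must be proved. The block overhead is then only the output-pinching entropy $g(s(\kappa x+\kappa-1))$. This step needs an output truncation plus lower semicontinuity, because a finite input sector has unbounded output support.

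Two further steps need repair. If your dilution time-shares a fraction $\epsilon$ of the slots, it proves nothing: the rate becomes $\epsilon\,r_\kappa(N/\epsilon)\le r_\kappa(N)$, by concavity and $r_\kappa(0)=0$, and the success probability tends to one. What is needed, as in Lemma~\ref{lem:sc-dilution}, is a single public coin for the whole $n$-use protocol. With probability $p=N/E$ the parties run a code of rate $R<r_\kappa(E)$; otherwise they send vacuum in every slot and output a fixed state of the same target dimension. The rate is then $R$, while the fidelity tends to $p>0$.

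In the cutoff model, two-mode squeezed or Gaussian-code inputs violate the hard cutoff, so you still owe achievability there. The paper sends sector pairs $\Phi_{s,\lfloor sN\rfloor}$, conditions on the added count, and applies \emph{forward} hashing at rate $\log_2(d_{s,m+\ell}/d_{s,\ell})$. Reusing the pure-loss reverse hashing would give only the smaller reverse coherent information. The paper keeps finitely many classes and shows $F_{\kappa,s}(\lfloor sN\rfloor)/s\to r_\kappa(N)$.

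For the cutoff converse, the paper uses no sandwiched-R\'enyi quantity. It transfers the operator domination $J^\kappa_{n,m,\ell}\le(d_{n,m+\ell}/d_{n,\ell})\,\sigma^\kappa_{n,m,\ell}$ and bounds acceptance by $\EE_L\min\{1,d_{n,m+L}/(d\,d_{n,L})\}$. It controls $L$, a sum of $n+m$ geometric variables, by a Chernoff bound, and absorbs intersector coherence via $\rho\le(M_n+1)\cP_A(\rho)$. A R\'enyi route can work, but it requires uniform-in-$m$ control of the cumulant generating function of $\log_2(d_{n,m+L}/d_{n,L})$, which is the same concentration estimate.
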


The capacity in \eqref{eq:amp-capacity} is already achievable by
unassisted quantum and private codes~\cite{QiWilde,WildeQi}.
The extension establishes that adaptive two-way classical
assistance cannot increase it. Appendix~\ref{sec:amplifier}
proves the matching weak converse by reusing the pure-loss
analysis. The shared dilution argument in
Appendix~\ref{sec:strong} proves \eqref{eq:amp-threshold}, together
with the unconstrained converse~\cite{WTB,Simulation}.
Appendix~\ref{sec:amp-cutoff} extends the photon-cutoff proof
of Theorem~\ref{thm:cutoff-sc} to amplification, retaining the
same parallel-input restriction.

To describe the link with pure loss, use the same sector pair
$\Phi_{s,m}$ as in \eqref{eq:overview-resource} and define
\begin{equation}
 J^\kappa_{s,m}
   =(\id_R\otimes\cA_\kappa^{\otimes s})(\Phi_{s,m}).
 \label{eq:amp-resource}
\end{equation}
The finite Bell measurement still simulates the channel exactly
on the input sector, including correlations with memories.
Alice's resource share contains $m$ photons, whereas Bob's
has mean $\kappa m+s(\kappa-1)$ and unbounded photon-number
support. If $\ell$ photons are added, the conditional resource
$J^\kappa_{s,m,\ell}$ equals the pure-loss resource obtained from
sector $m+\ell$ and conditioned on receiving $m$ photons, with
the parties exchanged. Figure~\ref{fig:amp-duality} illustrates
this correspondence. It transfers the conditional resource REE
from pure loss to amplification; degradability then bounds its
average in terms of the input energy.

\begin{figure}[t]
\centering
\begin{tikzpicture}[
 x=1cm,y=1cm,
 arr/.style={-{Latex[length=1.5mm]},line width=.55pt},
 box/.style={draw,rounded corners=1pt,align=center,
             inner sep=4pt,font=\small},
 lab/.style={font=\scriptsize,align=center}]

 \node[box] (ain) at (0,2.2) {$\Phi_{s,m}$};
 \node[box] (amp) at (2.7,2.2) {$\cA_\kappa^{\otimes s}$};
 \node[box] (aout) at (5.8,2.2) {$J^\kappa_{s,m,\ell}$};

 \draw[arr] (ain)--(amp);
 \draw[arr] (amp)--(aout)
   node[midway,above=4pt,lab] {$m+\ell$ received};

 \node[box] (lin) at (0,0) {$\Phi_{s,m+\ell}$};
 \node[box] (loss) at (2.7,0) {$\cL_\eta^{\otimes s}$};
 \node[box] (lout) at (5.8,0) {$J_{s,m+\ell,m}$};

 \draw[arr] (lin)--(loss);
 \draw[arr] (loss)--(lout)
   node[midway,below=4pt,lab] {$m$ received};

 \draw[<->,line width=.55pt] (aout.south)--(lout.north)
   node[midway,left=5pt,lab] {exchange\\parties};

\end{tikzpicture}

\caption{Extension from loss to amplification. One share of each
sector pair is transmitted and the other is retained. Conditioning
on the indicated received photon number gives the resources at
right, which coincide after exchanging the parties. The normalized
loss resource is independent of $0<\eta<1$. This correspondence
concerns conditional states, rather than an equivalence of the
channels or their adaptive protocols.}
\label{fig:amp-duality}
\end{figure}


The extension also determines the regularized REE of the
finite-energy Gaussian output. Let $J^{\rm amp}_{\kappa,N}$
result from sending half of a two-mode squeezed vacuum of local
mean $N$ through $\cA_\kappa$. Then
\begin{equation}
 \ERi(J^{\rm amp}_{\kappa,N})=r_\kappa(N).
 \label{eq:amp-output-ree}
\end{equation}
Appendix~\ref{sec:amp-output} obtains this identity directly from
Corollary~\ref{cor:tmsv} by exchanging the parties and matching
the Gaussian covariance matrices.

\section{Conclusions}\label{sec:conclusion}
In this work, we have established the energy-constrained two-way capacities
of the pure-loss bosonic channel for quantum transmission, entanglement
distribution, private communication, and secret-key agreement under an
unconditional mean transmitted-photon-number constraint. We obtain
these capacities by proving that the reverse-coherent-information rate
introduced in 2009~\cite{PGPBL} is optimal even for general adaptive
protocols with quantum memories. The central tool is sector
teleportation simulation, developed to encode the input photon budget
in finite-energy shared resources and retain the energy constraint
throughout the weak-converse proof.

The finite-energy picture exhibits a sharp separation between
weak- and strong-converse behavior. Under the unconditional mean
constraint, the capacities depend on the available energy, whereas
the strong-converse thresholds remain equal to the unconstrained
PLOB value for every nonzero energy budget. For parallel protocols
with a hard total-photon-number cutoff, the energy-constrained capacities
instead satisfy an exponential strong converse. A vanishing
photon-number tail also suffices for a strong converse, with the
convergence rate controlled by that tail.

The amplifier extension shows how this pure-loss analysis can be reused. Its conditional resources are party-exchanged loss resources, so their entanglement cost follows from the same sector calculation. A separate degradability estimate then gives a matching adaptive converse for quantum-limited amplification. Thus the established unassisted quantum and private rates remain optimal with two-way assistance, while the mean-energy strong-converse thresholds again equal the corresponding unconstrained value. Under the parallel photon-number cutoff, its energy-constrained capacity also satisfies an exponential strong converse.

More broadly, transferring operational input constraints into
simulation resources provides a route to converse bounds that retain
physical resource limitations. A natural next step is to apply sector
teleportation simulation to the thermal-loss channel, with the aim of
tightening finite-energy upper bounds~\cite{DSW,Goodenough,PLOB} and
clarifying the remaining gap to achievable rates. Recent progress on
secret-key~\cite{OLCSPB,OPB} and entanglement
distillation~\cite{MLG,BPinterpolation,BPdistillation,BPthermal}
provides useful benchmarks for that investigation.

Further directions include finite-blocklength and second-order
refinements~\cite{Mele2025}, experimentally accessible implementations
of sector-resolved measurements, and extensions to other Gaussian
and non-Gaussian bosonic channels. Establishing an energy-dependent
strong converse for interleaved adaptive protocols under a suitable
photon-number constraint remains an open problem.

\section*{AI Statement}
OpenAI's ChatGPT (GPT-6 Astra model) was used during the
preparation of this manuscript to generate candidate mathematical
derivations and proofs, assist with their development, and improve the presentation of the manuscript. The tool was directed through
author-specified prompts concerning the mathematical assumptions,
derivations, and statements under consideration. All AI-assisted
mathematical results, derivations, and proofs were independently
examined, verified, and revised as necessary by the author, who assumes full responsibility for the accuracy and scientific content of the work.
In full detail, the work involved 43 prompts distributed across the
following stages. The first 10 were devoted to the co-development of
sector teleportation simulation, which grew directly from the author's
ongoing efforts to extend channel-simulation techniques developed in his
earlier work over several years.

\begin{itemize}

    \item Sector teleportation simulation (10 prompts): Co-development of
    sector teleportation simulation, including fixed-photon-number sectors,
    finite-energy resource states, the corresponding teleportation LOCC,
    and the reduction of adaptive protocols to a tractable converse.

    \item Mean-energy constraint and adaptive protocols (6 prompts):
    Development and refinement of the converse under an operational
    mean-energy constraint, including adaptive inputs, branch-dependent
    energies, and transmitted-energy accounting.

    \item LOCC mathematical framework (7 prompts): Analysis of
    finite/countable outcomes, continuous-outcome instruments,
    unbounded-round LOCC, closures, and the assumptions required for the
    operational capacity definitions. These were primarily
    verification-oriented investigations, and their details were
    not incorporated into the paper.

    \item Strong-converse threshold (3 prompts): Analysis of the
    strong-converse behaviour under a finite mean-energy constraint and
    its relation to the unconstrained capacity.

    \item Exact finite-energy capacities and quantum-limited amplifier
    extension (9 prompts): Development of the exact energy-constrained
    two-way quantum, entanglement-distribution, private, and secret-key
    capacities of the pure-loss channel, including the optimality of
    reverse coherent information, together with the extension of
    methods and results to the quantum-limited amplifier channel.

    \item Manuscript development and presentation (8 prompts): Development and refinement
    of the abstract, theorem and lemma structure, appendices, proof-flow
    presentation, citations, and conclusions.

\end{itemize}

\smallskip
\textbf{Acknowledgements.~} The author would like to thank past discussions with L. Banchi, J. Pereira, and G. Spedalieri on topics related to teleportation simulation.

\onecolumngrid
\clearpage
\appendix
\section*{Overview of the appendices}
Appendix~\ref{sec:preliminaries} fixes the common operational setting,
error criteria, photon-sector notation, and entanglement tools.
Appendix~\ref{sec:weak} develops sector teleportation simulation and
the pure-loss resource calculation, proves the squeezed-state output
identity, and derives the adaptive REE bound.
Appendices~\ref{sec:quantum} and~\ref{sec:key} establish the pure-loss
quantum and secret-key capacities, respectively; their target tests
and energy-preserving key reduction are channel independent.
Appendix~\ref{sec:amplifier} applies these tools to the quantum-limited
amplifier, deriving the conditional resource correspondence and the
additional energy estimate without repeating the preliminary or
operational arguments. Appendix~\ref{sec:strong} gives a shared proof
of the mean-energy strong-converse thresholds for both channels.
Appendix~\ref{sec:cutoff-strong} proves the parallel photon-cutoff
strong converse for pure loss and extends it to the amplifier.

Figure~\ref{fig:proof-overview} summarizes the dependencies for the proof of the main Theorem~\ref{thm:capacity}.

\begin{figure}[!htbp]
\centering
\begingroup
\hypersetup{hidelinks}
\resizebox{0.90\linewidth}{!}{%
\begin{tikzpicture}[x=1cm,y=1cm,
  font=\fontsize{9}{11}\selectfont,
  every node/.style={
    text=black,align=center,inner sep=5pt
  },
  flowbox/.style={
    draw=black,line width=.65pt,
    rounded corners=2pt,fill=white
  },
  panel/.style={
    draw=black,line width=.4pt,rounded corners=3pt
  },
  arrow/.style={
    -{Latex[length=2mm,width=1.35mm]},
    line width=.75pt,draw=black,rounded corners=3pt
  },
  paneltitle/.style={
    anchor=west,
    font=\fontsize{9}{11}\selectfont\bfseries,
    inner sep=0pt
  }
]
\path[use as bounding box]
  (0,1.20) rectangle (17.6,18.60);

\draw[panel]
  (.05,17.28) rectangle (17.55,18.55);

\node[paneltitle] at (.30,18.28)
  {APPENDIX~\ref{sec:preliminaries}};

\node[text width=5.25cm] at (2.95,17.75)
  {\textbf{Protocol and REE tools}\\
   Energy budget; LOCC monotonicity};

\node[text width=5.25cm] at (8.8,17.75)
  {\textbf{Scalar estimates: Lemma~\ref{lem:scalar}}\\
   Entropy control and concavity};

\node[text width=5.25cm] at (14.65,17.75)
  {\textbf{Pinching: Lemma~\ref{lem:pinching}}\\
   Cost of intersector coherence};

\draw[panel]
  (.05,9.36) rectangle (17.55,17.04);

\node[paneltitle] at (.30,16.74)
  {APPENDIX~\ref{sec:weak}\quad
   SECTOR TELEPORTATION AND ADAPTIVE REE BOUND};

\node[flowbox,text width=7.25cm,minimum height=1.28cm]
  (simulation) at (4.4,15.64)
  {\textbf{Sector teleportation simulation}\\
   \textbf{Lemma~\ref{prop:sector}}\\
   Exact LOCC simulation using $J_{s,m}$};

\node[flowbox,text width=7.25cm,minimum height=1.28cm]
  (resource) at (13.2,15.64)
  {\textbf{Sector resource REE: Lemma~\ref{lem:resource}}\\[2pt]
   $E_R(J_{s,m})=F_{\eta,s}(m)$};

\node[flowbox,text width=7.25cm,minimum height=1.08cm]
  (average) at (13.2,13.96)
  {\textbf{Average resource cost}\\
   \textbf{Lemma~\ref{lem:energy}}\\
   Input energy controls the cost};

\draw[arrow] (resource) -- (average);

\node[flowbox,text width=15.55cm,minimum height=1.10cm]
  (block) at (8.8,12.35)
  {\textbf{Block entanglement gain: Lemma~\ref{prop:block}}\\
   Simulation + resource cost + pinching penalty
   (Lemma~\ref{lem:pinching})};

\draw[arrow]
  (simulation.south) -- (simulation.south |- block.north);
\draw[arrow]
  (average.south) -- (average.south |- block.north);

\node[flowbox,text width=15.55cm,minimum height=1.45cm]
  (adaptive) at (8.8,10.38)
  {\textbf{Adaptive entanglement bound:
           Lemma~\ref{lem:adaptive}}\\[3pt]
   $E_R^\infty(\zeta_n)
     \le \sum_{i=1}^{n}c_\eta(x_i)
     \le n\,c_\eta(N)$\\[3pt]
   Protocol replicas + LOCC monotonicity + regularization};

\draw[arrow] (block) -- (adaptive);

\draw[panel]
  (.05,5.17) rectangle (8.59,9.06);
\draw[panel]
  (9.01,5.17) rectangle (17.55,9.06);

\node[paneltitle] at (.82,8.73)
  {APPENDIX~\ref{sec:quantum}\quad QUANTUM CAPACITY};
\node[paneltitle] at (9.30,8.73)
  {APPENDIX~\ref{sec:key}\quad SECRET-KEY CAPACITY};

\node[flowbox,text width=6.35cm,minimum height=1.24cm]
  (qtest) at (4.4,7.74)
  {\textbf{Fidelity test: Lemma~\ref{lem:quantum-test}}\\
   A nearly maximally entangled target\\
   requires a large regularized REE};

\node[flowbox,text width=6.35cm,minimum height=1.24cm]
  (ktest) at (13.2,7.74)
  {\textbf{Key reduction: Lemma~\ref{lem:key-conversion}}\\
   Retain shields; preserve the energy budget\\
   \textbf{Privacy test: Lemma~\ref{lem:privacy}}};

\node[flowbox,text width=6.35cm,minimum height=1.03cm]
  (qupper) at (4.4,5.96)
  {\textbf{Finite-error converse:
           Lemma~\ref{lem:quantum-finite-error}}\\[2pt]
   $\varepsilon_n\to0
     \quad\Longrightarrow\quad Q_2\le c_\eta(N)$};

\node[flowbox,text width=6.35cm,minimum height=1.03cm]
  (kupper) at (13.2,5.96)
  {\textbf{Finite-error converse:
           Lemma~\ref{lem:finite-error}}\\[2pt]
   $e_n\to0
     \quad\Longrightarrow\quad K_2\le c_\eta(N)$};

\draw[arrow] (qtest) -- (qupper);
\draw[arrow] (ktest) -- (kupper);

\draw[arrow]
  (adaptive.west) -- (.40,10.38) |- (qupper.west);
\draw[arrow]
  (adaptive.east) -- (17.20,10.38) |- (kupper.east);

\node[flowbox,text width=10.25cm,minimum height=1.28cm]
  (hashing) at (8.8,4.04)
  {\textbf{Matching achievability:
           Appendix~\ref{sec:quantum}, Lemma~\ref{lem:achieve}}\\
   Sector-resource entropies (Lemma~\ref{lem:resource})
   + reverse hashing\\
   achieve $c_\eta(N)$; distilled entanglement also yields secret key};

\node[flowbox,line width=1.0pt,
      text width=15.55cm,minimum height=1.18cm]
  (capacity) at (8.8,1.99)
  {\textbf{Theorem~\ref{thm:capacity}: capacity identities}\\[4pt]
   $\displaystyle Q_2=D_2=P_2=K_2=c_\eta(N)$};

\draw[arrow]
  (qupper.south) -- (4.4,5.02) -- (2.0,5.02)
  -- ([xshift=-6.8cm]capacity.north);
\draw[arrow]
  (kupper.south) -- (13.2,5.02) -- (15.6,5.02)
  -- ([xshift=6.8cm]capacity.north);
\draw[arrow] (hashing) -- (capacity);

\end{tikzpicture}%
}
\endgroup

\caption{Proof overview for Theorem~\ref{thm:capacity}.
The adaptive REE bound combines with the fidelity and privacy tests
to give the quantum and secret-key weak converses. Sector hashing
supplies matching achievability, and the operational equivalences
$Q_2=D_2$ and $P_2=K_2$ complete the capacity identities.
Here $c_\eta(N)=g(N)-g((1-\eta)N)$, $x_i$ is the unconditional
mean input photon number at use $i$, and $\zeta_n$ is the retained
final state after $n$ channel uses.}
\label{fig:proof-overview}
\end{figure}

\section{Preliminaries}\label{sec:preliminaries}
\subsection{Notation, entropies, and the loss channel}
In this section, we fix the state, entropy, and optical notation used to describe the
simulation and its energy-dependent entanglement cost. All logarithms are base two unless $\ln$ is written. A state is a
positive trace-one operator. We write $\Tr$ for the trace,
$\rho_A=\Tr_B\rho_{AB}$ for a reduced state, $\supp\rho$ for its support,
$I_A$ for the identity operator, and $\id_A$ for the identity channel.
The trace norm is $\|T\|_1=\Tr\sqrt{T^\dagger T}$; an unsubscripted
vector norm is its Hilbert-space norm. A ket denotes a vector, and the
same symbol without a ket denotes its rank-one projector when appropriate. We call $|u;r\rangle$ an $r$-photon condensate:
all $r$ photons occupy the same normalized collective mode $u$. For example, for two modes and $u=(1,1)/\sqrt2$,
\begin{equation}
 \ket{u;2}
 =\frac12\ket{2,0}
  +\frac1{\sqrt2}\ket{1,1}
  +\frac12\ket{0,2}.
\end{equation}
Both photons occupy the same collective mode, while their
occupations in the two physical modes are in superposition. For a state $\rho$ and a probability distribution $p=(p_z)$, define
the von Neumann, Shannon, and binary entropies by
\begin{equation}
 S(\rho)=-\Tr(\rho\log_2\rho),\qquad
 H(p)=-\sum_zp_z\log_2p_z,\qquad
 h_2(u)=-u\log_2u-(1-u)\log_2(1-u).
 \label{eq:entropy-definitions}
\end{equation}
We use $0\log_2 0=0$ and $0\le u\le1$. For a real number $z$,
write $z_+:=\max\{z,0\}$ for its positive part.
For classical random variables, $H(Z)$ is the entropy of their
distribution, $H(Z,W)$ is the joint entropy, and
$H(Z|W)=\sum_w\Pr(W=w)H(Z|W=w)$ is the average conditional entropy.
The symbols $\Pr$ and $\EE$ denote probability and expectation.
$\Bin(m,p)$ is the binomial distribution with $m$ trials and success
probability $p$: its mass at $\ell$ is
$\binom m\ell p^\ell(1-p)^{m-\ell}$ for $0\le\ell\le m$. Recall that $0<\eta<1$ is the transmissivity and $q=1-\eta$.
Pure loss acts on the input, output, and environment annihilation
operators as $\hat b=\sqrt\eta\,\hat a+\sqrt q\,\hat e$, with the
environment initially vacuum. This is the quantum-limited attenuator
in the Gaussian-channel formalism~\cite{BraunsteinLoock,HolevoBook,Weedbrook}.
The input photon-number operator is $\hat n_A=\hat a^\dagger\hat a$.
An input of mean photon number $x$ has mean output photon number $\eta x$.
For the amplifier extension, the channel law is \eqref{eq:amp-channel};
its output mean is $\kappa x+\kappa-1$. Eve receives the channel
environments and the complete public transcript.

\subsection{Operational tasks, adaptive protocols, and error criteria}
\label{sec:task-errors}
\label{sec:protocol-model}
We specify the common protocol model for pure loss and its amplifier
extension. Write $\mathcal N$ for either $\cL_\eta$ or
$\cA_\kappa$. An $n$-use protocol starts from local preparations
and public randomness, without preshared entanglement or secret key.
Between channel calls, Alice and Bob apply LOCC to their memories
and classical records, conditioned on the preceding outcomes.
For each finite protocol, the number of local rounds is finite and
the outcome alphabets are finite or countable; the round bound may
grow with $n$~\cite{ChitambarLOCC}. States and local instruments are
normal on separable Hilbert spaces. Memories and local apparatus
have no dimension, energy, or entropy bound. Eve receives the channel
environments and every public message.

Just before use $i$, the joint state is
$\rho^{(i)}_{X_iA_iY_i}$: Alice holds her memory $X_i$ and the
transmitted mode $A_i$, while Bob holds $Y_i$. The unconditional
mean $x_i=\Tr(\hat n_{A_i}\rho^{(i)}_{A_i})$ averages over every
outcome history and random choice, including unsuccessful branches.
Only these transmitted inputs enter the budget
$\sum_i x_i\le nN$ in \eqref{eq:energy}. A branch means a particular
history of local outcomes and random choices; no separate energy
bound is imposed on each such history.

For a positive integer $d$, define the maximally entangled target
and squared fidelity by
\begin{equation}
 |\Phi_d\rangle=\frac1{\sqrt d}\sum_{j=0}^{d-1}|j\rangle|j\rangle,
 \qquad \Phi_d=|\Phi_d\rangle\langle\Phi_d|,\qquad
 F(\rho,\sigma)=\|\sqrt\rho\sqrt\sigma\|_1^2.
 \label{eq:target-fidelity}
\end{equation}
The quantum capacity $Q_2(\mathcal N,N)$ concerns transmission of a
$d$-dimensional system, tested on half of $\Phi_d$ while its reference
remains at Alice. The entanglement-distribution capacity
$D_2(\mathcal N,N)$ concerns generating $\Phi_d$ from an initially
unentangled preparation. After discarding unused memories, let
$\rho_n$ denote the final state of the two dimension-$d_n$ target
registers. The rate, fidelity, and error are
\begin{equation}
 R_n=\frac{\log_2d_n}{n},\qquad
 f_n=F(\rho_n,\Phi_{d_n})=\Tr(\Phi_{d_n}\rho_n),\qquad
 \varepsilon_n=1-f_n.
 \label{eq:quantum-error}
\end{equation}
The same convention applies to the reference--output state in
quantum transmission. Teleportation and entanglement distribution
give $Q_2=D_2$ without further optical transmission, as explained
in Section~\ref{sec:results}.

The secret-key capacity $K_2(\mathcal N,N)$ concerns agreement on a
uniform classical key independent of Eve. The private capacity
$P_2(\mathcal N,N)$ concerns transmission of Alice's classical
message, with a uniform message used to define the rate and error.
For either task, let $C_A,C_B$ be the classical output registers of
size $d$, and let $E_{\rm all}$ contain Eve's systems and the complete
public transcript. For their actual joint state $\nu$, define
\begin{equation}
 \tau_d^{C_AC_B}=\frac1d\sum_{a=0}^{d-1}|aa\rangle\langle aa|,
 \qquad
 e_{\rm cl}(\nu)=\tfrac12\bigl\|
 \nu_{C_AC_BE_{\rm all}}-\tau_d^{C_AC_B}\otimes\nu_{E_{\rm all}}
 \bigr\|_1.
 \label{eq:classical-task-error}
\end{equation}
For key agreement, $C_A,C_B$ are the generated keys; for private
transmission, they are Alice's message and Bob's decoded message.
The ideal state retains the actual Eve marginal, including any
public ciphertext~\cite{RennerKonig,PortmannRenner}.

Each capacity is the supremum of rates $R$ attained by admissible
sequences with $\liminf_n R_n\ge R$ and the corresponding error
tending to zero. Rates are in qubits, ebits, or secret bits per use.
The systems in the pure-loss quantum and key protocols are described
in Appendices~\ref{sec:quantum-model} and~\ref{sec:key-model}; the
amplifier uses the same operational structure.

\subsection{Relative entropy of entanglement}
\label{sec:ree-facts}
For density operators $\rho,\sigma$, the quantum relative entropy is
$D(\rho\Vert\sigma)=\Tr[\rho(\log_2\rho-\log_2\sigma)]$ when
$\supp\rho\subseteq\supp\sigma$, with its standard extended-value
interpretation; it is $+\infty$ otherwise~\cite{Umegaki,VedralReview}.
For classical distributions, the same notation means
$D(p\Vert q)=\sum_a p_a\log_2(p_a/q_a)$, with the corresponding support
convention. Let $\SEP(A:B)$ be the trace-norm closed convex hull of
product states across $A:B$, including in infinite
dimensions~\cite{HolevoShirokovWerner}.
The relative entropy of entanglement
(REE)~\cite{Vedral1997,VedralPlenio1998} and its regularization are
\begin{equation}
 \ER(A:B)_\rho=\inf_{\sigma\in\SEP(A:B)}D(\rho\Vert\sigma),\qquad
 \ERi(\rho)=\lim_{t\to\infty}\frac1t\ER(\rho^{\otimes t}).
 \label{eq:ree}
\end{equation}
The bipartition is omitted when it is clear; $t$ is a positive integer
counting copies.
When $\ER(\rho)<\infty$, the limit exists by subadditivity.
Regularization is standard in asymptotic entanglement-distillation
bounds~\cite{LimitsEntanglement,HorodeckiReview}.
The following consequences of data processing and joint convexity
will be used repeatedly~\cite{Lindblad,UhlmannEntropy,Watrous}.
We first work in finite-dimensional spaces. The infinite-dimensional
extension of the block inequality is given separately in
the proof of \Cref{prop:block}, so the following entropy manipulations never
subtract two infinite quantities.

For a separability-preserving channel $\Lambda$, data processing gives
$\ER(\Lambda(\rho))\le\ER(\rho)$: apply data processing to each separable
comparison $\sigma$, then take the infimum. Joint convexity of relative
entropy similarly gives convexity of $\ER$. Product separable comparisons
and additivity of relative entropy on tensor products give
\begin{equation}
 \ER(\rho\otimes\xi)\le\ER(\rho)+\ER(\xi).
 \label{eq:subadd}
\end{equation}
These statements also hold for the trace-norm closed separable set in
infinite dimensions, using approximate minimizers when necessary.

For a local instrument with outcomes $a$, apply data processing while
retaining the outcome as a local classical flag. If $p_a,\rho_a$ and
$q_a,\sigma_a$ are the respective outcome ensembles for $\rho$ and a
separable comparison $\sigma$, then
\begin{equation}
 D(\rho\Vert\sigma)
 \ge D(p\Vert q)+\sum_a p_aD(\rho_a\Vert\sigma_a)
 \ge\sum_a p_a\ER(\rho_a).
 \label{eq:strong}
\end{equation}
Thus $\sum_a p_a\ER(\rho_a)\le\ER(\rho)$.
In particular, if the branches have orthogonal supports on one party,
measuring those supports and using convexity in the reverse direction gives
\begin{equation}
 \ER\!\left(\sum_a p_a\rho_a\right)=\sum_a p_a\ER(\rho_a).
 \label{eq:flags}
\end{equation}
The flags may be physical orthogonal subspaces rather than separate
registers. These are the usual local-flags and strong-monotonicity
properties of REE~\cite{VedralPlenio1998,PlenioVirmani}.
The same formulas hold for countably many outcomes. The relative
entropy of classical direct sums is the classical relative entropy
plus the weighted conditional relative entropies; countable
convexity follows by finite grouping and lower semicontinuity.
These facts allow photon-number flags with unbounded support.

For finite-dimensional states, we also use the conditional-entropy
lower bound of Plenio, Virmani, and Papadopoulos
\cite[Lemma~5]{PlenioVirmaniPapadopoulos}:
\begin{equation}
 \ER(R:B)_\rho\ge S(\rho_R)-S(\rho_{RB}).
 \label{eq:rci-lower}
\end{equation}
This supplies the matching lower bound in the sector REE calculation.
The same quantity is achievable by reverse hashing
\cite{DevetakWinter,RCI}, also interpreted through quantum state
merging~\cite{StateMerging}; the REE inequality itself does not
require a distillation argument. Note that, whenever external memories are omitted
from a channel expression, identity channels on them are implicitly understood.

\subsection{Entropy bounds for photon counts}
We bound the uncertainty of a total photon count using only its mean.
These scalar bounds control the overhead from photon-number conditioning
and make that overhead vanish per mode for large blocks. For $y>0$, the one-mode thermal state of mean photon number $y$ is
\begin{equation}
 \tau_y=\sum_{r=0}^{\infty}\frac{y^r}{(1+y)^{r+1}}|r\rangle\langle r|,
 \qquad S(\tau_y)=g(y),
 \qquad \tau_0=|0\rangle\langle0|.
 \label{eq:thermal-state}
\end{equation}
Here $|r\rangle$ is the $r$-photon number state. Recall from
\eqref{eq:cdef} that $g(y)=(y+1)\log_2(y+1)-y\log_2y$ and
$c_\eta(y)=g(y)-g(qy)$, with $q=1-\eta$.

\begin{lemma}[Scalar entropy estimates]\label{lem:scalar}
If a nonnegative integer-valued random variable $Z$ has mean $z<\infty$,
then $H(Z)\le g(z)$. The function $c_\eta$ is increasing and concave, and
$g(ra)/r\to0$ for every fixed finite $a\ge0$.
\end{lemma}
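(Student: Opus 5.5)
The lemma makes three separate claims, and I will prove them one at a time with elementary arguments.

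\textbf{Entropy bound.} I will use the maximum-entropy property of the geometric distribution. Let $p_r=\Pr(Z=r)$ and let $t_r=z^r/(1+z)^{r+1}$ be the thermal (geometric) distribution of mean $z$ from \eqref{eq:thermal-state}. For $z>0$, nonnegativity of the classical relative entropy gives
\begin{equation}
 0\le D(p\Vert t)=-H(Z)-\sum_r p_r\log_2 t_r .
\end{equation}
Since $-\log_2 t_r=(r+1)\log_2(1+z)-r\log_2 z$ is affine in $r$, averaging over $p$ gives $(z+1)\log_2(1+z)-z\log_2 z=g(z)$. Hence $H(Z)\le g(z)$. The sum converges absolutely because $z<\infty$, so nothing is ill-defined even though $Z$ may have unbounded support. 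The case $z=0$ is trivial: then $Z=0$ almost surely and $H(Z)=0=g(0)$.

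\textbf{Monotonicity and concavity of $c_\eta$.} This is a direct calculus check on $(0,\infty)$, followed by continuity at $0$. We have $g'(x)=\log_2(1+1/x)$, so
\begin{equation}
 c_\eta'(x)=\log_2\!\left(1+\tfrac1x\right)-q\log_2\!\left(1+\tfrac1{qx}\right).
\end{equation}
\emph{Positivity.} Set $\phi(u)=u\log_2(1+1/u)$. Then $c_\eta'(x)>0$ is equivalent to $\phi(qx)<q\,\phi(x)\cdot$(rescaled), which is cleaner in the form $\psi(s)=s\log_2(1+a/s)$ with $a=1/x$: it is the statement $\psi(q)<\psi(1)$ for $q<1$. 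This holds because $\psi$ is increasing in $s$: $s\ln(1+a/s)$ has derivative $\ln(1+a/s)-\frac{a}{s+a}>0$, since $\ln(1+y)>y/(1+y)$.

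\emph{Concavity.} Differentiating again,
\begin{equation}
 c_\eta''(x)=\frac{1}{\ln2}\left[-\frac{1}{x(x+1)}+\frac{q}{x(qx+1)}\right].
\end{equation}
Its sign is that of $-(qx+1)+q(x+1)=q-1<0$, so $c_\eta''<0$. Continuity of $g$ at $0$ then extends both properties to $[0,\infty)$.

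\textbf{Sublinearity.} I read $g(ra)/r\to0$ as $r\to\infty$. For $a=0$ there is nothing to prove. For $a>0$, rewrite
\begin{equation}
 g(y)=\log_2(y+1)+y\log_2\!\left(1+\tfrac1y\right)\le\log_2(y+1)+\tfrac{1}{\ln2}.
\end{equation}
Therefore $g(ra)/r\le[\log_2(ra+1)+1/\ln2]/r\to0$.

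The only step needing care is the sign manipulation for $c_\eta'>0$, which the monotone function $\psi$ handles. Everything else is routine.
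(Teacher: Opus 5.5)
Your proof is correct and follows essentially the paper's route: relative entropy to the geometric (thermal) distribution for $H(Z)\le g(z)$, the second derivative $c_\eta''(x)=-\eta/[\ln2\,x(1+x)(1+qx)]$ for concavity, and logarithmic growth of $g$ for sublinearity. The differences are minor. The paper truncates to $Z_J=\min\{Z,J\}$ so as not to presuppose $H(Z)<\infty$; your finite cross-entropy $g(z)$ does the same job, because it makes $D(p\Vert t)=g(z)-H(Z)$ valid in extended arithmetic, so Gibbs' inequality itself forces finiteness. The paper also gets monotonicity from concavity plus $c_\eta'(x)\to0$ rather than from your $\psi$ argument. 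You should delete the garbled $\phi$ sentence, noting instead that $x\,c_\eta'(x)=\phi(x)-\phi(qx)$, so monotonicity of $\phi$ (your $\psi$ with $a=1$) is exactly what is needed.
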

\begin{proof}
The geometric distribution maximizes entropy at fixed nonnegative
integer mean~\cite{CoverThomas}: relative entropy to that distribution
is $g(z)-H(Z)\ge0$ whenever the entropy is finite. To avoid assuming
this finiteness, first apply the same comparison to
$Z_J=\min\{Z,J\}$. Its finite entropy is at most $g(z)$ because
$\mathbb E Z_J\le z$. The associated partitions refine as $J$ grows,
so $H(Z_J)\uparrow H(Z)$ and the bound follows. The case $z=0$ is
deterministic. For $x>0$, differentiation gives
\begin{equation}
 c_\eta''(x)=-\frac{\eta}{\ln2\,x(1+x)(1+qx)}<0,
 \qquad \lim_{x\to\infty}c_\eta'(x)=0.
 \label{eq:main-concavity}
\end{equation}
Thus $c_\eta$ is increasing and concave, including at zero by continuity.
Finally $g(x)=\log_2x+\log_2e+o(1)$ as $x\to\infty$; the case $a=0$
follows from $g(0)=0$.
\end{proof}

\subsection{A local pinching bound}
Here, we quantify the entanglement contribution of coherence between the
outcomes of a local measurement.

\begin{lemma}[Local pinching penalty]\label{lem:pinching}
For a finite-dimensional bipartite state $\zeta$, let $\{P_k\}$
be mutually orthogonal projectors on Bob that sum to the identity;
identities on Alice are implicit. The pinching map
$\cP(\zeta)=\sum_kP_k\zeta P_k$ is this measurement with its outcome
forgotten. If $p=(p_k)$ with $p_k=\Tr(P_k\zeta)$, then
\begin{equation}
 \ER(\zeta)\le\ER(\cP\zeta)+D(\zeta\Vert\cP\zeta)
                  \le\ER(\cP\zeta)+H(p).
 \label{eq:main-pinching}
\end{equation}
\end{lemma}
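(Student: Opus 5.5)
The plan is to establish the two inequalities separately. Both rely only on the definitions of the relative entropy and of the REE, together with standard properties of the pinching map $\cP$.

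\textbf{First inequality.} Fix an arbitrary separable $\sigma$. Since each $P_k$ acts on Bob alone, $\cP$ is a local operation and therefore maps separable states to separable states, so $\cP\sigma\in\SEP$. I will aim to prove
\begin{equation*}
 D(\zeta\Vert\cP\sigma)=D(\zeta\Vert\cP\zeta)+D(\cP\zeta\Vert\cP\sigma).
\end{equation*}
The argument goes as follows. The operator $\log_2\cP\sigma$ is block diagonal with respect to $\{P_k\}$, so $\Tr[\zeta\log_2\cP\sigma]=\Tr[\cP\zeta\,\log_2\cP\sigma]$. For the same reason, $\Tr[\zeta\log_2\cP\zeta]=\Tr[\cP\zeta\log_2\cP\zeta]$. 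Adding and subtracting $\Tr[\zeta\log_2\cP\zeta]$ in $D(\zeta\Vert\cP\sigma)$ then gives the identity. The one point to check is support: if $\supp\zeta\subseteq\supp\cP\sigma$, then $\supp\zeta\subseteq\supp\cP\zeta$ as well. This holds because $\supp\zeta\subseteq\supp\cP\zeta$ always, since $\cP\zeta\ge\zeta/K$ in finite dimensions via the pinching inequality. If the support condition fails, both sides are $+\infty$.

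Next, take $\sigma$ to be a minimizer (or near-minimizer) for $\ER(\cP\zeta)$. Then
\begin{equation*}
 \ER(\zeta)\le D(\zeta\Vert\cP\sigma)=D(\zeta\Vert\cP\zeta)+D(\cP\zeta\Vert\cP\sigma)\le D(\zeta\Vert\cP\zeta)+D(\cP\zeta\Vert\sigma),
\end{equation*}
where the final step is data processing under $\cP$, which uses $\cP\cP\zeta=\cP\zeta$. Taking the infimum over $\sigma$ yields the first inequality.

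\textbf{Second inequality.} From the identity above, $D(\zeta\Vert\cP\zeta)=S(\cP\zeta)-S(\zeta)$. Write $\cP\zeta=\bigoplus_k p_k\zeta_k$ with $\zeta_k=P_k\zeta P_k/p_k$. Then $S(\cP\zeta)=H(p)+\sum_kp_kS(\zeta_k)$. It therefore suffices to show
\begin{equation*}
 S(\zeta)\ge\sum_kp_kS(\zeta_k).
\end{equation*}
I will obtain this as follows. Let $V$ be an isometric dilation of the measurement that records $k$ in an ancilla, $V=\sum_kP_k\otimes|k\rangle$. Conjugating by $V$ preserves entropy, so $S(\zeta)=S(V\zeta V^\dagger)$. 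The state $V\zeta V^\dagger$ is pure-ensemble-related to the flagged state: measuring the flag in the computational basis produces $\sum_kp_k\zeta_k\otimes|k\rangle\langle k|$. Concavity of entropy, in the form that rank-one/post-measurement ensembles satisfy $S(\sum_kp_k\zeta_k)\ge\sum_kp_kS(\zeta_k)$, is not directly what is needed. The cleaner route is the standard inequality $S(\cP\zeta)\le S(\zeta)+H(p)$, namely the fact that the entropy increase under a projective measurement is at most the outcome entropy. Its proof is that the complementary output of the dilation is the flag state with diagonal $p$, whose entropy is bounded by $H(p)$. By subadditivity of entropy applied to the pure global dilation, $S(\cP\zeta)-S(\zeta)\le S(\text{flag})\le H(p)$. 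I expect the only delicate step to be this bookkeeping with the purification, which I would handle by purifying $\zeta$ with a reference and applying the Araki--Lieb inequality.
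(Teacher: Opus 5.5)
Your proof is correct. The first inequality is argued exactly as in the paper: pinch a near-optimal separable comparison for $\cP\zeta$, use the block-diagonal decomposition $D(\zeta\Vert\cP\sigma)=D(\zeta\Vert\cP\zeta)+D(\cP\zeta\Vert\cP\sigma)$, and apply data processing under $\cP$ with $\cP^2=\cP$. Your support check is a detail the paper leaves implicit.

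The second inequality follows a different route. The paper works with relative entropies throughout. It purifies $\zeta$ to $\ket\psi$ and observes that $D(\proj\psi\Vert\cP(\proj\psi))=H(p)$ exactly, because the pinched pure state is a mixture of the orthogonal vectors $P_k\ket\psi/\sqrt{p_k}$ with weights $p_k$. It then uses monotonicity of relative entropy under tracing out the purifier, which commutes with $\cP$ since $\cP$ acts on Bob. You instead rewrite $D(\zeta\Vert\cP\zeta)=S(\cP\zeta)-S(\zeta)$ and bound the entropy increase using the flag dilation $V=\sum_kP_k\otimes\ket k$. On the purified global state this gives $S(\cP\zeta)=S(RF)\le S(R)+S(F)=S(\zeta)+H(p)$, or equivalently the Araki--Lieb inequality. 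Your version needs only subadditivity rather than monotonicity under partial trace. However, it forms an entropy difference, which is harmless here only because $\zeta$ is finite-dimensional; the paper bounds $D(\zeta\Vert\cP\zeta)$ directly and never subtracts entropies.

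Two cosmetic points. The flag marginal is exactly $\mathrm{diag}(p)$, since $\Tr(P_{k'}P_k\zeta)=\delta_{kk'}p_k$, so your last step is an equality. The detour through concavity and the reduction to $S(\zeta)\ge\sum_kp_kS(\zeta_k)$ can be dropped, because that inequality is just a restatement of $S(\cP\zeta)\le S(\zeta)+H(p)$.
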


\begin{proof}
An arbitrarily near-optimal separable comparison for $\cP\zeta$ can
itself be pinched: this preserves separability and cannot increase its
relative entropy to $\cP\zeta$. Denote the resulting block-diagonal
comparison by $\sigma$. The relative-entropy decomposition for
pinching follows by expanding the logarithms
(see also Refs.~\cite{Bhatia,Watrous}):
\begin{equation}
 D(\zeta\Vert\sigma)
 =D(\zeta\Vert\cP\zeta)+D(\cP\zeta\Vert\sigma),
\label{eq:pinching-pythagorean}
\end{equation}
because a block-diagonal observable has the same expectation on $\zeta$
and $\cP\zeta$. Taking the infimum proves the first inequality. To prove the second, purify $\zeta$ to $\ket\psi$. The projected vectors
$P_k\ket\psi/\sqrt{p_k}$ are orthogonal, so the relative entropy of
$\proj\psi$ to its pinching is exactly $H(p)$. Tracing out the purifier
can only decrease relative entropy, giving
$D(\zeta\Vert\cP\zeta)\le H(p)$.
\end{proof}
This lemma quantifies a loss of coherence in an inequality; it does not
claim that LOCC can undo the measurement. For the optical block, the label
$k$ will be one total number, rather than the list of all mode occupations.

\subsection{Photon-number sectors and resources}
We identify fixed total-photon-number blocks as finite-dimensional
quantum systems and define the resources for sector teleportation
simulation. Their photon-number label will carry the input constraint
into the resource entanglement bound. Let $s\ge1$ and $m\ge0$ be integers. Write $A^s=A_1\cdots A_s$
for a block of $s$ input modes and $B^s=B_1\cdots B_s$ for its output.
An occupation vector is $\bm r=(r_1,\ldots,r_s)$ with nonnegative
integer entries; set $|\bm r|=\sum_jr_j$, $\bm r!=\prod_jr_j!$, and
$|\bm r\rangle=\bigotimes_j|r_j\rangle$. Let $\cH_{s,m}$ be the
subspace with total number $m$, $I_{s,m}$ its identity operator, and
$\pi_{s,m}$ its maximally mixed state:
\begin{equation}
 \cH_{s,m}=\spanop\{\ket{\bm r}:|\bm r|=m\},\qquad
 d_{s,m}=\binom{s+m-1}{m},\qquad \pi_{s,m}=I_{s,m}/d_{s,m}.
 \label{eq:sector}
\end{equation}
The total photon-number operator is
$\hat N_A=\sum_{j=1}^s\hat n_{A_j}$. We denote the projector onto
$\cH_{s,m}$ by $P_m^A$, and use $P_k^B$ for the corresponding output
projector.
The sector is the symmetric tensor power $\mathrm{Sym}^m(\mathbb C^s)$,
whose dimension is $d_{s,m}$~\cite{FultonHarris,Harrow}.
For a unit vector $u\in\mathbb C^s$ and an integer $r\ge0$, write
$\ket{u;r}=u^{\otimes r}$ for the normalized $r$-photon condensate,
viewed in $\cH_{s,r}$. These are coherent states of the symmetric
representation~\cite{Perelomov}. With $\dd u$ the normalized uniform
measure over complex unit vectors, their resolution of the identity is
\begin{equation}
 \int\proj{u;r}\,\dd u=\frac{I_{s,r}}{d_{s,r}}=\pi_{s,r}.
 \label{eq:coherent-resolution}
\end{equation}
This is the standard symmetric-subspace identity
of Ref.~\cite[Proposition~6]{Harrow}, also obtainable from Haar
integration~\cite{CollinsSniady}. In particular, the condensates span
each sector. The notation $\ket{\bar u;r}$ conjugates the coefficients
in the fixed occupation basis.

Let $R$ be a reference register isomorphic to $\cH_{s,m}$ and let
$A^{\prime s}$ be an auxiliary input block. The maximally entangled
sector pair and its normalized Choi--Jamio\l{}kowski
resource~\cite{Jamiolkowski,Choi} are
\begin{equation}
 \ket{\Phi_{s,m}}=\frac1{\sqrt{d_{s,m}}}
       \sum_{|\bm r|=m}\ket{\bm r}_R\ket{\bm r}_{A^{\prime s}},\qquad
 J_{s,m}=(\id_R\otimes\cL_\eta^{\otimes s})(\Phi_{s,m}).
 \label{eq:choi}
\end{equation}
(As mentioned before, we use the same letter for a pure-state vector and, without a ket, its projector when the meaning is unambiguous.) The auxiliary block $A^{\prime s}$ used to define the resource is
distinct from the actual channel input $A^s$. The resource is supplied
independently of the state being simulated. Its preparation formula
defines a comparison state; it is not an extra online channel call.
The pair $\Phi_{s,m}$ has exactly $m$ photons in each of its two
shares, hence $2m$ altogether. In $J_{s,m}$, Alice still has exactly
$m$ photons, whereas Bob has at most $m$ and mean photon number
$\eta m$. This is the precise resource accounting associated with an
$m$-photon input sector.

\section{Sector teleportation simulation and adaptive REE bound}\label{sec:weak}
This appendix supplies the common estimate for the quantum and key tasks.
Recall that $q=1-\eta$, $c_\eta(x)=g(x)-g(qx)$, and $x_i$ is the
unconditional mean photon number entering channel use $i$.
We construct the sector simulator, calculate its resource REE, and
average over the total photon number. Replicas of a fixed adaptive
protocol remove the sublinear overhead and bound the regularized REE
of its actual output (adaptive REE bound). The output-state identity in \Cref{cor:tmsv}
is proved alongside the resource calculation. The quantum and key
applications follow in
Appendices~\ref{sec:quantum} and~\ref{sec:key}.

\subsection{Exact simulation on a photon-number sector}\label{sec:teleportation-detail}
We develop \emph{sector teleportation simulation}, a construction that
encodes the input photon budget in a finite-energy resource and
exactly reproduces the pure-loss channel on each fixed-photon-number
sector, preserving all correlations with external memories. This
resource encoding provides the energy accounting needed for our
adaptive weak converse. The construction builds on established teleportation
principles~\cite{BennettTeleport,BraunsteinKimble,WernerTeleport}
and the irreducibly covariant channel-simulation framework of
Ref.~\cite[Proposition~24]{GBW}. Finite-resource Gaussian teleportation and stretching
have also been studied in
Refs.~\cite{LiuzzoScorpo,LaurenzaFinite}. Our contribution is to
develop a simulation based on multimode photon sectors whose
resource retains the input photon constraint, allowing that
constraint to enter the resource-entanglement bound.

Figure~\ref{fig:sector-teleportation} separates the preparation of
the resource from its use to simulate the channel on the actual
input. The following lemma specifies the joint measurement,
classical communication, conditional correction, and resource
energy, and proves exactness for inputs entangled with external
memories. Here $X$ and $Y$ denote Alice's and Bob's retained
memories, respectively, and $A^s$ is the transmitted block of
$s$ modes. The sector $\cH_{s,m}$ and resource $J_{s,m}$ are
defined in \eqref{eq:sector} and \eqref{eq:choi}, respectively.
We state the covariance extension explicitly so that the amplifier
can use the same Bell measurement in Appendix~\ref{sec:amp-simulation}.

\begin{figure}[t!]
\centering
\begingroup
\definecolor{ink}{HTML}{203045}
\definecolor{muted}{HTML}{5D6876}
\definecolor{blue}{HTML}{295D83}
\definecolor{bluefill}{HTML}{EEF5FA}
\definecolor{purple}{HTML}{70518C}
\definecolor{purplefill}{HTML}{F3EDF8}
\definecolor{amber}{HTML}{A46418}
\definecolor{amberfill}{HTML}{FBF5E8}
\definecolor{alicefill}{HTML}{F3F7FB}
\definecolor{bobfill}{HTML}{F7F8FA}
\definecolor{rulegray}{HTML}{C7D2DC}

\tikzset{
 every node/.style={text=ink},
 quantum/.style={draw=ink,line width=.85pt,
   -{Latex[length=2.2mm,width=1.5mm]}},
 qline/.style={draw=ink,line width=.85pt},
 classical/.style={draw=amber,line width=1pt,densely dashed,
   -{Latex[length=2.2mm,width=1.5mm]}},
 classicalplain/.style={draw=amber,line width=1pt,densely dashed},
 gate/.style={draw=blue,fill=bluefill,line width=.85pt,
   rounded corners=2pt},
 resource/.style={draw=purple,fill=purplefill,line width=.9pt,
   rounded corners=3pt},
 label/.style={font=\normalsize,inner sep=2pt},
 note/.style={font=\footnotesize,text=muted,align=center},
 panel/.style={font=\bfseries,anchor=west},
 lab/.style={font=\small\bfseries,text=blue,anchor=west}
}

\resizebox{0.96\linewidth}{!}{%
\begin{tikzpicture}[x=.92cm,y=.78cm]

\node[panel] at (.1,14.05)
  {(a) Resource definition};

\draw[resource] (1.2,11.05) rectangle (3.55,13.18);
\node[align=center,text=purple] at (2.375,12.12)
  {Sector pair\\[4pt] $\Phi_{s,m}^{RA'^s}$};

\draw[quantum] (3.55,12.8)--(12.15,12.8);
\node[label,anchor=south] at (4.45,12.83) {$R$};
\node[label,anchor=west] at (12.18,12.8) {$R$};

\draw[gate] (6.05,9.9) rectangle (8.95,11.92);
\node[align=center] at (7.5,10.9)
  {Loss dilation\\[4pt] $U_\eta^{\otimes s}$};
\draw[quantum] (3.55,11.4)--(6.05,11.4);
\node[label,anchor=south] at (4.65,11.43) {$A'^s$};
\node[label,anchor=east] at (4.65,10.3) {$|0\rangle_{E_0^s}$};
\draw[quantum] (4.75,10.3)--(6.05,10.3);

\draw[quantum] (8.95,11.4)--(12.15,11.4);
\node[label,anchor=west] at (12.18,11.4) {$B^s$};
\draw[quantum] (8.95,10.3)--(10.75,10.3);
\node[label,anchor=south] at (9.8,10.33) {$E^s$};
\node[note,anchor=west] at (10.82,10.3) {traced out};

\draw[draw=purple,line width=.75pt,decorate,
  decoration={brace,amplitude=4pt}]
  (13.05,13.02)--(13.05,11.18);
\node[align=center,text=purple] at (14.9,12.15)
  {Shared resource\\[4pt] $J_{s,m}^{RB^s}$};

\node[panel] at (.1,8.25)
  {(b) Sector teleportation simulation};

\fill[alicefill,rounded corners=3pt]
  (1.4,3.2) rectangle (14.2,7.68);
\fill[bobfill,rounded corners=3pt]
  (1.4,.34) rectangle (14.2,3.2);
\draw[draw=rulegray,dashed,line width=.7pt]
  (1.6,3.2)--(14.0,3.2);
\node[lab] at (1.7,7.4) {Alice};
\node[lab] at (1.7,2.96) {Bob};

\draw[draw=ink,line width=.7pt,decorate,
  decoration={brace,mirror,amplitude=5pt}]
  (.9,7.14)--(.9,.76);
\node[rotate=90,align=center] at (.2,3.95)
  {Input state $\rho_{XA^sY}$};

\draw[gate] (10.8,6.56) rectangle (12.6,7.34);
\node at (11.7,6.95) {$\mathrm{id}_X$};
\draw[quantum] (1.4,6.95)--(10.8,6.95);
\draw[quantum] (12.6,6.95)--(14.78,6.95);
\node[label,anchor=south] at (3.7,7.0) {$X$};
\node[label,anchor=west] at (14.88,6.95) {$X$};

\draw[resource] (3.0,2.04) rectangle (5.1,4.36);
\node[align=center,text=purple] at (4.05,3.2)
  {Resource\\[4pt] $J_{s,m}^{RB^s}$};

\draw[gate] (6.8,3.65) rectangle (9.35,5.85);
\node[align=center] at (8.075,4.75)
  {Bell POVM\\[4pt] $\{M_a\}$\\[2pt]
   {\footnotesize on $A^sR$}};
\draw[quantum] (1.4,5.45)--(6.8,5.45);
\node[label,anchor=south] at (3.7,5.5) {$A^s$};
\draw[quantum] (5.1,3.95)--(6.8,3.95);
\node[label,anchor=south] at (5.95,3.99) {$R$};

\draw[gate] (10.8,1.91) rectangle (12.6,2.99);
\node at (11.7,2.2) {$V_m(U_a)$};
\node[font=\footnotesize] at (11.7,2.74) {Correction};
\draw[quantum] (5.1,2.45)--(10.8,2.45);
\node[label,anchor=south] at (6.0,2.49) {$B^s$};
\draw[quantum] (12.6,2.45)--(14.78,2.45);
\node[label,anchor=west] at (14.88,2.45) {$B^s$};

\draw[gate] (10.8,.56) rectangle (12.6,1.34);
\node at (11.7,.95) {$\mathrm{id}_Y$};
\draw[quantum] (1.4,.95)--(10.8,.95);
\draw[quantum] (12.6,.95)--(14.78,.95);
\node[label,anchor=south] at (3.7,1.0) {$Y$};
\node[label,anchor=west] at (14.88,.95) {$Y$};

\draw[classicalplain] (9.35,4.75)--(11.7,4.75);
\fill[amber] (11.7,4.75) circle (1.4pt);
\draw[classical] (11.7,4.75)--(11.7,2.99);
\node[font=\small,text=amber,anchor=south] at (10.5,4.85)
  {outcome $a$};
\draw[classical] (11.7,4.75)--(14.6,4.75);
\node[align=center,text=amber,anchor=west] at (14.66,4.75)
  {$Z=a$\\[-1pt]{\footnotesize public / Eve}};

\end{tikzpicture}
}
\endgroup
\caption{Sector teleportation simulation on a fixed total-photon-number subspace.
(a) Definition of the independent resource $J_{s,m}^{RB^s}$ using an
auxiliary optical block $A^{\prime s}$ and a vacuum environment.
The sector pair has $m$ photons on each share; after loss, $R$ still
has $m$ photons and $B^s$ has at most $m$. The
environment output is traced out. (b) Alice jointly measures the actual
input $A^s$ and her resource share $R$, sends the finite outcome $a$, and
Bob corrects $B^s$. The memories $X,Y$ retain all their correlations.
Solid arrows carry quantum systems; the dashed arrow carries the public
outcome, also available to Eve. For an input supported on $\cH_{s,m}$,
the output is exactly the state in \eqref{eq:main-simulation}.}
\label{fig:sector-teleportation}
\end{figure}

\begin{lemma}[Sector telesimulation]\label{prop:sector}
There is an LOCC map $\mathcal T_{s,m}$ such that, for every
$\rho_{XA^sY}$ supported on $\cH_{s,m}$ in $A^s$, we have
\begin{equation}
 \mathcal T_{s,m}(\rho\otimes J_{s,m})
       =(\id_X\otimes\cL_\eta^{\otimes s}\otimes\id_Y)(\rho).
 \label{eq:main-simulation}
\end{equation}
Alice holds $XA^sR$ initially and $X$ finally; Bob holds his resource
share and $Y$ initially, and $B^sY$ finally. The resource has exactly
$m$ photons at Alice and at most $m$ photons at Bob, with Bob's mean
equal to $\eta m$.

More generally, the same Bell measurement simulates any normal channel
$\mathcal N$ from $\cH_{s,m}$ to a separable output space whenever
there are unitary output corrections $V_m(U)$ satisfying the
covariance identity \eqref{eq:sector-covariance} below. Its resource
is $J^{\mathcal N}_{s,m}=(\id_R\otimes\mathcal N)(\Phi_{s,m})$.
The output space need not be finite dimensional.
\end{lemma}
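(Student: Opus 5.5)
The argument will be standard teleportation with an irreducibly covariant unitary group, following the simulation principle of Ref.~\cite[Proposition~24]{GBW}.

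\textbf{Covariance.} Passive linear optics on $s$ modes acts on $\cH_{s,m}$ through the symmetric representation $U\mapsto V_m(U)=U^{\otimes m}|_{\mathrm{Sym}^m}$ of $U(s)$. This representation is irreducible. Pure loss commutes with passive interferometers: $\cL_\eta^{\otimes s}(V(U)\rho V(U)^\dagger)=V(U)\cL_\eta^{\otimes s}(\rho)V(U)^\dagger$. This holds because the Stinespring dilation is $s$ beam splitters with vacuum environment, which is invariant under $U$ applied jointly to system and environment. On the output, which is an arbitrary Fock space, $V(U)$ acts as the full passive unitary. This is the identity I would record as \eqref{eq:sector-covariance}: $\mathcal N(V_m(U)\rho V_m(U)^\dagger)=V_{\rm out}(U)\mathcal N(\rho)V_{\rm out}(U)^\dagger$.

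\textbf{Finite unitary error basis.} Irreducibility gives Schur averaging over Haar measure, $\int V_m(U) X V_m(U)^\dagger\,\dd U=\Tr(X)\,\pi_{s,m}$. To get a finite outcome alphabet, I would instead use a finite group that is still irreducible on $\cH_{s,m}$, or a unitary $m$-design in $U(s)$. The generalized Pauli (Weyl--Heisenberg) operators on the $d_{s,m}$-dimensional sector form a nice error basis, but they need not lie in the passive image. The cleanest route is a POVM built from a finite unitary $t$-design $\{U_a\}$ with $t\ge m$ on the symmetric power. Its elements are
\[
M_a=\frac{d_{s,m}^2}{|\mathcal A|}\,(V_m(U_a)\otimes I)\,\Phi_{s,m}^{A^sR}\,(V_m(U_a)\otimes I)^\dagger
\]
on $A^sR$. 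Completeness, $\sum_a M_a=I_{s,m}\otimes I_{s,m}$, requires the twirl of $\Phi$ over the design to equal the maximally mixed state. This holds because $V_m\otimes\bar V_m$ twirling of $\Phi$ is the depolarizing average, which needs only irreducibility together with the design's second-moment property in the representation $V_m$. Checking this moment condition is the main technical obstacle. If a finite design proves awkward, I would accept a countable or continuous POVM and then finitize it, but I would try the finite design first.

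\textbf{Exactness.} With $M_a$ as above, the standard teleportation computation gives the following. Alice measures $A^sR$ and obtains outcome $a$ with probability $1/|\mathcal A|$. Bob's conditional state, before any channel action on his share, is the input transported with a twist $V_m(U_a)^{T}$-type error: up to a fixed transpose convention, $\Phi$'s ricochet property gives $(X\otimes I)\ket{\Phi}=(I\otimes X^T)\ket\Phi$. Because the resource already has $\cL_\eta^{\otimes s}$ applied to the $A'^s$ share, linearity lets the channel commute past the measurement. Bob therefore holds $\cL_\eta^{\otimes s}(V\rho V^\dagger)$ for a known unitary $V$, correlated with $X,Y$ untouched. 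Covariance converts this to $V_{\rm out}\cL_\eta^{\otimes s}(\rho)V_{\rm out}^\dagger$, and Bob's correction $V_{\rm out}^\dagger$ yields \eqref{eq:main-simulation}. The support assumption on $\cH_{s,m}$ is used precisely here, since the Bell vectors only resolve identity on that sector. Normality and separability of the output space let the argument hold verbatim for general $\mathcal N$.

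\textbf{Energy accounting.} $R\cong\cH_{s,m}$ carries exactly $m$ photons. Loss is number-nonincreasing, so Bob has at most $m$ photons, with mean $\eta\Tr(\hat N\pi_{s,m})=\eta m$.
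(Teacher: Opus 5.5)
Your proposal is correct and has the same structure as the paper's proof: irreducibility of $\mathrm{Sym}^m$, passive covariance of pure loss via the beam-splitter dilation, a Bell POVM built from $(W_m(U)\otimes I)\ket{\Phi_{s,m}}$, the contraction identity, and Bob's covariant correction. The genuine difference is how you make the Bell measurement finite.

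\emph{The paper's route.} It applies Carath\'eodory's theorem directly to the Haar-averaged Bell projectors, which live in the $(d^4-1)$-dimensional affine space of trace-one Hermitian operators on $A^sR$. This gives at most $d_{s,m}^4$ outcomes with nonuniform weights $w_a$. The argument is self-contained, and the explicit outcome bound is reused later for the amplifier.

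\emph{Your route.} You take a finite unitary $t$-design on $U(s)$ with $t\ge m$. This gives uniform outcome probabilities $1/|\mathcal A|$ and one set of unitaries that serves every sector $m\le t$. It does, however, rely on the existence of exact finite designs (Seymour--Zaslavsky). A weighted design would also do, but producing one is again a Carath\'eodory argument.

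\emph{The moment condition is not an obstacle.} The entries of $\mathrm{Sym}^mU\otimes\overline{\mathrm{Sym}^mU}$ are polynomials of bidegree $(m,m)$ in $U,\bar U$. A $t$-design with $t\ge m$ reproduces their Haar averages, since you can multiply by $(\Tr(UU^\dagger)/s)^{t-m}=1$ to raise the bidegree to $(t,t)$. So the design twirl equals the Haar twirl, which Schur's lemma makes completely depolarizing. This yields the completeness relation at once.

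\emph{The finite-subgroup option fails in general.} Take $s=2$. Any finite subgroup of $U(2)$ acts on $\mathrm{Sym}^m$, up to phases, through a finite subgroup of $SU(2)$. Those groups have irreducible representations of dimension at most $6$. So irreducibility on $\cH_{2,m}$ fails once $m\ge 6$, and the design (or Carath\'eodory) route is the one to use.

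Your stated fallback, finitizing the continuous POVM, is exactly what the paper does.
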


\begin{proof}
Fix $s,m$ and write $d=d_{s,m}$. We use $\mathcal N$ for a channel
with the stated covariance, and verify that pure loss supplies the
required representation. Alice holds the input $A^s$,
her memory $X$, and the resource share $R$; Bob holds his memory
$Y$ and the resource share $B^s$. The auxiliary block $A^{\prime s}$
appears only in the resource preparation formula
\begin{equation}
 (J^{\mathcal N}_{s,m})^{RB^s}
 =(\id_R\otimes\mathcal N)
   (\Phi_{s,m}^{RA^{\prime s}}).
\end{equation}
The state supplied to the simulator is
$\rho_{XA^sY}\otimes(J^{\mathcal N}_{s,m})^{RB^s}$. Thus the resource is
independent of the input, whereas $A^s$ may be arbitrarily
correlated with $X$ and $Y$. We assume only
$\supp\rho_{A^s}\subseteq\cH_{s,m}$.
For $m=0$, the input sector is one dimensional. Its correlations
with the memories factorize, and the resource already contains
$\mathcal N(\proj{0^{\otimes s}})$ at Bob. Discarding the input
therefore proves the claim. We consider $m\ge1$ in the following steps.

\smallskip\noindent\emph{1. Covariance supplies the output corrections.}
A passive mode transformation $U\in U(s)$ acts on the
$m$-photon sector through
$W_m(U)=\mathrm{Sym}^mU$. This representation is
irreducible~\cite{FultonHarris}. Indeed, an operator commuting
with all single-mode phase rotations must be diagonal in the
occupation basis. Commutation with the generators of mode mixing
then equates the diagonal entries of occupations connected by
moving one photon between modes. Such moves connect all
occupations with total photon number $m$, so the operator is
a multiple of the identity.

Identical pure-loss channels are covariant under these passive
transformations. To see this, rotate the system and environment
modes by the same $U$ in the beam-splitter dilation. This rotation
commutes with the identical beam splitters, and the environment
vacuum is invariant. Since an $m$-photon input produces at most
$m$ output photons, define
$V_m(U)=\bigoplus_{k=0}^{m}W_k(U)$ on the relevant output space.
For pure loss, and by assumption for the general channel, every
operator $T$ on $\cH_{s,m}$ satisfies
\begin{equation}
 \mathcal N\!\left(W_m(U)T W_m(U)^\dagger\right)
 =V_m(U)\mathcal N(T)V_m(U)^\dagger.
 \label{eq:sector-covariance}
\end{equation}
Thus a rotation of the input can be corrected by a corresponding
rotation of the output.

\smallskip\noindent\emph{2. Irreducibility supplies a finite Bell POVM.}
On $A^sR$, define the normalized Bell vectors
\begin{equation}
 \ket{\beta_U}
 =(W_m(U)\otimes I_R)\ket{\Phi_{s,m}}.
\label{eq:sector-bell-vector}
\end{equation}
Let $\dd U$ be normalized Haar measure on $U(s)$.
Irreducibility gives the Schur twirl~\cite{FultonHarris,Harrow},
\begin{equation}
 \int W_m(U)T W_m(U)^\dagger\,\dd U
 =\frac{\Tr T}{d}\,I.
\label{eq:sector-schur-twirl}
\end{equation}
Applying this identity to one half of the maximally entangled
state yields
\begin{equation}
 \int\proj{\beta_U}\,\dd U
 =\frac{I_{A^sR}}{d^2}.
 \label{eq:main-bell}
\end{equation}

We now replace this continuous average by a finite one.
The Bell projectors form a compact set in the real affine space
of trace-one Hermitian operators on $A^sR$, whose dimension is
$d^4-1$. Carath\'{e}odory's theorem~\cite{Rockafellar}
therefore provides at most $d^4$ unitaries $U_a$ and positive
weights $w_a$, with $\sum_a w_a=1$, such that
\begin{equation}
 \sum_a w_a\proj{\beta_{U_a}}=\frac{I_{A^sR}}{d^2},
 \qquad
 M_a=d^2w_a\proj{\beta_{U_a}},
 \qquad
 \sum_aM_a=I_{A^sR}.
 \label{eq:finite-bell}
\end{equation}
The effects $M_a$ define a finite-outcome POVM on the joint
sector space $A^sR$. The Bell vectors need not be orthogonal;
the completeness relation in \eqref{eq:finite-bell} is sufficient.

\smallskip\noindent\emph{3. Each measurement outcome gives the correct output.}
First consider the resource before the channel is applied.
Expansion in the common occupation basis gives, for every
$\ket\psi\in\cH_{s,m}$,
\begin{equation}
 d(\bra{\beta_U}_{A^sR}\otimes I_{A^{\prime s}})
 \left(\ket\psi_{A^s}\otimes
       \ket{\Phi_{s,m}}_{RA^{\prime s}}\right)
 =W_m(U)^\dagger\ket\psi_{A^{\prime s}}.
 \label{eq:main-contraction}
\end{equation}
This identity also holds with an arbitrary reference appended.
By linearity, it therefore applies to mixed inputs with arbitrary
memory correlations.

Alice can implement the POVM destructively using the measurement
operators $d\sqrt{w_a}\bra{\beta_{U_a}}$, whose effects are $M_a$.
Write $W_a=W_m(U_a)$, $V_a=V_m(U_a)$, and define the desired output
\begin{equation}
 \omega
 =(\id_X\otimes\mathcal N\otimes\id_Y)(\rho_{XA^sY}).
\label{eq:sector-target-output}
\end{equation}
The measurement on $A^sR$ commutes with the channel acting on the
auxiliary resource block $A^{\prime s}$. Consequently,
\eqref{eq:main-contraction} gives the subnormalized output
for outcome $a$:
\begin{align}
 \widetilde\omega_a
 &=w_a(\id_X\otimes\mathcal N\otimes\id_Y)
 \left[
 (I_X\otimes W_a^\dagger\otimes I_Y)\rho_{XA^sY}
 (I_X\otimes W_a\otimes I_Y)
 \right]\\
 &=w_a(I_X\otimes V_a^\dagger\otimes I_Y)\,
       \omega\,
       (I_X\otimes V_a\otimes I_Y).
 \label{eq:bell-branch}
\end{align}
The second equality follows from covariance
\eqref{eq:sector-covariance}. In particular,
$\Tr\widetilde\omega_a=w_a$, independently of the input.

Alice announces $a$, and Bob applies $V_a$ to $B^s$.
The corrected branch is exactly $w_a\omega$.
Summing over the outcomes gives $\omega$. This proves the general
claim and, with $\mathcal N=\cL_\eta^{\otimes s}$,
\eqref{eq:main-simulation}. If the public outcome is retained
in a register $Z$, the corrected state is
\begin{equation}
 \sum_a w_a\proj a_Z\otimes\omega_{XB^sY}.
\label{eq:sector-public-outcome}
\end{equation}
Thus the outcome probabilities are input independent, and the
public register is uncorrelated with the simulated state
on $XB^sY$. Finally, each share of $\Phi_{s,m}$ contains exactly $m$ photons.
Applying pure loss to one share leaves Alice's photon number
unchanged and gives Bob an output supported on photon numbers
$0,\ldots,m$, with mean $\eta m$. This establishes the
resource-energy properties for pure loss. The Bell measurement acts
only on the finite-dimensional input sector and reference. Thus the
argument also proves the stated generalization when the output
representation is an infinite direct sum of bounded unitaries.
\end{proof}

The construction makes the photon matching explicit: an arbitrary
state in the $m$-photon sector can be teleported perfectly using
$\Phi_{s,m}$, with $m$ photons in each resource share. Applying
loss in the resource preparation converts this into an exact
channel simulation whose finite-energy resource retains the
input photon budget. The resource is supplied to the simulator
as part of the converse construction; its preparation formula
does not represent an additional channel use in the original
protocol.

\subsection{Sector resource REE and the squeezed-state output}
\label{sec:direct}
We evaluate the resource REE exactly, including each received-number
branch. The symmetric-subspace identity
\eqref{eq:coherent-resolution} determines both the conditional entropy
and the relevant action of a separable comparison. We then apply the
same comparison to prove the squeezed-state output identity in
\Cref{cor:tmsv}.

\begin{lemma}[Exact resource REE]\label{lem:resource}
Bob's total received number in $J_{s,m}$ has distribution
\begin{equation}
 p(k|m)=\binom mk\eta^kq^{m-k},\qquad 0\le k\le m.
 \label{eq:binomial}
\end{equation}
Let $J_{s,m,k}$ be the normalized resource state conditioned on Bob
receiving $k$ photons, and write $F_{\eta,s}(m)$ for the REE of
the unconditioned sector resource. Let $L$ denote the lost photon
number, with distribution $\Bin(m,q)$. Then
\begin{align}
 \ER(J_{s,m,k})&=\log_2\frac{d_{s,m}}{d_{s,m-k}},
       \label{eq:conditional-ree}\\
 \ER(J_{s,m})&=F_{\eta,s}(m)
       :=\log_2d_{s,m}-\EE_{L\sim\Bin(m,q)}\log_2d_{s,L}.
       \label{eq:sector-ree}
\end{align}
Here $\EE$ is the expectation over the lost photon number $L=m-k$.
\end{lemma}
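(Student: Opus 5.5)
The plan is to exploit that Bob's total photon number is an orthogonal local flag. First I would settle the distribution of $k$, then compute each conditional REE exactly (a matching lower and upper bound), and finally average using \eqref{eq:flags}.

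\emph{Step 1: distribution of $k$.} By \eqref{eq:coherent-resolution}, $\Phi_{s,m}$ has reduced state $\pi_{s,m}$ on $A'^s$, so its output statistics match those of an $m$-photon condensate $\ket{u;m}$ with $u$ drawn from $\dd u$. Loss acts identically on all modes, so it commutes with passive rotations. Rotating $u$ to the first mode, pure loss acts on a single mode holding $m$ photons, and each photon survives independently with probability $\eta$. This gives \eqref{eq:binomial}, and it also gives Bob's conditional $k$-photon output $\ket{u;k}$ from that condensate. Since Bob's total number is a locally measurable orthogonal flag, \eqref{eq:flags} yields $\ER(J_{s,m})=\sum_k p(k|m)\ER(J_{s,m,k})$. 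Hence \eqref{eq:sector-ree} follows from \eqref{eq:conditional-ree} after substituting $L=m-k$, which is $\Bin(m,q)$-distributed.

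\emph{Step 2: lower bound via \eqref{eq:rci-lower}.} I would purify the conditional state using the beam-splitter dilation. The projection of the pure state on $RB^sE^s$ onto Bob's $k$-photon sector is pure, and its environment holds exactly $m-k$ photons. I will then compute the two entropies by symmetry. The reference marginal stays $\pi_{s,m}$, because conditioning on Bob's number is a local operation that commutes with the irreducible twirl on $R$; so $S(R)=\log_2 d_{s,m}$. The environment marginal commutes with all $W_{m-k}(U)$, since the joint covariance rotates $R$ conjugately and $B,E$ by $U$. Schur's lemma then makes it $\pi_{s,m-k}$, so $S(RB)=S(E)=\log_2 d_{s,m-k}$. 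Inequality \eqref{eq:rci-lower} then gives $\ER(J_{s,m,k})\ge\log_2(d_{s,m}/d_{s,m-k})$.

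\emph{Step 3: upper bound (the main obstacle).} I need a separable comparison whose relative entropy to $J_{s,m,k}$ matches the lower bound. The natural candidate is the twirl-invariant state
\begin{equation*}
 \sigma_k=\int \proj{\bar u;m}\otimes\proj{u;k}\,\dd u ,
\end{equation*}
which is separable by construction and has marginals $\pi_{s,m}$ and $\pi_{s,k}$. Both $J_{s,m,k}$ and $\sigma_k$ are invariant under $\bar W_m(U)\otimes W_k(U)$. I would decompose $\overline{\mathrm{Sym}^m}\otimes\mathrm{Sym}^k$ into irreducibles, which are multiplicity-free. On each irreducible, both operators are proportional to the projector onto that component.

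The quantity $D(J_{s,m,k}\Vert\sigma_k)$ then reduces to a classical relative entropy between the irrep weights of the two states. What I expect is that $J_{s,m,k}$, being the image of a maximally entangled state under a partial trace over $E$, has weights proportional to dimensions within a restricted family. The target estimate is the operator inequality $J_{s,m,k}\le (d_{s,m}/d_{s,m-k})\,\sigma_k$. If it holds, operator monotonicity of the logarithm gives $D\le\log_2(d_{s,m}/d_{s,m-k})$.

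To prove it, I would write $J_{s,m,k}$ itself as $\frac{d_{s,m}}{d_{s,m-k}}\int \proj{\bar u;m}^{\Gamma}$-type integrals. Concretely, I would use $\Phi_{s,m}$ expanded through condensate coherent states, and compare it directly with the integral defining $\sigma_k$ via \eqref{eq:coherent-resolution} applied on the environment's $(m-k)$-photon sector. If this direct comparison fails, I would fall back on the explicit weight computation (Clebsch--Gordan/hook dimensions), checking that the classical relative entropy equals the coherent information. Combining Steps 2 and 3 gives \eqref{eq:conditional-ree}, and Step 1 gives \eqref{eq:sector-ree}.
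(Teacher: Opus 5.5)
Your Steps 1 and 2 are sound and follow the paper's outline: the binomial law, the local-flags average \eqref{eq:flags}, and the lower bound \eqref{eq:rci-lower} with $S(J_R)=\log_2 d_{s,m}$ and $S(J_{s,m,k})=S(\Omega_E)=\log_2 d_{s,m-k}$. You obtain these marginals from Schur's lemma, where the paper uses the explicit resolution \eqref{eq:coherent-resolution}; either works. You also choose exactly the paper's separable comparison $\sigma_k$.

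\textbf{The gap is Step 3.} The lower bound is standard, so the upper bound is the real content of the lemma, and you never establish it. You do not evaluate $D(J_{s,m,k}\Vert\sigma_k)$, and you do not prove the operator inequality $J_{s,m,k}\le(d_{s,m}/d_{s,m-k})\,\sigma_k$. That inequality is in fact true, and the paper uses it later in \eqref{eq:sc-domination}. What you offer instead is a conditional (``if it holds''), an unspecified integral representation of $J_{s,m,k}$, and a Clebsch--Gordan fallback whose outcome you only intend to check. As written, \eqref{eq:conditional-ree} is not proved.

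\textbf{The missing idea.} You never need the full spectrum of $\sigma_k$, only its action on $\supp J_{s,m,k}$. Two facts suffice.

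\emph{Flatness.} Put $\ell=m-k$ and take the conditional purification $\ket\Omega=(I_R\otimes V_{m,k})\ket{\Phi_{s,m}}$. Purity together with $\Omega_E=\pi_{s,\ell}$, which you already derived, gives $J_{s,m,k}=\Pi/d_{s,\ell}$, where $\Pi$ is its support projector.

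\emph{Eigen-relation.} We have $(\bra{\bar u;m}_R\otimes I)\ket{\Phi_{s,m}}=\ket{u;m}/\sqrt{d_{s,m}}$ and $V_{m,k}\ket{u;m}=\ket{u;k}\ket{u;\ell}$. Hence $(\bra{\bar u;m}_R\otimes\bra{u;k}_B\otimes I_E)\ket\Omega=\ket{u;\ell}_E/\sqrt{d_{s,m}}$, and therefore
\[
 (\sigma_k\otimes I_E)\ket\Omega
 =\frac{1}{\sqrt{d_{s,m}}}(I_R\otimes V_{m,k})\int\ket{\bar u;m}\ket{u;m}\,\dd u
 =\frac{\ket\Omega}{d_{s,m}},
\]
using the vectorized form of \eqref{eq:coherent-resolution}. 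Contracting with arbitrary environment vectors shows $\sigma_k\Pi=\Pi/d_{s,m}$.

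Combining the two facts gives exactly
\[
 D(J_{s,m,k}\Vert\sigma_k)=-S(J_{s,m,k})-\Tr(J_{s,m,k}\log_2\sigma_k)=\log_2 d_{s,m}-\log_2 d_{s,\ell}.
\]
Your operator domination then follows as a by-product, since $\sigma_k$ commutes with $\Pi$.

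\textbf{In your representation-theoretic language.} $\supp J_{s,m,k}$ is the single irreducible component of $\overline{\mathrm{Sym}^m}\otimes\mathrm{Sym}^k$ isomorphic to $\overline{\mathrm{Sym}^{\ell}}$, conjugate to the environment's sector. It is not a family of components. So only one weight has to be computed, namely
\[
 \Tr(\sigma_kJ_{s,m,k})=\int\norm{(\bra{\bar u;m}\otimes\bra{u;k}\otimes I_E)\ket\Omega}^2\dd u=1/d_{s,m},
\]
which is again the contraction identity above. Without that computation, the Clebsch--Gordan route remains a plan rather than a proof.
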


\begin{proof}
Fix $k$ and put $\ell=m-k$. For this proof abbreviate
$\cH_r:=\cH_{s,r}$, $d_r:=d_{s,r}$, and $\pi_r:=\pi_{s,r}$.

\smallskip\noindent\emph{1. Conditional loss as a symmetric-tensor split.}
The canonical isometry
$V_{m,k}:\cH_m\longrightarrow\cH_k\otimes\cH_\ell$ groups the first
$k$ and remaining $\ell$ factors of a symmetric tensor. Hence
\begin{equation}
 V_{m,k}\ket{u;m}=\ket{u;k}_B\ket{u;\ell}_E.
 \label{eq:symmetric-split}
\end{equation}
On a condensate, the beam splitter sends each creation operator to
$\sqrt\eta$ times its output-mode operator plus $\sqrt q$ times its
environment-mode operator~\cite{HolevoWerner,Weedbrook}. Its binomial
expansion therefore gives, on the entire $m$-photon sector,
\begin{equation}
 W_\eta\ket\psi
 =\bigoplus_{k=0}^m\sqrt{\binom mk\eta^kq^{m-k}}\,V_{m,k}\ket\psi,
 \qquad \ket\psi\in\cH_m,
 \label{eq:sector-loss-isometry}
\end{equation}
where $W_\eta$ denotes the loss isometry with the environment initially
vacuum. Equality on the spanning condensates proves equality of the
linear maps. Since each $V_{m,k}$ is an isometry and the output sectors
are orthogonal, the probability \eqref{eq:binomial} is independent of
the sector input. The normalized conditional channel is
$\rho\mapsto\Tr_E(V_{m,k}\rho V_{m,k}^\dagger)$.

Let $\ket\Omega=(I_R\otimes V_{m,k})\ket{\Phi_{s,m}}$ and
$J:=\Tr_E\proj\Omega=J_{s,m,k}$. Its reference marginal is
\begin{equation}
 J_R=\pi_m.
 \label{eq:reference-marginal}
\end{equation}
The resolution \eqref{eq:coherent-resolution} gives
\begin{equation}
 \Omega_{BE}=V_{m,k}\pi_m V_{m,k}^\dagger
 =\int\proj{u;k}_B\otimes\proj{u;\ell}_E\,\dd u,
 \qquad \Omega_E=\pi_\ell.
 \label{eq:conditional-environment}
\end{equation}
Purity of $\Omega$ then implies
\begin{equation}
 J=\Pi_{s,m,k}/d_\ell,\qquad S(J)=\log_2d_\ell,
 \label{eq:spectrum-result}
\end{equation}
where $\Pi_{s,m,k}$ projects onto $\supp J$.

\smallskip\noindent\emph{2. A separable comparison from the same resolution.}
Define the unit-trace separable state
\begin{equation}
 \sigma:=\sigma_{s,m,k}
 =\int\proj{\bar u;m}_R\otimes\proj{u;k}_B\,\dd u.
 \label{eq:sigma}
\end{equation}
The Bell-state contraction and \eqref{eq:symmetric-split} give
\begin{equation}
     (\bra{\bar u;m}_R\otimes\bra{u;k}_B\otimes I_E)\ket\Omega
 =\frac{\ket{u;\ell}_E}{\sqrt{d_m}}.
\end{equation}
Consequently,
\begin{align}
 (\sigma\otimes I_E)\ket\Omega
 &=\frac1{\sqrt{d_m}}\int
   \ket{\bar u;m}_R\ket{u;k}_B\ket{u;\ell}_E\,\dd u\\
 &=\frac1{\sqrt{d_m}}(I_R\otimes V_{m,k})
   \int\ket{\bar u;m}_R\ket{u;m}\,\dd u
 =\frac{\ket\Omega}{d_m}.
 \label{eq:comparison-purification}
\end{align}
The last equality is the vectorized form of the identity resolution:
$\int\ket{\bar u;m}\ket{u;m}\,\dd u
=\ket{\Phi_{s,m}}/\sqrt{d_m}$.
Multiplying \eqref{eq:comparison-purification} by $\bra\Omega$ and
tracing out $E$ yields $\sigma J=J/d_m$. Equivalently,
\begin{equation}
 \sigma_{s,m,k}\Pi_{s,m,k}=\Pi_{s,m,k}/d_m.
 \label{eq:eigen-proved}
\end{equation}
This establishes the action on the full support, including the absence
of any component outside it.

\smallskip\noindent\emph{3. Matching the bounds and averaging the flags.}
Equations~\eqref{eq:spectrum-result} and~\eqref{eq:eigen-proved} give
\begin{equation}
 \ER(J)\le D(J\Vert\sigma)
 =\log_2d_m-S(J)=\log_2\frac{d_m}{d_\ell}.
 \label{eq:resource-relative-entropy}
\end{equation}
The conditional-entropy lower bound \eqref{eq:rci-lower}
\cite[Lemma~5]{PlenioVirmaniPapadopoulos} gives the reverse inequality,
since $S(J_R)=\log_2d_m$. This proves \eqref{eq:conditional-ree}.
Finally, Bob's received numbers occupy orthogonal local subspaces,
so the local-flags identity \eqref{eq:flags} averages their REEs with
probabilities \eqref{eq:binomial} and proves \eqref{eq:sector-ree}.
\end{proof}

Only the action of the separable comparison on the support of the
conditional resource enters its relative entropy; the other
eigenvalues are not needed.

\begin{proof}[Proof of \Cref{cor:tmsv}]
Recall that $J_{\eta,N}$ is the output from sending one half of a
two-mode squeezed vacuum of local mean photon number $N$ through the
channel. Write $\tau_x$ for the one-mode thermal state with mean
photon number $x$. We use the sector comparison from \Cref{lem:resource},
starting from
the standard number-basis form of the two-mode squeezed
vacuum~\cite{BraunsteinLoock,Weedbrook}:
\begin{equation}
 \ket{\Psi_N}_{RA}=\sum_{r\ge0}
 \sqrt{\frac{N^r}{(1+N)^{r+1}}}\ket r_R\ket r_A,\qquad
 J_{\eta,N}=(\id\otimes\cL_\eta)(\proj{\Psi_N}).
\end{equation}
For $s$ copies, conditioning the reference on total number $M=m$ gives
the uniform maximally entangled sector input $\Phi_{s,m}$, with probability
\begin{equation}
 w_m=d_{s,m}\frac{N^m}{(1+N)^{s+m}}.
\end{equation}
Conditioning also on Bob's total number $K=k$ gives $J_{s,m,k}$, with
joint probability
\begin{equation}
 p_{m,k}=w_m\binom mk\eta^kq^{m-k}.
 \label{eq:joint-counts}
\end{equation}
Recall the separable conditional comparison $\sigma_{s,m,k}$ from
\eqref{eq:sigma}. Define the unit-trace separable state
\begin{equation}
 \Sigma_s=\sum_{m\ge0}\sum_{k=0}^m p_{m,k}\sigma_{s,m,k}.
 \label{eq:sigma-s}
\end{equation}
The series converges in trace norm, and separability follows from closure
of the separable set. Its blocks are orthogonal in local total-number
subspaces. On the support of $J_{s,m,k}$, \eqref{eq:eigen-proved} gives
eigenvalue $p_{m,k}/d_{s,m}$ for $\Sigma_s$.
These blocks together contain the support of $J_{\eta,N}^{\otimes s}$,
including its coherences between different number sectors.
Indeed, for any positive operator $T$,
$|\langle u|T|v\rangle|^2\le\langle u|T|u\rangle
\langle v|T|v\rangle$. A vector in the kernel of a diagonal
compression is therefore annihilated by $T$ itself. Applying this to
the local number blocks shows that off-diagonal coherences cannot
add support outside the direct sum of their diagonal supports. Thus
the logarithm of $\Sigma_s$ is well defined on the full state support.

Because $\log\Sigma_s$ is block diagonal, those coherences do not affect
the cross entropy. Thus
\begin{align}
 -\Tr(J_{\eta,N}^{\otimes s}\log_2\Sigma_s)
 &=\sum_{m,k}p_{m,k}[-\log_2p_{m,k}+\log_2d_{s,m}]\\
 &=H(M,K)+\EE_M\log_2d_{s,M}\\
 &=s g(N)+H(K|M).
 \label{eq:sigma-crossentropy}
\end{align}
The last equality uses the block decomposition of the thermal reference
marginal: $H(M)+\EE\log_2d_{s,M}=S(\tau_N^{\otimes s})=sg(N)$.
The loss environment marginal is $\tau_{qN}^{\otimes s}$, and the global
reference--output--environment state is pure. Hence
$S(J_{\eta,N}^{\otimes s})=sg(qN)$ and
\begin{equation}
 D(J_{\eta,N}^{\otimes s}\Vert\Sigma_s)
 =s c_\eta(N)+\EE_M H(\Bin(M,\eta))
 \le s c_\eta(N)+g(s\eta N).
 \label{eq:direct-upper}
\end{equation}
All entropies in this calculation are finite. In particular,
$H(K|M)\le H(K)\le g(s\eta N)$ and the reference entropy is $sg(N)$.
This proves $\ERi(J_{\eta,N})\le c_\eta(N)$.
For the lower bound in this infinite-dimensional setting, every separable
$\sigma_{RB}$ obeys $\sigma_{RB}\le\sigma_R\otimes I_B$
\cite{Reduction}. Operator monotonicity of the logarithm~\cite{Bhatia}
therefore gives
$D(J_{\eta,N}^{\otimes s}\Vert\sigma_{RB})\ge sg(N)-sg(qN)$,
as in the proof of Ref.~\cite[Theorem~4]{PlenioVirmaniPapadopoulos}.
Both entropies are finite, so the subtraction is well defined;
comparisons with infinite relative entropy satisfy the bound trivially.
Taking the infimum yields
$\ER(J_{\eta,N}^{\otimes s})\ge s c_\eta(N)$, completing
\eqref{eq:tmsv-ree}.
\end{proof}

\subsection{Replacing the sector label by the energy constraint}
We now average the sector cost so that the bound depends only on the
mean input energy. Let $M$ denote the total input photon number in a
block of $s$ modes, and define
\begin{equation}
 x:=\frac{\EE M}{s}\ge0.
 \label{eq:block-mean-definition}
\end{equation}
Thus $x$ is the mean transmitted photon number per mode and $sx$ is
the total block mean. Recall that $q=1-\eta$,
$c_\eta(x)=g(x)-g(qx)$, and
$F_{\eta,s}(m)=\ER(J_{s,m})$ is the fixed-sector resource cost from
\Cref{lem:resource}.

\begin{lemma}[Average resource cost]\label{lem:energy}
For every $s,m$,
\begin{equation}
 F_{\eta,s}(m)\le s c_\eta(m/s)+H(\Bin(m,q)).
 \label{eq:resource-upper}
\end{equation}
Consequently, for a nonnegative integer-valued $M$ with finite
mean $\EE M=sx$, where $x$ is defined in \eqref{eq:block-mean-definition},
\begin{equation}
 \EE F_{\eta,s}(M)\le s c_\eta(x)+g(sqx).
 \label{eq:main-average-cost}
\end{equation}
\end{lemma}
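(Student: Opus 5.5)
The plan has two parts. The first is a deterministic comparison between the combinatorial sector dimensions and the thermal entropy function; this proves \eqref{eq:resource-upper}. The second averages that bound over $M$ using concavity and the scalar entropy estimate of \Cref{lem:scalar}; this proves \eqref{eq:main-average-cost}. Write $G(t):=s\,g(t/s)$. Then $G$ is concave on $[0,\infty)$, and $G'(t)=\log_2(1+s/t)$.

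\emph{Step 1 (telescoping against $G$).} Start from \Cref{lem:resource}, $F_{\eta,s}(m)=\log_2 d_{s,m}-\EE_L\log_2 d_{s,L}$ with $L\sim\Bin(m,q)$. Using $d_{s,j}/d_{s,j-1}=(s+j-1)/j$, one has for every $0\le\ell\le m$
\[
 \log_2 d_{s,m}-\log_2 d_{s,\ell}
 =\sum_{j=\ell+1}^{m}\log_2\Bigl(1+\frac{s-1}{j}\Bigr)
 \le\int_{\ell}^{m}\log_2\Bigl(1+\frac{s}{t}\Bigr)\dd t
 =G(m)-G(\ell).
\]
The inequality holds termwise: for $t\in[j-1,j]$ we have $s/t\ge s/j\ge(s-1)/j$. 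Taking the expectation over $L$ gives $F_{\eta,s}(m)\le G(m)-\EE G(L)$. It then remains to show
\[
 G(qm)-\EE G(L)\le H(\Bin(m,q)),
\]
because $G(m)-G(qm)=s\,c_\eta(m/s)$.

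\emph{Step 2 (Jensen gap versus entropy) — the main obstacle.} By concavity, $G(qm)\ge\EE G(L)$, so this step bounds a genuine Jensen gap by the entropy of $L$. The sanity check $s=1$ passes trivially, since $F_{\eta,1}\equiv0$. My first attempt is a tangent-line estimate. Concavity gives $G(qm)-G(\ell)\le G'(\ell)(qm-\ell)$, so I would bound $\EE[(qm-L)\log_2(1+s/L)]$ using the binomial tails. Near $L=0$ the slope blows up, and the term $\Pr(L=0)\,G(qm)$ would be controlled by $-\log_2\Pr(L=0)$, which is a contribution to $H(L)$. If this becomes unwieldy, I would switch to a maximum-entropy representation: $G(t)$ is the largest entropy of an $s$-mode state of total mean $t$. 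Realizing a candidate state as a binomial mixture over $\ell$ of states of mean $\ell$ would give $G(qm)\le H(L)+\EE G(L)$, up to the mixing-entropy bound $S(\sum p_\ell\rho_\ell)\le H(p)+\sum p_\ell S(\rho_\ell)$, applied in the favorable direction. The difficulty is that mixtures only bound $G(qm)$ from below. The useful direction therefore has to come from a decomposition of the thermal state $\tau_{qm/s}^{\otimes s}$ itself, with $L$ playing the role of a coarse-grained photon count. I expect this step to require the most care.

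\emph{Step 3 (averaging).} Apply \eqref{eq:resource-upper} at $m=M$ and take expectations. Concavity of $c_\eta$ (\Cref{lem:scalar}) and Jensen give $\EE\,s\,c_\eta(M/s)\le s\,c_\eta(x)$. Next, let $L$ be the total lost photon number, so that $L\mid M\sim\Bin(M,q)$. Then $\EE_M H(\Bin(M,q))=H(L\mid M)\le H(L)$. Since $\EE L=sqx$, \Cref{lem:scalar} gives $H(L)\le g(sqx)$. This yields \eqref{eq:main-average-cost}. All quantities here are finite because $\EE M<\infty$.
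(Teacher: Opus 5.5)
Your Step~1 and Step~3 are correct. The telescoping bound $\log_2(1+(s-1)/j)\le\int_{j-1}^{j}\log_2(1+s/t)\,\dd t$ is valid, including $\ell=0$. Your averaging step is the paper's argument verbatim. The gap is Step~2: after the telescoping, all of \eqref{eq:resource-upper} rests on $G(qm)-\EE G(L)\le H(\Bin(m,q))$, and you do not prove it.

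Neither sketched route closes this step:
\begin{itemize}
\item The mixture/maximum-entropy argument points the wrong way, as you note yourself.
\item The tangent-line plan fails as described. You cannot charge the $L=0$ term $\Pr(L=0)\,G(qm)$ to $-\Pr(L=0)\log_2\Pr(L=0)$. Indeed $\Pr(L=0)=\eta^m$ is independent of $s$, whereas $G(qm)\sim qm\log_2 s$. The $\log_2 s$ contributions cancel only after summing against the terms with $L>qm$.
\item More fundamentally, this is not a generic ``Jensen gap $\le$ entropy'' fact. Take $L\in\{0,K\}$ with $\Pr(L=K)=p$, $pK=1$ and $s\gg K$. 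Then $G(\EE L)-\EE G(L)\approx\log_2(1/p)$, far above $h_2(p)$. Any proof must therefore use the binomial law.
\end{itemize}

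The paper avoids this step entirely with a relative-entropy comparison. With $y=m/s$, it uses $D(\pi_{s,m}\Vert\tau_y^{\otimes s})=sg(y)-\log_2d_{s,m}$ and sends both states through the lost-photon channel $\cL_q^{\otimes s}$. The images are $\sum_\ell\Pr(L=\ell)\,\pi_{s,\ell}$ and $\tau_{qy}^{\otimes s}$, with relative entropy $sg(qy)-H(L)-\EE\log_2d_{s,L}$. Data processing then gives \eqref{eq:resource-upper} directly, with no telescoping. This is the ``thermal decomposition with $L$ as a coarse-grained count'' you were reaching for, but phrased with relative entropy rather than entropy.

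Your own route can also be completed elementarily:
\begin{itemize}
\item Write $G(t)=\Psi(s+t)-\Psi(t)-\Psi(s)$ with $\Psi(u)=u\log_2u$. Convexity of $\Psi$ gives $G(qm)-\EE G(L)\le\EE[L\log_2L]-\mu\log_2\mu$, where $\mu=qm$.
\item Insert $\log_2\binom m\ell\le m\,h_2(\ell/m)$ into $H(L)=-\EE\log_2\binom mL-\mu\log_2q-(m-\mu)\log_2\eta$. This shows that $H(L)$ exceeds that quantity by at least $\EE[K\log_2K]-\EE K\log_2\EE K\ge0$, where $K=m-L$.
\item Better still, the size-bias identity $\EE[L\log_2L]=\mu\,\EE\log_2(1+L')$ with $L'\sim\Bin(m-1,q)$, together with Jensen, yields $F_{\eta,s}(m)\le s\,c_\eta(m/s)+\eta\log_2e$. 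This is a constant overhead instead of $H(\Bin(m,q))$, although the lemma does not need it.
\end{itemize}
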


\begin{proof}
Take $y=m/s>0$ and let $\tau_y$ be the geometric thermal state of mean
$y$, with entropy $g(y)$~\cite{HolevoBook,Weedbrook}. Its product is constant on the $m$-photon sector, so
\begin{equation}
 D(\pi_{s,m}\Vert\tau_y^{\otimes s})=s g(y)-\log_2d_{s,m}.
 \label{eq:main-thermal-input}
\end{equation}
Send both states through $\cL_q^{\otimes s}$, the channel to the lost
photons. Their images are, respectively,
$\sum_\ell\Pr(L=\ell)\pi_{s,\ell}$ with $L\sim\Bin(m,q)$,
and $\tau_{qy}^{\otimes s}$. The output relative entropy is
\begin{equation}
 s g(qy)-H(L)-\EE\log_2d_{s,L}.
 \label{eq:main-thermal-output}
\end{equation}
Data processing says \eqref{eq:main-thermal-input} is at least
\eqref{eq:main-thermal-output}; rearranging proves
\eqref{eq:resource-upper}. For $m=0$ all terms vanish. Thermal states
are reference states for this comparison; no input optimization is used.

By \Cref{lem:scalar}, $c_\eta$ is increasing and concave, and an
integer-valued count of mean $z$ has entropy at most $g(z)$.
Define $L|M=m\sim\Bin(m,q)$.
Then $\EE L=sqx$, and
\begin{equation}
 \EE F_{\eta,s}(M)
 \le s\EE c_\eta(M/s)+H(L|M)
 \le s c_\eta(x)+H(L)
 \le s c_\eta(x)+g(sqx).
\label{eq:average-resource-chain}
\end{equation}
This proves \eqref{eq:main-average-cost}.
\end{proof}

\subsection{Combining the tools for a general input}\label{sec:truncation}
We extend the sector estimate to arbitrary multimode inputs, including
coherence between different total photon numbers and correlations with
the parties' memories. The resulting bound uses only the mean photon
number per mode $x$ and two scalar count-entropy corrections.

\begin{lemma}[Block entanglement gain]\label{prop:block}
Let $\rho_{XA^sY}$ have $\ER(XA^s:Y)_\rho<\infty$ and
$\Tr(\hat N_A\rho)=sx<\infty$, where
$\hat N_A=\sum_{j=1}^s\hat n_{A_j}$. For
$\omega=(\id_{XY}\otimes\cL_\eta^{\otimes s})(\rho)$,
\begin{equation}
 \ER(X:B^sY)_\omega
 \le\ER(XA^s:Y)_\rho+s c_\eta(x)+g(s\eta x)+g(sqx).
 \label{eq:block}
\end{equation}
The memories $X,Y$ need not be finite-dimensional or unentangled.
\end{lemma}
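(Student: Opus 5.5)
The plan is to first reduce to inputs with a definite total photon number on Alice's side, and then simulate the loss sector by sector using \Cref{prop:sector}. Concretely, I work in two stages.

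\textbf{Stage 1: removing intersector coherence on $A^s$.} The channel $\cL_\eta^{\otimes s}$ does not commute with pinching of the input total number $M$. Instead, I pinch the output. Let $\{P_k^B\}$ be Bob's output total-number projectors. By \Cref{lem:pinching}, applied with Bob holding $B^sY$,
\[
\ER(X:B^sY)_\omega\le \ER(X:B^sY)_{\cP\omega}+H(K),
\]
where $K$ is the received count. Since $\EE K=s\eta x$, \Cref{lem:scalar} gives $H(K)\le g(s\eta x)$; this produces the term $g(s\eta x)$.

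Next I must relate $\cP\omega$ to the input sectors. Pure loss maps the input coherence $P_m\rho P_{m'}$ into terms $V_{m,k}\,\cdot\,V_{m',k'}^\dagger$ with $m-k$ and $m'-k'$ lost photons in the environment. Tracing out the environment kills every term with $m-k\neq m'-k'$, because the environment total numbers are orthogonal. After output pinching we also need $k=k'$, and together these force $m=m'$. Hence
\[
\cP\omega=\cP\bigl[(\id\otimes\cL_\eta^{\otimes s})(\textstyle\sum_m P_m\rho P_m)\bigr],
\]
so the output-pinched state depends only on the input-pinched state $\bar\rho=\sum_m p_m\rho_m$. I expect this bookkeeping to be the main obstacle. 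It needs the explicit isometry \eqref{eq:sector-loss-isometry}, and it must be checked with unbounded memories, where it should pass through by a truncation argument.

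Output pinching is a local operation on Bob, so it is LOCC and cannot increase REE. Therefore
\[
\ER(\cP\omega)\le \ER\bigl((\id\otimes\cL_\eta^{\otimes s})(\bar\rho)\bigr).
\]

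\textbf{Stage 2: sectorwise simulation with a flag.} Alice can measure $M$ on $A^s$ (a local operation) before simulating. Conditioned on $M=m$, \Cref{prop:sector} writes the branch output as $\mathcal T_{s,m}(\rho_m\otimes J_{s,m})$. The overall map, which measures $m$ and then applies $\mathcal T_{s,m}$ with the $m$-th resource, is LOCC on the state $\bar\rho\otimes\bigoplus_m p_m J_{s,m}$. Here the resource is correlated with the classical flag $m$, which Alice learns and publicly announces.

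To make this rigorous, I write the state as $\sum_m p_m\proj m\otimes\rho_m\otimes J_{s,m}$ with the flag held by Alice, or equivalently indicated by the orthogonal sector supports of $A^s$. Applying monotonicity, then \eqref{eq:flags} and \eqref{eq:subadd}, gives
\[
\ER\bigl(\cL_\eta^{\otimes s}(\bar\rho)\bigr)\le \sum_m p_m\bigl[\ER(\rho_m)+F_{\eta,s}(m)\bigr].
\]

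I then need $\sum_m p_m\ER(\rho_m)\le\ER(\rho)$. This follows from \eqref{eq:strong}, since measuring $M$ is a local instrument on Alice's side. Finally, \Cref{lem:energy} bounds
\[
\sum_m p_m F_{\eta,s}(m)\le s c_\eta(x)+g(sqx).
\]
Combining the two stages yields \eqref{eq:block}.

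\textbf{Technical points.} The infinite-dimensional issues arise from infinite mode spaces and memories. I handle them by applying the finite-dimensional lemmas to truncations: project $X$ and $Y$ onto finite subspaces and cap $M$. Lower semicontinuity of relative entropy and the trace-norm convergence of the truncated states then pass the bound to the limit. The hypothesis $\ER(\rho)<\infty$ guarantees that no $\infty-\infty$ subtraction occurs.
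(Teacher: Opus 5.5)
Your finite-dimensional argument follows the paper's proof. You pinch Bob's total received number at cost $H(K)\le g(s\eta x)$ via \Cref{lem:pinching}, identify the pinched output with $\cL_\eta^{\otimes s}(\bar\rho)$, simulate each sector with $J_{s,m}$, and close with \eqref{eq:strong} and \Cref{lem:energy}. Your flag-plus-monotonicity step is interchangeable with the paper's use of convexity. The step you expected to be the main obstacle is immediate. Total-number pinching is the average over phase rotations $e^{i\phi\hat N}$, and phase covariance of pure loss gives $\cP_B\circ\cL_\eta^{\otimes s}=\cL_\eta^{\otimes s}\circ\cP_A$ for arbitrary memories. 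Hence $\cP_B\omega=\cL_\eta^{\otimes s}(\bar\rho)$ exactly, with no isometry bookkeeping and no truncation needed.

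The genuine gap is the infinite-dimensional limit. Having bounded $\ER(\omega_t)$ for truncated states $\omega_t\to\omega$, you need $\ER(\omega)\le\liminf_t\ER(\omega_t)$, which is lower semicontinuity of the REE itself. This does not follow from lower semicontinuity of the relative entropy. Near-optimal separable comparisons $\sigma_t$ for $\omega_t$ need not have a trace-norm convergent subsequence: the separable set is not compact in infinite dimensions, and a bounded relative entropy does not prevent a fixed fraction of the weight of $\sigma_t$ from escaping to infinity. The paper closes this step with the nontrivial lower semicontinuity of the unregularized REE on the full infinite-dimensional state space with the closed separable set, \cite[Corollary~20]{LamiShirokov}.

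Your truncation also needs more care. Use trace-preserving local channels that keep the finite-rank block and replace the discarded weight by a fixed local state (vacuum on the optical input). Then $\ER(\rho_t)\le\ER(\rho)$ and $x_t\le x$ hold exactly. Bare projection with renormalization is a local filter, for which \eqref{eq:strong} only gives $\ER(\rho_t)\le\ER(\rho)/p_t$ and $x_t\le x/p_t$. That still suffices as $p_t\to1$, but it must be stated. Finally, the input photon cap is what makes $\omega_t$ finite-dimensional, since loss cannot increase photon number. This is needed before \Cref{lem:pinching} can be applied.
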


\begin{proof}
First suppose the memories are finite-dimensional and the optical input
has a finite total-number cutoff. Measure the total input number $M$,
obtaining probabilities $p_m$ and conditional states $\rho_m$.
Let $\omega_m=\cL_\eta^{\otimes s}(\rho_m)$ and let $\cP_A,\cP_B$
denote total-number pinchings at input and output. Phase covariance gives
\begin{equation}
 \cP_B\omega=\cL_\eta^{\otimes s}(\cP_A\rho)
                   =\sum_m p_m\omega_m.
 \label{eq:main-phase}
\end{equation}
Using the input cut $XA^s:Y$ and the output cut $X:B^sY$, we obtain
\begin{align}
 \ER(\cP_B\omega)
 &\le\sum_m p_m\ER(\omega_m)
                   &&\text{by convexity}\\
 &\le\sum_m p_m\bigl[\ER(\rho_m)+F_{\eta,s}(m)\bigr]
                   &&\text{by \Cref{prop:sector,lem:resource}}\\
 &\le\ER(\rho)+s c_\eta(x)+g(sqx)
                   &&\text{by \eqref{eq:strong} and \Cref{lem:energy}}.
 \label{eq:main-block-chain}
\end{align}
The second line uses LOCC monotonicity and subadditivity on the conditional
product $\rho_m\otimes J_{s,m}$. It requires no independence between
$M$ and the memories.

Let $K$ be Bob's total received number. Its mean is $s\eta x$.
\Cref{lem:pinching} and the count-entropy bound yield
$\ER(\omega)\le\ER(\cP_B\omega)+H(K)
\le\ER(\cP_B\omega)+g(s\eta x)$, proving \eqref{eq:block} in
finite dimensions.

\smallskip\noindent\emph{Extension to arbitrary memories and mean-energy inputs.}
Let now $\rho_{XA^sY}$ have arbitrary separable memory spaces, finite
$\ER(XA^s:Y)_\rho$, and finite total transmitted mean $sx$.
Choose increasing finite-rank projectors $R_{X,t}$ and $R_{Y,t}$ converging
strongly to the respective identities. Choose unit vectors $\ket{0_X}$,
$\ket{0_Y}$ in their ranges. On either memory $Z=X,Y$, define the local
channel
\begin{equation}
 \mathcal C_{Z,t}(T)=R_{Z,t}TR_{Z,t}
       +\Tr[(I-R_{Z,t})T]\proj{0_Z}.
\end{equation}
On the optical input use $Q_t=\sum_{m=0}^tP^A_m$ and
\begin{equation}
 \mathcal C_{A,t}(T)=Q_tTQ_t+\Tr[(I-Q_t)T]\proj{0^{\otimes s}}.
 \label{eq:vac-truncate}
\end{equation}
These are trace-preserving local channels. The output state
\begin{equation}
 \rho_t=(\mathcal C_{X,t}\otimes\mathcal C_{A,t}\otimes
          \mathcal C_{Y,t})(\rho)
\end{equation}
has finite-dimensional support and converges to $\rho$ in trace norm.
This follows from $R_{Z,t},Q_t\to I$ strongly and trace-class approximation;
equivalently, each discarded probability tends to zero and the off-diagonal
blocks vanish by the gentle measurement
estimate~\cite{WinterGentle,Watrous}.

Because $Q_t$ commutes with $\hat N_A$ and the replacement state is vacuum,
\begin{equation}
 sx_t:=\Tr(\hat N_A\rho_t)
 =\Tr(\hat N_A Q_t\rho_A Q_t)\le sx,
 \qquad \ER(\rho_t)\le\ER(\rho).
 \label{eq:truncate-bounds}
\end{equation}
The memory channels do not change the optical marginal. Pure loss cannot
increase the total photon cutoff, so
$\omega_t=(\id\otimes\cL_\eta^{\otimes s})(\rho_t)$ also has finite
support. It converges in trace norm to $\omega$ by channel contractivity.
The finite-support inequality gives
\begin{align}
 \ER(\omega_t)
 &\le\ER(\rho_t)+s c_\eta(x_t)+g(s\eta x_t)+g(sqx_t)\\
 &\le\ER(\rho)+s c_\eta(x)+g(s\eta x)+g(sqx).
\end{align}
The last inequality uses monotonicity of $g$ and $c_\eta$.
REE is lower semicontinuous on the full infinite-dimensional state space
with the closed separable set~\cite[Corollary~20]{LamiShirokov}. Consequently
\begin{equation}
 \ER(\omega)\le\liminf_{t\to\infty}\ER(\omega_t)
 \le\ER(\rho)+s c_\eta(x)+g(s\eta x)+g(sqx).
 \label{eq:infinite-block}
\end{equation}
This proves \Cref{prop:block} without a memory entropy or energy constraint.
Only lower semicontinuity of the \emph{unregularized} REE was invoked.
There is no exchange of the truncation limit with an REE regularization limit.
\end{proof}

The channel-dependent inputs to this argument are phase covariance,
the averaged sector-resource cost, and the mean output photon number.
Appendix~\ref{sec:amp-block} reuses this argument for amplification,
adding the output truncation required when finite input support
does not imply finite output support.

Only two scalar entropies remain: $g(sqx)$ bounds uncertainty in the loss
count, and $g(s\eta x)$ bounds the cost of output number pinching. Both
are $O(\log s)$ for fixed $x$. Pinching every mode separately would
generally produce an extensive overhead and would not serve this purpose.
\par\smallskip\noindent\emph{Consequence: regularized one-use gain.}
The block estimate also gives a single-use bound that includes
correlations with both parties' memories. Such entanglement-gain
estimates underlie earlier adaptive relative-entropy
bounds~\cite{ChristandlMullerHermes}.
For a fixed input $\rho_{XAY}$ with finite $\ER(XA:Y)_\rho$ and
finite mean transmitted number $x$, let
$\omega=(\id_{XY}\otimes\cL_\eta)(\rho)$. Then
\begin{equation}
 \ERi(X:BY)_\omega\le\ERi(XA:Y)_\rho+c_\eta(x).
 \label{eq:amort}
\end{equation}
Indeed, applying \Cref{prop:block} to $s$ copies of this fixed input gives
\begin{equation}
 \frac{\ER(\omega^{\otimes s})}{s}
 \le\frac{\ER(\rho^{\otimes s})}{s}+c_\eta(x)
     +\frac{g(s\eta x)+g(sqx)}{s}.
\label{eq:regularized-gain-copies}
\end{equation}
The output REE is finite by \Cref{prop:block}. Taking $s\to\infty$
and using $g(sy)/s\to0$ proves \eqref{eq:amort}. These copies define
the regularization; they are not successive uses of the channel.
For the capacity proofs, the next subsection applies the finite-block
estimate directly to replicas of the entire adaptive protocol.

\subsection{Replicas of the entire adaptive protocol}
The large block needed to suppress the scalar overhead is built from
independent copies of the entire protocol. At a fixed channel round,
their transmitted modes can be grouped without regrouping successive
inputs within any run; see Fig.~\ref{fig:adaptive-replicas}.

\begin{figure}[!htbp]
\centering
\begingroup
\definecolor{ink}{HTML}{203045}
\definecolor{muted}{HTML}{637080}
\definecolor{teal}{HTML}{197D88}
\definecolor{tealfill}{HTML}{EAF5F5}
\definecolor{amber}{HTML}{A56720}
\definecolor{amberfill}{HTML}{FBF4E8}
\definecolor{rulegray}{HTML}{D1D9E1}
\definecolor{palefill}{HTML}{F5F7FA}

\tikzset{
  every node/.style={text=ink},
  panel/.style={font=\bfseries,anchor=west},
  note/.style={font=\small,text=muted,align=center},
  arrow/.style={draw=ink,line width=.85pt,
    -{Latex[length=2mm,width=1.4mm]}},
  memory/.style={draw=muted,line width=.6pt,
    -{Latex[length=1.65mm,width=1.1mm]}},
  locc/.style={draw=amber,fill=amberfill,line width=.75pt,
    rounded corners=2pt,minimum width=1.75cm,minimum height=.78cm,
    align=center},
  channel/.style={draw=teal,fill=white,line width=1pt,
    rounded corners=2pt,minimum width=1.3cm,minimum height=.75cm},
  result/.style={draw=teal,fill=tealfill,line width=.7pt,
    rounded corners=3pt},
  proofbox/.style={draw=rulegray,fill=palefill,line width=.6pt,
    rounded corners=3pt}
}

\resizebox{0.98\linewidth}{!}{%
\begin{tikzpicture}[x=1cm,y=1cm]

\draw[memory] (3.12,14.83)--(13.03,14.83);
\node[note,fill=white,inner xsep=6pt] at (8.08,14.83)
  {time within each replica};

\fill[tealfill,rounded corners=4pt] (7.18,8.75) rectangle (9.18,14.3);
\draw[teal,densely dashed,line width=.85pt,rounded corners=4pt]
  (7.18,8.75) rectangle (9.18,14.3);
\node[font=\small\bfseries,text=teal,fill=white,inner sep=4pt]
  at (8.18,14.3) {same round $i$};

\foreach \rep/\yy in {1/13.08,2/11.33,r/9.28} {
  \node[font=\small,anchor=west] at (.1,\yy) {Replica $\rep$};
  \node at (2.12,\yy) {$\cdots$};
  \node[locc] (pre\rep) at (4.02,\yy)
    {\footnotesize LOCC\\[-1pt] $\Lambda_{i-1}^{(\rep)}$};
  \node[channel] (ch\rep) at (8.18,\yy) {$\mathcal L_\eta$};
  \node[locc] (post\rep) at (12.32,\yy)
    {\footnotesize LOCC\\[-1pt] $\Lambda_i^{(\rep)}$};
  \node at (14.34,\yy) {$\cdots$};
  \node at (16.08,\yy) {$\zeta_n^{(\rep)}$};
  \draw[arrow] (2.48,\yy)--(pre\rep.west);
  \draw[arrow] (pre\rep.east)--(ch\rep.west)
    node[midway,below=2pt] {$A_i^{(\rep)}$};
  \draw[arrow] (ch\rep.east)--(post\rep.west)
    node[midway,below=2pt] {$B_i^{(\rep)}$};
  \draw[arrow] (post\rep.east)--(13.99,\yy);
  \draw[arrow] (14.7,\yy)--(15.57,\yy);
  \draw[memory] (pre\rep.north) -- ++(0,.45)
    -| (post\rep.north);
  \node[font=\footnotesize,text=muted,fill=white,inner sep=2pt]
    at (10.22,{\yy+.84}) {$X_i^{(\rep)},\,Y_i^{(\rep)}$};
}
\foreach \xx in {.82,4.02,8.18,12.32,16.08} {
  \node[text=muted] at (\xx,10.65) {$\vdots$};
}
\end{tikzpicture}
}
\endgroup
\caption{Replica grouping for an adaptive protocol. Each row is an
independent copy of the same fixed $n$-use protocol. The highlighted
column groups the $i$th transmission across $r$ replicas and forms the
block of size $s=r$ used in \Cref{prop:block}. Gray wires retain the
local memories $X_i^{(j)},Y_i^{(j)}$ and classical records. The LOCC
order and feedback within every row are unchanged. Applying the block
inequality column by column and using LOCC monotonicity gives
\eqref{eq:adaptive-replicas}. The limit $r\to\infty$ is taken with
$n$ and the protocol fixed, before the capacity limit $n\to\infty$.}
\label{fig:adaptive-replicas}
\end{figure}

\begin{lemma}[Finite-replica adaptive bound]\label{lem:adaptive}
Fix an $n$-use adaptive protocol with unconditional energies $x_i$
and final state $\zeta_n$. For every positive integer $r$,
\begin{equation}
 \ER(\zeta_n^{\otimes r})
 \le r\sum_{i=1}^n c_\eta(x_i)
       +\sum_{i=1}^n[g(rqx_i)+g(r\eta x_i)].
 \label{eq:adaptive-replicas}
\end{equation}
The state includes any additional local registers retained during the protocol.
\end{lemma}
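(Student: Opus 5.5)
The plan is to run the $r$ replicas of the fixed $n$-use protocol in parallel and follow the entanglement across the cut Alice$:$Bob round by round. The replicated state at each stage is then bounded by telescoping \Cref{prop:block} over the $n$ channel rounds. Let $\rho^{(i)}_{X_iA_iY_i}$ be the state just before use $i$ in one run and $\omega^{(i)}$ the state just after it. The replicated state before round $i$ is $(\rho^{(i)})^{\otimes r}$, and the $r$ transmitted modes $A_i^{(1)}\cdots A_i^{(r)}$ form a block of $s=r$ modes with total mean photon number $r x_i$. This holds exactly because the replicas are independent and $x_i$ is the unconditional mean of each copy.

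\emph{Channel rounds.} Apply \Cref{prop:block} with $X=X_i^{\otimes r}$ and $Y=Y_i^{\otimes r}$, taking $A^s$ to be the grouped round-$i$ inputs, so that $s=r$ and the per-mode mean is $x=x_i$. This gives
\begin{equation}
 \ER\bigl((\omega^{(i)})^{\otimes r}\bigr)\le\ER\bigl((\rho^{(i)})^{\otimes r}\bigr)+r c_\eta(x_i)+g(r\eta x_i)+g(rqx_i).
\end{equation}
The lemma allows arbitrary correlations between the optical block and the memories, and it allows infinite-dimensional memories. The feedback structure within each row is therefore irrelevant at this step.

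\emph{LOCC rounds.} Between channel uses, each replica applies the LOCC map $\Lambda_i$. Thus $\Lambda_i^{\otimes r}$ is LOCC with respect to Alice$:$Bob. It is separability preserving, so REE monotonicity gives $\ER((\rho^{(i+1)})^{\otimes r})\le\ER((\omega^{(i)})^{\otimes r})$. This covers finitely many rounds with countable outcomes. The outcome records either stay in local registers or are discarded, both of which are separability preserving.

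\emph{Endpoints.} The initial state consists of local preparations and public randomness. It is therefore separable, and its $r$-fold tensor power is separable too, so the initial REE is zero. The final state $\zeta_n$, including any retained registers, is obtained from $\omega^{(n)}$ by a last LOCC map, so the same monotonicity applies. Chaining all $n$ channel steps and the intervening LOCC steps yields \eqref{eq:adaptive-replicas}.

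The main obstacle is a technical hypothesis of \Cref{prop:block}: it requires the input REE to be finite, and we must avoid subtracting infinities. This is settled by induction on $i$. The initial REE is zero, and each step adds only finite terms, since every $x_i<\infty$ under the finite budget. Hence the REE stays finite at every stage.

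A second care point is the cut. At round $i$, Alice's side is $X_i^{\otimes r}A_i^{\otimes r}$ and after the channel the outputs $B_i$ join Bob. This matches exactly the input cut $XA^s:Y$ and output cut $X:B^sY$ used in \Cref{prop:block}.

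Finally, the public randomness and the shared transcript must count as separable correlations. A public classical register copied to both parties is separable, so these correlations do not break the argument.
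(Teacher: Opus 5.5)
Your proposal is correct and follows essentially the same route as the paper: it groups the $i$th transmissions of $r$ independent replicas into a block with $s=r$, applies \Cref{prop:block} with memories $X_i^{\otimes r}$ and $Y_i^{\otimes r}$, uses LOCC monotonicity under $\Lambda_i^{\otimes r}$, and telescopes from the zero REE of the separable initial state through the final LOCC. The only cosmetic difference is in the finiteness check: the paper uses the $s=1$ case of \Cref{prop:block} on a single run, whereas you induct directly on the $r$-fold states, which is equally valid.
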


\begin{proof}
All intermediate unregularized REEs are finite. The initial state
is separable, every $x_i$ is finite, and \Cref{prop:block} with
$s=1$ adds a finite amount at each transmission. LOCC monotonicity
then proves finiteness inductively.

Run $r$ independent copies of this fixed protocol, each following
its own outcomes and feedback. Let $\rho^{(i)}$ and $\omega^{(i)}$
be the states immediately before and after call $i$ in one run,
including all memories and stored records. Group the $i$th
transmissions from the $r$ runs. Their total mean number is $rx_i$.
\Cref{prop:block} gives
\begin{equation}
 \ER((\omega^{(i)})^{\otimes r})
 \le \ER((\rho^{(i)})^{\otimes r})
       +r c_\eta(x_i)+g(rqx_i)+g(r\eta x_i).
 \label{eq:adaptive-stage}
\end{equation}
Only the transmitted modes change sides of the Alice--Bob cut.
An adaptive local operation is a fixed channel on memories and
records, controlled by the stored history. If $\Lambda_i$ is the
between-call LOCC in one run, its replicated version
$\Lambda_i^{\otimes r}$ is LOCC. Hence
\begin{equation}
 \ER((\rho^{(i+1)})^{\otimes r})
 \le \ER((\omega^{(i)})^{\otimes r}).
\label{eq:replicated-locc-monotonicity}
\end{equation}
Telescope with \eqref{eq:adaptive-stage}, starting from zero REE
and including the final LOCC, to obtain
\eqref{eq:adaptive-replicas}.
\end{proof}

Independence is used between replicas, not between successive
inputs within a run. Each run preserves its causal order. The
pinching in \Cref{prop:block} is only a step in proving an
inequality for its actual, unpinched output.

Fix the protocol before taking $r\to\infty$. Since $g(ry)/r\to0$
for every finite $y\ge0$, \eqref{eq:adaptive-replicas} implies
\begin{equation}
 \ERi(\zeta_n)\le\sum_{i=1}^n c_\eta(x_i).
 \label{eq:adaptive-regularized}
\end{equation}
Finiteness and subadditivity ensure that this regularization limit
exists. No additivity of REE is assumed. Concavity and monotonicity
of $c_\eta$, together with $\sum_i x_i\le nN$, now give the common adaptive REE bound
\begin{equation}
 \boxed{\ERi(\zeta_n)\le\sum_{i=1}^{n}c_\eta(x_i)
 \le n c_\eta\!\left(\frac1n\sum_i x_i\right)
 \le n c_\eta(N).}
 \label{eq:telescope}
\end{equation}
The replica limit is taken for each fixed $n$ and fixed protocol.
Equation~\eqref{eq:telescope} holds before any choice of a fidelity or
key-security test; local processing of the output preserves this bound.

The replica argument itself is channel independent. Suppose a
memoryless channel $\mathcal N$ obeys, for the same arbitrary-memory
states and every block size $s$,
\begin{equation}
 E_R(\omega)\le E_R(\rho)+s f(x)+h_s(x),
 \qquad \lim_{s\to\infty}\frac{h_s(x)}s=0,
 \label{eq:generic-block-hypothesis}
\end{equation}
where $f$ is finite, increasing, and concave, and $h_s(x)$ is finite
and nonnegative at each finite $x$. The preceding proof, with
$c_\eta$ and its overhead replaced by $f$ and $h_s$, gives
\begin{equation}
 E_R^\infty(\zeta_n)\le\sum_{i=1}^n f(x_i)\le n f(N).
 \label{eq:generic-adaptive-conclusion}
\end{equation}
The $s=1$ bound first ensures finite intermediate REEs, and the
replica limit is again taken at fixed protocol. This form will be
used for the amplifier extension.

\section{Two-way quantum capacity}\label{sec:quantum}
We apply the common adaptive bound \eqref{eq:telescope} to the
entanglement-generation formulation of quantum communication established
in Section~\ref{sec:results}. We first identify the physical systems
and the target state, then prove a finite-error converse and a matching
achievable rate. Private states are only needed in the next section.

Previously, an achievable rate was $c_\eta(N)$ and an available
upper bound was~\cite{DSW,PLOB}
\begin{equation}
 \min\!\left\{
 g\!\left(\frac{1+\eta}{2}N\right)
 -g\!\left(\frac q2N\right),\,-\log_2q\right\}.
 \label{eq:prior-upper}
\end{equation}
The matching converse below closes this finite-energy gap.

\subsection{Systems and adaptive quantum protocol}\label{sec:quantum-model}
This subsection specifies which systems stay in the laboratories,
which system crosses the channel, and which final state is tested for
successful quantum communication. Let $X_i$ be Alice's retained memory, $A_i$ her next transmitted optical
mode, and $Y_i$ Bob's retained memory just before use $i$. Memories
include local ancillas, stored quantum outputs, and classical records.
After the initial LOCC preparation $\Lambda_0$, the first input state
is separable across $X_1A_1:Y_1$. Later input states need not be
separable: earlier transmissions can have generated shared entanglement.
For the unconditional state $\rho^{(i)}_{X_iA_iY_i}$, the $i$th call gives
\begin{equation}
 \omega^{(i)}_{X_iB_iY_i}
 =\bigl(\id_{X_i}\otimes\cL_\eta^{A_i\to B_i}\otimes\id_{Y_i}\bigr)
       (\rho^{(i)}_{X_iA_iY_i}).
 \label{eq:adaptive-round-state}
\end{equation}
Only $A_i$ passes through the optical channel. Bob receives $B_i$ and
may process it jointly with $Y_i$; Alice keeps $X_i$. The Alice--Bob
bipartition therefore changes from $X_iA_i:Y_i$ to $X_i:B_iY_i$.
For $i<n$, an LOCC map $\Lambda_i$, controlled by the previous public
outcomes, produces the next input and memories:
\begin{equation}
 \rho^{(i+1)}_{X_{i+1}A_{i+1}Y_{i+1}}
       =\Lambda_i(\omega^{(i)}_{X_iB_iY_i}).
 \label{eq:adaptive-round-locc}
\end{equation}
Neither the transmitted mode nor either memory is assumed independent
of the others. The maps include averaging over all public branches,
while the records remain available for subsequent feedback~\cite{PLOB}.

The final LOCC $\Lambda_n$ produces a state
$\zeta_n$ on the retained laboratories $X_{\rm f}:Y_{\rm f}$.
Write $X_{\rm f}=\bar A F_A$ and $Y_{\rm f}=\bar B F_B$, where
$\bar A,\bar B$ are the target registers of dimension $d_n$ and
$F_A,F_B$ contain unused local systems. Entanglement distribution tests
\begin{equation}
 \rho_n^{\bar A\bar B}=\Tr_{F_AF_B}\zeta_n,\qquad
 f_n=\Tr(\Phi_{d_n}\rho_n),\qquad
 \varepsilon_n=1-f_n,\qquad R_n=\frac{\log_2d_n}{n}.
 \label{eq:quantum-output-registers}
\end{equation}
For quantum transmission, $\bar A$ is instead the untouched reference
of a locally prepared maximally entangled message--reference pair,
and $\bar B$ is Bob's decoded message; $f_n$ is then the entanglement
fidelity. We use the equivalent generation task for the converse.

The REE in \eqref{eq:telescope} concerns the joint state of Alice's and
Bob's retained laboratories, with the channel environments traced out.
It is entanglement across $X_{\rm f}:Y_{\rm f}$, rather than a sum of
quantities of the two separate memory marginals. Discarding $F_A,F_B$
is local, so the same upper bound holds for $\ERi(\rho_n)$.
The binary entropy $h_2$ used below is defined in
\eqref{eq:entropy-definitions}. Figure~\ref{fig:adaptive-quantum-protocol} displays the channel/LOCC
interleaving and the extraction of the final quantum target from the
retained local memories.

\begin{figure}[!htbp]
\centering

\begingroup
\definecolor{protink}{HTML}{203045}
\definecolor{protblue}{HTML}{295D83}
\definecolor{protbluefill}{HTML}{EEF5FA}
\definecolor{protamber}{HTML}{A46418}
\definecolor{protamberfill}{HTML}{FBF5E8}
\definecolor{protpurple}{HTML}{70518C}
\definecolor{protpurplefill}{HTML}{F3EDF8}
\definecolor{protteal}{HTML}{197D88}
\definecolor{protgray}{HTML}{C7D2DC}
\tikzset{
  every node/.style={text=protink,inner sep=2pt},
  pq/.style={draw=protblue,line width=.8pt,-{Latex[length=2mm,width=1.4mm]}},
  pqline/.style={draw=protblue,line width=.8pt},
  pc/.style={draw=protamber,densely dashed,line width=.85pt,
    -{Latex[length=2mm,width=1.4mm]}},
  pcc/.style={draw=protamber,densely dashed,line width=.85pt,
    {Latex[length=1.8mm]}-{Latex[length=1.8mm]}},
  pk/.style={draw=protteal,double,double distance=1.1pt,line width=.4pt,
    -{Latex[length=2mm,width=1.4mm]}},
  pe/.style={draw=protpurple,line width=.8pt,
    -{Latex[length=2mm,width=1.4mm]}},
  pgate/.style={draw=protblue,fill=protbluefill,line width=.8pt,rounded corners=2pt},
  plocal/.style={draw=protamber,fill=protamberfill,line width=.8pt,rounded corners=2pt},
  pstate/.style={draw=protpurple,fill=protpurplefill,line width=.8pt,rounded corners=2pt},
  ppanel/.style={font=\small\bfseries,anchor=west},
  powner/.style={font=\small\bfseries,anchor=west,text=protblue},
  plabel/.style={font=\small},
  psmall/.style={font=\footnotesize}
}

\resizebox{0.99\linewidth}{!}{%
\begin{tikzpicture}[x=1cm,y=.65cm,font=\small]
\path[use as bounding box] (0,-.4) rectangle (17,10.5);
\node[ppanel] at (0,10.2) {(a) Adaptive channel use and memory update ($i<n$)};
\node[powner] at (0,9.6) {Alice};
\node[powner] at (0,6.4) {Bob};
\draw[draw=protgray,dashed] (0,6.9)--(17,6.9);

\foreach \x in {1.5,11.5}{
  \draw[plocal] (\x,7.5) rectangle ++(1.7,1.65);
  \node at (\x+.85,8.32) {Local};
  \draw[plocal] (\x,4.95) rectangle ++(1.7,1.2);
  \node at (\x+.85,5.55) {Local};
  \draw[pcc] (\x+.85,6.28)--(\x+.85,7.38);
}
\node[plabel] at (2.35,9.58) {$\Lambda_{i-1}$};
\node[plabel] at (12.35,9.58) {$\Lambda_i$};
\node[psmall,anchor=west,text=protamber] at (2.65,6.76) {LOCC};
\node[psmall,anchor=west,text=protamber] at (12.65,6.76) {LOCC};
\node at (.55,8.72) {$\cdots$};
\node at (.55,5.4) {$\cdots$};
\draw[pq] (.82,8.72)--(1.5,8.72);
\draw[pq] (.82,5.4)--(1.5,5.4);

\draw[pq] (3.2,8.72)--(11.5,8.72);
\node[plabel,anchor=south] at (7.15,8.76) {$X_i$};
\draw[pgate] (6.2,6.3) rectangle (8.2,8.3);
\node at (7.2,7.3) {$\cL_\eta$};
\draw[pq] (3.2,7.95)--(6.2,7.95);
\node[plabel,anchor=south] at (4.65,7.99) {$A_i$};
\draw[pq] (8.2,6.65)--(9.55,6.65)--(9.55,5.85)--(11.5,5.85);
\node[plabel,anchor=south] at (10.5,5.88) {$B_i$};
\draw[pq] (3.2,5.3)--(11.5,5.3);
\node[plabel,anchor=south] at (7.15,5.34) {$Y_i$};
\draw[pq] (13.2,8.72)--(16.4,8.72);
\node[plabel,anchor=south] at (14.85,8.76) {$X_{i+1}$};
\draw[pq] (13.2,7.95)--(16.4,7.95);
\node[plabel,anchor=south] at (14.85,7.99) {$A_{i+1}$};
\draw[pq] (13.2,5.3)--(16.4,5.3);
\node[plabel,anchor=south] at (14.85,5.34) {$Y_{i+1}$};
\node at (16.75,8.3) {$\cdots$};
\node at (16.75,5.3) {$\cdots$};

\node[ppanel] at (0,3.9) {(b) Final LOCC and quantum target};
\draw[plocal] (3.0,.0) rectangle (5.0,2.9);
\node[align=center] at (4.0,1.45) {LOCC\\[2pt]$\Lambda_n$};
\node[plabel,anchor=south] at (1.4,2.4) {$X_n$};
\node[plabel,anchor=south] at (1.4,.65) {$B_nY_n$};
\draw[pq] (.25,2.3)--(3,2.3);
\draw[pq] (.25,.55)--(3,.55);
\draw[pstate,dashed] (5.25,.05) rectangle (8.92,3.05);
\draw[pq] (5,2.3)--(9.15,2.3);
\draw[pq] (5,.55)--(9.15,.55);
\node[plabel,anchor=south] at (7.1,2.36) {$X_{\rm f}=\bar A F_A$};
\node[plabel,anchor=south] at (7.1,.61) {$Y_{\rm f}=\bar B F_B$};
\node[psmall,text=protpurple] at (7.1,1.45) {$\zeta_n^{X_{\rm f}Y_{\rm f}}$};
\draw[pgate] (9.15,1.8) rectangle (11.05,2.8);
\draw[pgate] (9.15,.05) rectangle (11.05,1.05);
\node at (10.1,2.3) {$\Tr_{F_A}$};
\node at (10.1,.55) {$\Tr_{F_B}$};
\draw[pq] (11.05,2.3)--(14.35,2.3);
\draw[pq] (11.05,.55)--(14.35,.55);
\node[plabel,anchor=south] at (12.6,2.34) {$\bar A$};
\node[plabel,anchor=south] at (12.6,.59) {$\bar B$};
\draw[draw=protpurple,decorate,decoration={brace,amplitude=4pt}]
  (14.55,2.55)--(14.55,.3);
\node[align=center,text=protpurple] at (15.85,1.45)
  {$\rho_n^{\bar A\bar B}$\\[4pt]target $\Phi_{d_n}$};
\end{tikzpicture}%
}
\endgroup
\caption{Adaptive quantum transmission and entanglement distribution.
(a) The local operations within $\Lambda_{i-1}$ prepare Alice's mode
$A_i$ and retain memories $X_i,Y_i$. Only $A_i$ crosses the channel;
the next LOCC $\Lambda_i$ processes $X_i:B_iY_i$ and prepares the next
round. Dashed double arrows denote two-way public classical communication.
At the first round, $\Lambda_0$ is an initially unentangled preparation.
(b) After the last channel use, $\Lambda_n$ produces the retained state
$\zeta_n$. Local partial traces leave the two target registers.
For entanglement distribution, $\bar A$ is Alice's generated half of
the pair; for quantum transmission it is the untouched reference,
on which the protocol operations act as the identity, while $\bar B$
is Bob's decoded message.}
\label{fig:adaptive-quantum-protocol}
\end{figure}

\subsection{From entanglement fidelity to regularized REE}
We show that an output close to a maximally entangled target must
contain enough regularized REE. This converts the common entanglement
upper bound into a limit on the number of qubits generated.

\begin{lemma}[Maximally entangled-state test]\label{lem:quantum-test}
Let $\rho$ be a state on two registers of local dimension $d$, and let
$0\le\varepsilon<1$. If $\Tr(\Phi_d\rho)\ge1-\varepsilon$, then
\begin{equation}
 \ERi(\rho)\ge(1-\varepsilon)\log_2d-h_2(\varepsilon).
 \label{eq:quantum-test}
\end{equation}
\end{lemma}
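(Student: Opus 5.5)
The plan is to prove a one-shot hypothesis-testing bound on $t$ copies of $\rho$ that already grows linearly in $t$, and then divide by $t$. Write $f=\Tr(\Phi_d\rho)=1-\varepsilon'$ with $\varepsilon'\le\varepsilon$. First I reduce to the exact fidelity $f$. Define $\varphi(u)=(1-u)\log_2d-h_2(u)$. Then $\varphi'(u)=\log_2\bigl(u/(d(1-u))\bigr)$, so $\varphi$ decreases on $[0,d/(d+1)]$ and increases afterwards to $\varphi(1)=0$.
\begin{itemize}
\item If $\varepsilon\le d/(d+1)$, monotonicity gives $\varphi(\varepsilon')\ge\varphi(\varepsilon)$.
\item If $\varepsilon> d/(d+1)$, then $\varphi(\varepsilon)<0\le\ERi(\rho)$, and the claim is trivial.
\end{itemize}
It therefore suffices to show $\ERi(\rho)\ge f\log_2d-h_2(f)$, using $h_2(f)=h_2(\varepsilon')$.

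\textbf{Test measurement.} Fix $t$ and a separable $\sigma$ on the $t$-copy bipartition. On $(\bar A\bar B)^{\otimes t}$ take the commuting projectors $\Phi_d$ and $I-\Phi_d$ on each copy. Coarse-grain them to the projective measurement of the number $K$ of copies in the $\Phi_d$ outcome. The measurement is global, but that is harmless here: data processing of relative entropy holds for any channel, and we only need it for a fixed $\sigma$.
\begin{itemize}
\item Under $\rho^{\otimes t}$, the count satisfies $K\sim\Bin(t,f)$.
\item Under $\sigma$, $\Pr(K=k)\le\sum_{|S|=k}\Tr\bigl(\sigma\,\Phi_d^{\otimes S}\otimes I\bigr)$, because $I-\Phi_d\le I$ on the remaining copies.
\end{itemize}
The marginal of $\sigma$ on the copies in $S$ is separable across Alice:Bob, and $\Phi_d^{\otimes k}=\Phi_{d^k}$. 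The standard bound $\Tr(\Phi_D\sigma)\le1/D$ for separable $\sigma$ (checked on product pure states by Cauchy--Schwarz, then extended by convexity and closure) gives $q_k:=\Pr_\sigma(K=k)\le\binom tk d^{-k}$.

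\textbf{Relative-entropy estimate.} Data processing gives $D(\rho^{\otimes t}\Vert\sigma)\ge\sum_kp_k\log_2(p_k/q_k)$, where $p_k$ is the binomial law. This uses the usual extension to a subnormalized second argument; replacing $q_k$ by the larger values $\binom tk d^{-k}$ only lowers the sum. Hence
\begin{equation}
D(\rho^{\otimes t}\Vert\sigma)\ge -H(K)+\EE K\,\log_2d-\EE\log_2\binom tK\ge tf\log_2d-t\,h_2(f)-\log_2(t+1).
\end{equation}
Here I use $\binom tk\le2^{th_2(k/t)}$, concavity of $h_2$ through $\EE h_2(K/t)\le h_2(f)$, and $H(K)\le\log_2(t+1)$. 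Taking the infimum over $\sigma$, dividing by $t$ and letting $t\to\infty$ gives $\ERi(\rho)\ge f\log_2d-h_2(f)$. Combined with the reduction above, this proves \eqref{eq:quantum-test}.

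\textbf{Main obstacle.} The delicate step is the passage to $t$ copies. The naive test $\Phi_d^{\otimes t}$ has success probability $f^t\to0$, so it yields nothing after regularization. The remedy is the counting measurement together with the per-subset overlap bound $\binom tk d^{-k}$. The only other care points are the subnormalized $q$ and the sign and monotonicity bookkeeping for $\varphi$, both handled above.
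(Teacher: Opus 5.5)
Your proof is correct, and it shares the paper's core ingredients: independent $\{\Phi_d,I-\Phi_d\}$ tests on the copies of $\rho$, and the bound $\binom tk d^{-k}$ for separable states, obtained from $\Tr(\Phi_{d^k}\sigma_S)\le d^{-k}$ on each $k$-subset. Where you differ is in how data processing is applied.

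The paper coarse-grains to a single binary threshold test: accept if at least $r=\lfloor(1-\varepsilon-\delta)t\rfloor$ copies pass. It uses the weak law to get acceptance $a_t\to1$, applies binary data processing, and then needs binomial-coefficient asymptotics followed by $\delta\downarrow0$.

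You instead process into the whole distribution of the count $K$. With $\binom tk\le2^{th_2(k/t)}$, Jensen, and $H(K)\le\log_2(t+1)$, you obtain the explicit finite-$t$ inequality $\ER(\rho^{\otimes t})\ge t[f\log_2d-h_2(f)]-\log_2(t+1)$. This needs no law of large numbers and no slack parameter.

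The price is that your bound is stated for the exact overlap $f$. You therefore need the separate monotonicity and sign analysis of $\varphi$ to reach the hypothesis $f\ge1-\varepsilon$. The paper sets its threshold from $\varepsilon$ itself, so every $f\ge1-\varepsilon$ is handled at once and a negative right-hand side needs no special case. Both versions carry over verbatim to the shielded privacy test of Lemma~\ref{lem:privacy}, since tracing out untested copies together with their shields is local.

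A minor wording point: $q$ is a normalized distribution, so no \emph{subnormalized} extension of data processing is involved. Replacing $q_k$ by $\binom tk d^{-k}$, whose sum is $(1+1/d)^t>1$, is simply termwise monotonicity of the logarithm.
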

\begin{proof}
The case $d=1$ is immediate. A normalized product vector has squared
overlap at most $1/d$ with $|\Phi_d\rangle$, by the Cauchy--Schwarz inequality.
Convexity therefore gives $\Tr(\Phi_d\sigma)\le1/d$ for every separable
$\sigma$. For $r$ copies, $\Phi_d^{\otimes r}$ is a maximally entangled
projector of local dimension $d^r$. Every state separable across the
aggregate Alice--Bob cut consequently passes these $r$ tests with
probability at most $d^{-r}$, even if it correlates the copies.

Fix $0<\delta<1-\varepsilon$ and test $t$ copies independently with
$\{\Phi_d,I-\Phi_d\}$. Accept if at least
$r=\lfloor(1-\varepsilon-\delta)t\rfloor$ tests pass. Let $a_t$ denote
acceptance on $\rho^{\otimes t}$. The weak law gives $a_t\to1$.
For a globally separable comparison, any specified subset of $r$ tests
passes with probability at most $d^{-r}$; taking a union bound over
the $\binom tr$ subsets gives acceptance at most $\binom tr d^{-r}$.
Binary data processing for relative entropy~\cite{Lindblad,Watrous}
therefore yields
\begin{equation}
 \ER(\rho^{\otimes t})
 \ge a_t\left[r\log_2d-\log_2\binom tr\right]-h_2(a_t).
\label{eq:quantum-threshold-ree}
\end{equation}
Divide by $t$ and let $t\to\infty$. The binomial-coefficient
asymptotics~\cite{CoverThomas} give
$t^{-1}\log_2\binom tr\to h_2(1-\varepsilon-\delta)$.
Finally let $\delta\downarrow0$ and use
$h_2(1-\varepsilon)=h_2(\varepsilon)$.
\end{proof}

The threshold test is what makes the estimate a bound on
$\ERi$, rather than only on single-copy REE. It controls all
separable comparison states across the full tensor-product cut.

\subsection{Finite-error quantum weak converse}\label{sec:quantum-converse}
We combine the adaptive output bound with the entanglement-fidelity
test. The result limits the achievable quantum rate while retaining
the error of each finite protocol.

\begin{lemma}[Quantum finite-error bound]\label{lem:quantum-finite-error}
For an $n$-use protocol satisfying \eqref{eq:energy}, let $\rho_n$ be
its output on target registers of local dimension $d$.
If $F(\rho_n,\Phi_d)\ge1-\varepsilon$ with $0\le\varepsilon<1$, then
\begin{equation}
 (1-\varepsilon)\log_2d
 \le\sum_{i=1}^n c_\eta(x_i)+h_2(\varepsilon)
 \le n c_\eta(N)+h_2(\varepsilon).
 \label{eq:quantum-finite-error}
\end{equation}
Here $x_i$ is the unconditional mean input photon number at use $i$,
and $c_\eta(N)=g(N)-g((1-\eta)N)$.
\end{lemma}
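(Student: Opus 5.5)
The plan is to sandwich $\ERi(\rho_n)$ between the fidelity test of \Cref{lem:quantum-test} and the adaptive REE bound \eqref{eq:telescope}.

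\textbf{Lower bound.} Apply \Cref{lem:quantum-test} directly to $\rho_n$ on the two $d$-dimensional target registers. Since $F(\rho_n,\Phi_d)=\Tr(\Phi_d\rho_n)\ge1-\varepsilon$ with $\varepsilon<1$, this gives
\begin{equation}
 \ERi(\rho_n)\ge(1-\varepsilon)\log_2d-h_2(\varepsilon).
\end{equation}

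\textbf{Upper bound.} Invoke \Cref{lem:adaptive} together with its regularized consequence \eqref{eq:adaptive-regularized}. For the final retained state $\zeta_n$ on $X_{\rm f}:Y_{\rm f}$ this yields $\ERi(\zeta_n)\le\sum_i c_\eta(x_i)$. The state $\rho_n$ is obtained from $\zeta_n$ by the local partial traces over $F_A$ and $F_B$. These are separability-preserving channels, and their tensor powers are as well. By data processing, $\ER(\rho_n^{\otimes r})\le\ER(\zeta_n^{\otimes r})$ for every $r$, so the same bound passes to $\ERi(\rho_n)$. Chaining the lower and upper bounds gives the first inequality in \eqref{eq:quantum-finite-error}.

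\textbf{Energy step.} The second inequality follows from \eqref{eq:telescope}. By \Cref{lem:scalar}, $c_\eta$ is concave and increasing, so Jensen's inequality gives $\sum_i c_\eta(x_i)\le n c_\eta(\bar x)$ with $\bar x=\frac1n\sum_i x_i$. The constraint $\bar x\le N$ together with monotonicity then gives $n c_\eta(\bar x)\le n c_\eta(N)$.

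\textbf{Quantum transmission and the main obstacle.} For quantum transmission, $\bar A$ is the untouched reference and $\bar B$ is Bob's decoded output. The reference is held by Alice from the start in a local preparation, so the initial state is separable and the same adaptive bound applies verbatim. The main point to check is therefore not analytic but structural: the regularized bound from \Cref{lem:adaptive} is stated for the full retained state, and it must be transferred legitimately to the tested registers. This requires that the replicas in the regularization are copies of $\rho_n$ obtained by the same local discard applied copywise, which holds by construction.
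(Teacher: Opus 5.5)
Your proposal is correct and follows the paper's proof: the paper also passes the adaptive bound \eqref{eq:telescope} to the target registers by local discarding, combines it with the threshold test of \Cref{lem:quantum-test}, and rearranges. The details you add are consistent with the paper's setup: data processing on the tensor powers, Jensen's inequality with monotonicity of $c_\eta$, and treating the untouched reference as part of Alice's initial local memory (where the paper instead passes through the equivalent generation task).
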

\begin{proof}
The common adaptive bound \eqref{eq:telescope}, including the final
discarding of unused memories, gives
$\ERi(\rho_n)\le\sum_i c_\eta(x_i)\le n c_\eta(N)$.
Combine it with \Cref{lem:quantum-test} and rearrange.
\end{proof}

\subsection{A matching finite-dimensional protocol}\label{subapp:lowerbound}
We show that photon-number-sector resources also give an achievable
rate matching the converse in the large-block limit. We use entanglement distillation and
hashing~\cite{BennettPurification,BennettMixed,DevetakWinter}, applied to the
photon-number-conditioned resources. Related loss-conditioned optical
distillation is studied in Ref.~\cite{Winnel}.
\begin{lemma}[Sector hashing achieves the capacity rate]\label{lem:achieve}
For every $s,m$ with $m/s\le N$,
\begin{equation}
 Q_2(\cL_\eta,N)\ge F_{\eta,s}(m)/s.
 \label{eq:main-achievability}
\end{equation}
Moreover, $F_{\eta,s}(\lfloor sN\rfloor)/s\to c_\eta(N)$.
\end{lemma}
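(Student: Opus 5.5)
Alice prepares, in each block of $s$ channel uses, the sector pair $\Phi_{s,m}$ and transmits its $A'^s$ share. The resulting shared state is exactly $J_{s,m}$, and Bob measures his total received number $K$ nondestructively. This local measurement with public announcement produces the flagged ensemble $\{p(k|m),J_{s,m,k}\}$ of \Cref{lem:resource}. Each transmitted block carries exactly $m$ photons, so the per-mode mean is $m/s\le N$ in every block and every branch. This satisfies both \eqref{eq:energy} and the per-use form: by symmetrizing the sector pair over mode permutations, or noting that $\pi_{s,m}$ is permutation invariant, each mode has mean $m/s$.

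Repeating over $t$ blocks gives $t$ independent copies of a classical-quantum flagged state $\sum_k p(k|m)\proj k\otimes J_{s,m,k}$. These are finite-dimensional, since $R$ lives in $\cH_{s,m}$ and Bob's output in $\bigoplus_{k\le m}\cH_{s,k}$. Alice and Bob then run one-way hashing with Bob-to-Alice classical communication (reverse hashing~\cite{DevetakWinter}) on the i.i.d.\ copies. This distills ebits at rate equal to the reverse coherent information $S(\rho_R)-S(\rho_{RB})$, conditioned on the public flag. By \eqref{eq:reference-marginal} and \eqref{eq:spectrum-result} this equals $\log_2 d_{s,m}-\log_2 d_{s,m-k}$ for each $k$. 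Averaging over $k$ gives exactly $F_{\eta,s}(m)$, which matches the REE in \eqref{eq:sector-ree}. Per channel use the rate is $F_{\eta,s}(m)/s$ ebits with vanishing error, so $D_2\ge F_{\eta,s}(m)/s$, and \eqref{eq:operational-equivalences} transfers this to $Q_2$. The distilled ebits also give key, which covers $K_2$ and $P_2$.

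For the limit, put $m=\lfloor sN\rfloor$. The upper bound $F_{\eta,s}(m)/s\le c_\eta(m/s)+H(\Bin(m,q))/s\le c_\eta(N)+O(\log s)/s$ follows from \Cref{lem:energy} together with monotonicity of $c_\eta$. For the lower bound, I will use Stirling asymptotics: $\log_2 d_{s,m}=\log_2\binom{s+m-1}{m}=s\,g(m/s)-O(\log s)$. Similarly $\EE\log_2 d_{s,L}\le \log_2 d_{s,\EE L}+\ldots$, because $\ell\mapsto\log_2 d_{s,\ell}$ is concave in $\ell$: its increments $\log_2((s+\ell)/(\ell+1))$ decrease. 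Jensen then gives $\EE\log_2 d_{s,L}\le \log_2 d_{s,qm}$, interpolated, and this is at most $s\,g(qm/s)$ by the maximum-entropy bound $\log_2 d_{s,\ell}\le s g(\ell/s)$. The latter bound comes from \eqref{eq:main-thermal-input} with nonnegative relative entropy. Hence $F_{\eta,s}(m)/s\ge g(m/s)-g(qm/s)-O(\log s/s)\to c_\eta(N)$ by continuity of $g$.

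The main obstacle I expect is making the hashing step rigorous with the stated protocol model. The ensemble is classical-quantum flagged rather than a single i.i.d.\ state, so I would invoke the hashing inequality for the averaged state $\sum_k p(k|m)\proj{k}_{B'}\otimes J_{s,m,k}$, whose coherent information from Bob's side to Alice is the flag-averaged quantity. This keeps the argument finite-dimensional. A second point to check is that the Jensen step for noninteger $qm$ needs the piecewise-linear interpolation of the concave sequence $\log_2 d_{s,\ell}$; that interpolation still lies below $s g(\ell/s)$ because $s g(\cdot/s)$ is concave and dominates it at the integers.
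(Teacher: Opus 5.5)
There is one concrete error to fix. In your last paragraph, the state you feed to the hashing inequality is $\rho=\sum_k p(k|m)\proj{k}_{B'}\otimes J_{s,m,k}$. It carries the flag only on Bob's side, and its reverse coherent information is not the flag average. Every conditional reference marginal equals $\pi_{s,m}$, so $R$ is independent of $K$, and
\[
 S(\rho_R)-S(\rho_{RBB'})
 =\log_2d_{s,m}-H(\Bin(m,\eta))-\sum_k p(k|m)\log_2d_{s,m-k}
 =F_{\eta,s}(m)-H(\Bin(m,\eta)).
\]
In $-S(BB'\,|\,R)$, a flag held only by Bob costs its full entropy. Your earlier phrase ``conditioned on the public flag'' is the right idea, but it has to be encoded in the state. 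Give Alice a copy and hash $\sum_k p(k|m)\proj{k}_{A'}\otimes\proj{k}_{B'}\otimes J_{s,m,k}$, for which $S(RA')-S(RA'BB')=\sum_k p(k|m)[\log_2d_{s,m}-\log_2d_{s,m-k}]=F_{\eta,s}(m)$. As written, you would still obtain the limiting rate $c_\eta(N)$, since $H(\Bin(m,\eta))\le g(s\eta N)=O(\log s)$. You would not obtain the exact inequality \eqref{eq:main-achievability} at fixed $s,m$.

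With this fix your proof is correct, and it differs from the paper's in two places.

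\emph{Distillation.} The paper hashes each class $k$ separately. It retains a prescribed number $\lfloor b(p_k-\delta)_+\rfloor$ of occurrences per class, runs one code per class, and outputs a fixed separable state on a shortage, so the output length is fixed by construction. You run a single code on i.i.d.\ copies of one finite-dimensional flagged state. Fixed length and vanishing error then follow directly from the hashing theorem, and the occurrence-count bookkeeping disappears.

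\emph{The limit.} The paper combines Stirling's formula, the weak law $L_s/s\to qN$, and dominated convergence via $0\le s^{-1}\log_2 d_{s,L_s}\le g(N)$. You instead sandwich $F_{\eta,s}(m)/s$ within $O(\log s/s)$ of $c_\eta(m/s)$, uniformly in $m\le sN$. The upper side uses Lemma~\ref{lem:energy}, and the lower side uses the binomial bound \eqref{eq:sc-dimension} plus Jensen. Your route gives an explicit convergence rate; the paper's route does not need Lemma~\ref{lem:energy}.

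Your detour through concavity of $\ell\mapsto\log_2 d_{s,\ell}$ and piecewise-linear interpolation is unnecessary. The pointwise bound $\log_2 d_{s,\ell}\le s\,g(\ell/s)$ and concavity of $g$ give $\EE\log_2 d_{s,L}\le s\,g(qm/s)$ directly. Finally, pad blocklengths not divisible by $s$ with vacuum, as the paper does.
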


\begin{proof}
Alice prepares $\Phi_{s,m}$ locally and transmits its $s$ optical modes,
using exactly $m$ photons. The transmitted marginal $\pi_{s,m}$
is invariant under mode permutations, so each transmitted mode has
mean photon number $m/s$. Thus the construction also obeys the
per-use unconditional constraint whenever $m/s\le N$.
Bob measures the total received number $k$ and announces it. Write $p_k=\binom mk\eta^kq^{m-k}$. Reverse hashing
on each conditional class achieves the rate
\begin{equation}
 r_k:=S((J_{s,m,k})_R)-S(J_{s,m,k})
       =\log_2d_{s,m}-\log_2d_{s,m-k}\ge0
\label{eq:sector-hashing-rate}
\end{equation}
per occurrence~\cite{DevetakWinter}. There are finitely many classes, all
finite-dimensional, so only finite-dimensional hashing is needed.

To obtain a fixed output length, repeat the block $b$ times and fix a
small slack $\delta>0$. For each class with $p_k>0$, retain a prescribed
$\ell_{k,b}=\lfloor b(p_k-\delta)_+\rfloor$ occurrences and run a
hashing code with target dimension
$2^{\lfloor\ell_{k,b}(r_k-\delta)_+\rfloor}$. The probability that
any class has too few occurrences tends to zero by the law of large
numbers and the finite number of classes. On a shortage, output a fixed
separable state in the same prescribed total target dimension; otherwise
discard excess occurrences and use the retained blocks. The shortage
probability and the hashing errors both vanish. Hashing errors need not
be heralded. Taking $b\to\infty$ and then $\delta\downarrow0$ gives
$\sum_kp_kr_k=F_{\eta,s}(m)$ per block, with all $bs$ transmissions
and all branches counted.

For $m_s=\lfloor sN\rfloor$ with $N>0$, Stirling's formula
and the weak law of large numbers~\cite{CoverThomas} give
$s^{-1}\log_2d_{s,m_s}\to g(N)$. If $L_s\sim\Bin(m_s,q)$, then
$L_s/s\to qN$ in probability, so
$s^{-1}\log_2d_{s,L_s}\to g(qN)$ in probability. The uniform bound
\begin{equation}
 0\le s^{-1}\log_2d_{s,L_s}\le g(L_s/s)\le g(N)
\label{eq:sector-dimension-domination}
\end{equation}
also gives convergence of expectations. Substitution into
\eqref{eq:sector-ree} proves the limit. For each desired accuracy one
first fixes a sufficiently large finite $s$, then repeats that block
arbitrarily many times. More explicitly, for any rate below $c_\eta(N)$,
choose such an $s$ and a fixed positive slack before taking the coding
limit $b\to\infty$. For lengths not divisible by $s$, pad the remaining
slots with vacuum; the asymptotic rate and energy budget are unchanged.
At $N=0$ all transmitted states are vacuum and
the rate is zero.
\end{proof}

\subsection{Proof of the quantum-capacity identity}
We combine the finite-error converse and the sector-hashing protocol,
then take the vanishing-error limit to determine $Q_2$.

\begin{proof}[Quantum-capacity part of \Cref{thm:capacity}]
Consider any admissible sequence with target dimension $d_n$,
rate $R_n=\log_2d_n/n$, and quantum error
$\varepsilon_n=1-F(\rho_n,\Phi_{d_n})\to0$.
\Cref{lem:quantum-finite-error} gives
\begin{equation}
 R_n\le
 \frac{c_\eta(N)+h_2(\varepsilon_n)/n}{1-\varepsilon_n}.
 \label{eq:quantum-weak-limit}
\end{equation}
Hence $\limsup_nR_n\le c_\eta(N)$, proving
$Q_2(\cL_\eta,N)\le c_\eta(N)$.
The matching sector-hashing construction in \Cref{lem:achieve}
gives the reverse inequality. Therefore
\begin{equation}
 Q_2(\cL_\eta,N)=g(N)-g((1-\eta)N).
\label{eq:quantum-capacity-conclusion}
\end{equation}
The replica limit was taken separately for every fixed finite protocol;
the communication limit $n\to\infty$ is taken only here. All channel
uses in the actual protocol retain their adaptive LOCC interleaving.
\end{proof}

\section{Secret-key and private capacities}\label{sec:key}
We now specify the adaptive secret-key agreement task and connect its
classical security criterion to the entanglement bound
\eqref{eq:telescope}. Private states and an energy-preserving conversion
supply the additional ingredients beyond the quantum proof. The
private-capacity conclusion follows from the operational equivalence
in Section~\ref{sec:results}.

\subsection{Systems and adaptive secret-key agreement}\label{sec:key-model}
This subsection identifies the information kept by Alice, Bob, and Eve,
the final classical key, and the retained quantum state whose
entanglement will bound the key rate. The protocol starts from local preparations and public randomness,
without shared entanglement or secret key. As in
\eqref{eq:adaptive-round-state}--\eqref{eq:adaptive-round-locc}, Alice
holds $X_iA_i$ before the $i$th call and Bob holds $Y_i$; transmitting
$A_i$ produces Bob's mode $B_i$. Eve collects the corresponding loss
environment $E_i$. Each subsequent LOCC acts on $X_i:B_iY_i$ and can
choose later preparations and measurements using the full public
history. Public messages are copied to Eve's transcript and to the
parties' classical records. The reduced input
$\rho^{(i)}_{X_iA_iY_i}$ is unconditional, so its transmitted marginal
obeys precisely the mean-energy accounting in \eqref{eq:energy}.

After the last transmission, local processing and public communication
produce classical key registers $K_A,K_B$, each with alphabet
$\{0,\ldots,d-1\}$. Let $E$ contain all of Eve's quantum systems and
let $Z$ be her classical copy of the complete public transcript. Write
$\kappa_{K_AK_BEZ}$ for the resulting state. In this notation,
the ideal shared key and operational error from
\eqref{eq:classical-task-error} are
\begin{equation}
 \tau_d^{\rm key}=\frac1d\sum_{a=0}^{d-1}|aa\rangle\langle aa|,
 \qquad
 e=\tfrac12\left\|\kappa_{K_AK_BEZ}
       -\tau_d^{\rm key}\otimes\kappa_{EZ}\right\|_1.
 \label{eq:key-error}
\end{equation}
The ideal reference uses the actual Eve-and-transcript marginal
$\kappa_{EZ}$, and $0\le e\le1$. Agreement, uniformity, and independence
from Eve and the transcript are all included in this error, as in
the composable trace-distance framework
of Refs.~\cite{RennerKonig,RennerSecurity,PortmannRenner}.

For an $n$-use sequence, write $d_n$ for the key alphabet size,
$e_n$ for its error, and $R_n=\log_2d_n/n$ for its rate.
The capacity $K_2(\cL_\eta,N)$ is the supremum of
rates $R$ achievable with $\liminf_nR_n\ge R$ and $e_n\to0$, under
the same unconditional energy constraint \eqref{eq:energy} and
adaptive LOCC model as the quantum capacity. The parties have no
initial shared entanglement or secret key.

For the entanglement argument, Alice and Bob retain local purifications
of their operations in auxiliary registers $S_A,S_B$, called shields,
and postpone the final key readout. The resulting quantum output
$\zeta_n$ is shared across $K_AS_A:K_BS_B$; measuring its key registers
recovers the original key protocol, including Eve's state in the
corresponding extension. The construction in
\Cref{lem:key-conversion} preserves every transmitted optical marginal.
Thus the bound \eqref{eq:telescope} applies to this retained quantum
state with the original energy budget. It does not apply as a lower
bound on the entanglement of the measured classical key registers:
their reduced state is separable. No additional energy constraint is
placed on the local shields.

Figure~\ref{fig:adaptive-key-protocol} shows the adaptive key protocol,
Eve's accumulated information, and the relation between the retained
quantum output and the measured key state.

\begin{figure}[!htbp]
\centering
\begingroup
\definecolor{protink}{HTML}{203045}
\definecolor{protblue}{HTML}{295D83}
\definecolor{protbluefill}{HTML}{EEF5FA}
\definecolor{protamber}{HTML}{A46418}
\definecolor{protamberfill}{HTML}{FBF5E8}
\definecolor{protpurple}{HTML}{70518C}
\definecolor{protpurplefill}{HTML}{F3EDF8}
\definecolor{protteal}{HTML}{197D88}
\definecolor{protgray}{HTML}{C7D2DC}
\tikzset{
  every node/.style={text=protink,inner sep=2pt},
  pq/.style={draw=protblue,line width=.8pt,-{Latex[length=2mm,width=1.4mm]}},
  pqline/.style={draw=protblue,line width=.8pt},
  pc/.style={draw=protamber,densely dashed,line width=.85pt,
    -{Latex[length=2mm,width=1.4mm]}},
  pcc/.style={draw=protamber,densely dashed,line width=.85pt,
    {Latex[length=1.8mm]}-{Latex[length=1.8mm]}},
  pk/.style={draw=protteal,double,double distance=1.1pt,line width=.4pt,
    -{Latex[length=2mm,width=1.4mm]}},
  pe/.style={draw=protpurple,line width=.8pt,
    -{Latex[length=2mm,width=1.4mm]}},
  pgate/.style={draw=protblue,fill=protbluefill,line width=.8pt,rounded corners=2pt},
  plocal/.style={draw=protamber,fill=protamberfill,line width=.8pt,rounded corners=2pt},
  pstate/.style={draw=protpurple,fill=protpurplefill,line width=.8pt,rounded corners=2pt},
  ppanel/.style={font=\small\bfseries,anchor=west},
  powner/.style={font=\small\bfseries,anchor=west,text=protblue},
  plabel/.style={font=\small},
  psmall/.style={font=\footnotesize}
}

\resizebox{0.99\linewidth}{!}{%
\begin{tikzpicture}[x=1cm,y=.65cm,font=\small]
\path[use as bounding box] (0,-1.55) rectangle (17,13.1);
\node[ppanel] at (0,12.8) {(a) Adaptive channel use and Eve's information ($i<n$)};
\node[powner] at (0,12.15) {Alice};
\node[powner] at (0,8.98) {Bob};
\node[powner,text=protpurple] at (0,6.3) {Eve};
\draw[draw=protgray,dashed] (0,9.48)--(17,9.48);
\draw[draw=protgray,dashed] (0,6.65)--(17,6.65);

\foreach \x in {1.5,11.5}{
  \draw[plocal] (\x,10.1) rectangle ++(1.7,1.65);
  \node at (\x+.85,10.92) {Local};
  \draw[plocal] (\x,7.5) rectangle ++(1.7,1.2);
  \node at (\x+.85,8.1) {Local};
  \draw[pcc] (\x+.85,8.83)--(\x+.85,9.97);
}
\node[plabel] at (2.35,12.18) {$\Lambda_{i-1}$};
\node[plabel] at (12.35,12.18) {$\Lambda_i$};
\node[psmall,anchor=west,text=protamber] at (2.65,9.37) {LOCC};
\node[psmall,anchor=west,text=protamber] at (12.65,9.37) {LOCC};
\node at (.55,11.32) {$\cdots$};
\node at (.55,7.85) {$\cdots$};
\draw[pq] (.82,11.32)--(1.5,11.32);
\draw[pq] (.82,7.85)--(1.5,7.85);
\draw[pq] (3.2,11.32)--(11.5,11.32);
\node[plabel,anchor=south] at (7.15,11.36) {$X_i$};

\draw[pgate] (6.2,8.72) rectangle (8.2,10.85);
\node[align=center] at (7.2,9.78) {Loss\\[2pt]$U_\eta$};
\draw[pq] (3.2,10.52)--(6.2,10.52);
\node[plabel,anchor=south] at (4.65,10.56) {$A_i$};
\draw[pq] (4.8,9.1)--(6.2,9.1);
\node[plabel,anchor=south,fill=white,inner sep=1pt]
  at (5.2,9.2) {$|0\rangle_{E_{0,i}}$};
\draw[pq] (8.2,10.35)--(9.2,10.35)--(9.2,8.4)--(11.5,8.4);
\node[plabel,anchor=south] at (10.4,8.44) {$B_i$};
\draw[pe] (7.2,8.72)--(7.2,6.0)--(14.7,6.0);
\node[plabel,anchor=south,text=protpurple] at (10,6.04) {$E_i$};
\node[plabel,anchor=west,text=protpurple] at (14.85,6.0) {$E$};
\draw[pq,preaction={draw=white,line width=3pt}]
  (3.2,7.85)--(11.5,7.85);
\node[plabel,anchor=south] at (5.0,7.89) {$Y_i$};
\draw[pq] (13.2,11.32)--(16.4,11.32);
\node[plabel,anchor=south] at (14.85,11.36) {$X_{i+1}$};
\draw[pq] (13.2,10.52)--(16.4,10.52);
\node[plabel,anchor=south] at (14.85,10.56) {$A_{i+1}$};
\draw[pq] (13.2,7.85)--(16.4,7.85);
\node[plabel,anchor=south] at (14.85,7.89) {$Y_{i+1}$};
\node at (16.75,10.9) {$\cdots$};
\node at (16.75,7.85) {$\cdots$};
\draw[pc,preaction={draw=white,line width=3pt}]
  (12.35,7.5)--(12.35,5.15)--(14.7,5.15);
\node[psmall,anchor=west,text=protamber] at (12.55,6.75) {$z_i$};
\node[plabel,anchor=west,text=protamber] at (14.85,5.15) {$Z$};

\node[ppanel] at (0,4.05) {(b) Final LOCC, retained quantum output, and key readout};
\draw[plocal] (1.65,0) rectangle (3.45,3.12);
\node[align=center] at (2.55,1.56) {LOCC\\[3pt]$\Lambda_n$};
\draw[pq] (.08,2.68)--(1.65,2.68);
\draw[pq] (.08,.6)--(1.65,.6);
\node[plabel,anchor=south] at (.78,2.72) {$X_n$};
\node[plabel,anchor=south] at (.78,.64) {$B_nY_n$};
\draw[pstate,dashed] (3.75,.04) rectangle (6.85,3.14);
\node[plabel,text=protpurple] at (5.3,3.48) {$\zeta_n$};
\draw[pq] (3.45,2.68)--(7.35,2.68);
\node[plabel,anchor=south] at (5.4,2.72) {$K_A$};
\draw[pq] (3.45,1.9)--(10.55,1.9);
\node[plabel,anchor=west] at (10.65,1.9) {$S_A$};
\draw[pq] (3.45,1.08)--(7.35,1.08);
\node[plabel,anchor=south] at (5.4,1.12) {$K_B$};
\draw[pq] (3.45,.3)--(10.55,.3);
\node[plabel,anchor=west] at (10.65,.3) {$S_B$};
\draw[pgate] (7.35,2.25) rectangle (9.4,3.11);
\draw[pgate] (7.35,.65) rectangle (9.4,1.51);
\node[psmall] at (8.375,2.68) {Measure};
\node[psmall] at (8.375,1.08) {Measure};
\draw[pk] (9.4,2.68)--(14.3,2.68);
\node[plabel,anchor=south,text=protteal] at (12.1,2.73) {$K_A$};
\draw[pk] (9.4,1.08)--(14.3,1.08);
\node[plabel,anchor=south,text=protteal] at (12.1,1.13) {$K_B$};
\draw[pe] (1,-.92)--(14.3,-.92);
\node[plabel,anchor=south,text=protpurple] at (5.4,-.87) {$EZ$};
\draw[draw=protpurple,decorate,decoration={brace,amplitude=4pt}]
  (14.5,2.94)--(14.5,-1.14);
\node[align=center,text=protpurple] at (15.8,.9) {$\kappa_{K_AK_BEZ}$};
\end{tikzpicture}%
}
\endgroup
\caption{Adaptive secret-key agreement, including Eve's information.
(a) The channel dilation $U_\eta$ takes the transmitted mode $A_i$
and vacuum input $E_{0,i}$ to Bob's output $B_i$ and Eve's loss mode
$E_i$. The memories $X_i,Y_i$ bypass the channel. Successive LOCCs
retain the public history for feedback; the messages $z_i$ generated
within $\Lambda_i$ are also appended to Eve's transcript $Z$.
Eve collects all loss modes in $E$ and all public messages, including
those from the initial and final LOCCs, in $Z$.
(b) In the retained implementation, the final LOCC prepares the quantum
state $\zeta_n$ on $K_AS_A:K_BS_B$. Local measurements of $K_A,K_B$
give the classical key (double lines); the shields stay local and are
traced out when forming $\kappa_{K_AK_BEZ}$. The $EZ$ wire carries
the correlated Eve systems throughout. The converse evaluates REE on
$\zeta_n$ before the measurements and operational security on
$\kappa_{K_AK_BEZ}$ after them.}
\label{fig:adaptive-key-protocol}
\end{figure}

\subsection{Private states and privacy tests}
\label{sec:privacy-detail}
We introduce quantum states whose local measurements yield secret key,
and tests that quantify the entanglement required to approximate them.
These tests extend the maximally entangled-state test to include local
auxiliary registers, called shields. Private states express the entanglement structure underlying secret
key~\cite{SecureBound,HHHO}. Let $K_A,K_B$ be key registers of dimension
$d$, and let $S_A,S_B$ be arbitrary local shield registers. Write
$S=S_AS_B$ and let $\theta_S$ be a density operator on the shield.
For a family of unitaries $U_{ij}$ on $S$, define the controlled
twisting $U_{\rm tw}$ and private state $\gamma$ by
\begin{equation}
 U_{\rm tw}=\sum_{i,j=0}^{d-1}|ij\rangle\langle ij|_{K_AK_B}\otimes U_{ij},
 \qquad
 \gamma=U_{\rm tw}(\Phi_d^{K_AK_B}\otimes\theta_S)U_{\rm tw}^\dagger.
 \label{eq:private-state-definition}
\end{equation}
Here $\Phi_d$ is the maximally entangled projector defined in
\eqref{eq:target-fidelity}. The bipartition for REE is
$K_AS_A:K_BS_B$. The corresponding privacy-test projector is
\begin{equation}
 \Pi_\gamma=U_{\rm tw}(\Phi_d^{K_AK_B}\otimes I_S)U_{\rm tw}^\dagger,
 \qquad \Tr(\Pi_\gamma\gamma)=1.
 \label{eq:privacy-test-definition}
\end{equation}
This test is the untwisted maximally entangled overlap underlying the
earlier private-state relative-entropy bound~\cite[Appendix~XV.H]{HHHO}.
With trivial shields, this reduces to the maximally entangled-state
test used in Appendix~\ref{sec:quantum}.

\begin{lemma}[Privacy-test bound for regularized REE]\label{lem:privacy}
Let $\zeta$ be a state on $K_AK_BS_AS_B$ with finite
$\ER(K_AS_A:K_BS_B)_\zeta$, and let $0\le\varepsilon<1$.
For the privacy test $\Pi_\gamma$ in \eqref{eq:privacy-test-definition},
if $\Tr(\Pi_\gamma\zeta)\ge1-\varepsilon$, then
\begin{equation}
 \ERi(\zeta)\ge(1-\varepsilon)\log_2d-h_2(\varepsilon).
 \label{eq:main-privacy}
\end{equation}
No bound on the dimension of the shield is needed.
\end{lemma}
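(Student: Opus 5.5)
The plan is to copy the proof of \Cref{lem:quantum-test}, replacing the maximally entangled projector $\Phi_d$ with the privacy-test projector $\Pi_\gamma$. Only one new ingredient is needed: a single-copy bound on how well a separable state can pass the test. I will show that every $\sigma$ separable across $K_AS_A:K_BS_B$ satisfies $\Tr(\Pi_\gamma\sigma)\le 1/d$, and that this bound survives tensor powers, whatever the shield dimension.

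\emph{Single-copy bound.} By convexity it is enough to treat product states $\sigma=\alpha_{K_AS_A}\otimes\beta_{K_BS_B}$. Untwist first: $\Tr(\Pi_\gamma\sigma)=\Tr[(\Phi_d\otimes I_S)\,U_{\rm tw}^\dagger\sigma U_{\rm tw}]$. The controlled unitary is diagonal in the key basis, so $\Phi_d\otimes I_S$ only sees the blocks $\langle ii|\cdot|jj\rangle$. We get
\begin{equation}
\Tr(\Pi_\gamma\sigma)=\frac1d\sum_{i,j}\Tr\bigl[U_{jj}^\dagger U_{ii}\,\sigma_{ii,jj}\bigr],
\end{equation}
where $\sigma_{ii,jj}=\langle ii|\sigma|jj\rangle$ is an operator on $S$. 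For a product state, $\sigma_{ii,jj}=\alpha_{ij}\otimes\beta_{ij}$ with $\alpha_{ij}=\langle i|\alpha|j\rangle_{K_A}$. Each term is then bounded in modulus by $\|\alpha_{ij}\|_1\|\beta_{ij}\|_1$.

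For a positive block operator, $\|\alpha_{ij}\|_1\le\sqrt{\Tr\alpha_{ii}\Tr\alpha_{jj}}$. Applying this block Cauchy--Schwarz inequality to both factors gives
\begin{equation}
\Tr(\Pi_\gamma\sigma)\le\frac1d\Bigl(\sum_i\sqrt{a_ib_i}\Bigr)^2\le\frac1d,
\end{equation}
with $a_i=\Tr\alpha_{ii}$ and $b_i=\Tr\beta_{ii}$ probability vectors; the last step is Cauchy--Schwarz again. This is the standard HHHO argument. For infinite-dimensional shields it passes to the closed separable set by trace-norm continuity of the left-hand side.

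\emph{Tensor powers.} $\Pi_\gamma^{\otimes r}$ is again a privacy-test projector, of key dimension $d^r$, with the twisting $\bigotimes U_{i_kj_k}$ on the joint shield. A state separable across the aggregate cut therefore passes any specified set of $r$ tests with probability at most $d^{-r}$. To see this, trace out the untested copies, which is a local operation preserving separability, and apply the single-copy bound at dimension $d^r$.

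\emph{Threshold test.} From here the argument follows \Cref{lem:quantum-test} exactly. Measure $\{\Pi_\gamma,I-\Pi_\gamma\}$ on each of $t$ copies of $\zeta$ and accept if at least $r=\lfloor(1-\varepsilon-\delta)t\rfloor$ tests pass. The acceptance probability tends to one. A union bound over subsets caps the separable acceptance probability at $\binom tr d^{-r}$. Binary data processing then gives \eqref{eq:quantum-threshold-ree} with $\zeta$ in place of $\rho$. Finiteness of $\ER(\zeta)$ makes the regularization well defined. Dividing by $t$, letting $t\to\infty$ and then $\delta\downarrow0$ yields \eqref{eq:main-privacy}.

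\emph{Main obstacle.} The delicate step is the single-copy overlap bound. Specifically, one must check that the twisting unitaries cannot raise the overlap. The block Cauchy--Schwarz estimate on the off-diagonal key blocks settles this without any dependence on the shield dimension.
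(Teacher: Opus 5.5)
Your proposal is correct and follows the paper's proof: a shield-independent separable overlap bound $\Tr(\Pi_\gamma\sigma)\le 1/d$, the observation that $\Pi_\gamma^{\otimes r}$ is itself a privacy test of key dimension $d^r$, and then the threshold-testing argument of \Cref{lem:quantum-test}. The only difference is that you prove the single-copy bound yourself, using mixed product states and the block Cauchy--Schwarz estimate $\|\alpha_{ij}\|_1\le\sqrt{\Tr\alpha_{ii}\Tr\alpha_{jj}}$, whereas the paper cites the product-vector argument of HHHO; also, your twisting factor should read $U_{jj}U_{ii}^\dagger$ rather than $U_{jj}^\dagger U_{ii}$, which is harmless because only unitarity is used.
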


\begin{proof}
The separable privacy-test bound is already established in
Ref.~\cite[Eq.~(281)]{HHHO}:
\begin{equation}
 \Tr(\Pi_\gamma\sigma)\le\frac1d,
 \qquad \sigma\in\SEP(K_AS_A:K_BS_B).
 \label{eq:privacy-test-product-overlap}
\end{equation}
Its product-vector proof uses only unitary invariance of the norm and
Cauchy--Schwarz, so it applies to arbitrary separable shield spaces.
Convexity and trace-norm continuity extend it to the closed separable set.

For every positive integer $r$, the tensor product of $r$ privacy
tests is itself a privacy test with key dimension $d^r$. Consequently
any separable state across the aggregate Alice--Bob cut passes all
$r$ tests with probability at most $d^{-r}$.
Apply the threshold-testing argument in the proof of
\Cref{lem:quantum-test}, replacing $\Phi_d$ by $\Pi_\gamma$.
That argument uses only the acceptance probability on the tested state
and the bound $d^{-r}$ for each subset of $r$ tests. Both properties
hold here, so it gives \eqref{eq:main-privacy} with no restriction on
the shield dimension.
\end{proof}

\subsection{Preserving the optical inputs in the key-to-private-state reduction}
\label{sec:key-detail}
We relate an operational key protocol to a retained quantum state on
which a privacy test can be applied. The essential energy point is that
retaining local purifications preserves every transmitted input marginal.
The construction concerns local instruments and public communication;
it is independent of the optical channel and will also be used in
Appendix~\ref{sec:amplifier}.

\begin{lemma}[Energy-preserving key-to-private-state reduction]
\label{lem:key-conversion}
For a protocol in the finite/countable-outcome model with operational
error $e$ in \eqref{eq:key-error}, there is a protocol retaining local
shield registers whose final quantum state $\zeta_n$, before the
key registers are measured, passes a privacy test of key dimension
$d$ with probability at least $(1-e)^2$.
Every transmitted marginal and hence every unconditional input energy
$x_i=\Tr(\hat n_{A_i}\rho^{(i)}_{A_i})$ is unchanged.
\end{lemma}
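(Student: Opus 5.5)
We purify every local instrument of the protocol and then check two things: the transmitted optical marginals do not change, and the final purified state passes a privacy test. The construction is the standard one of Horodecki et al.~\cite{HHHO}, adapted to countable outcomes and with the energy bookkeeping made explicit.

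\emph{Step 1: coherent implementation.} Each local instrument with countable outcomes $\{\mathcal I_a\}$ is realized by a Stinespring isometry into the output, an outcome register, and a local environment. Public randomness is handled as a locally prepared purified copy. Each public message is the computational-basis content of an outcome register. The sender copies it coherently (CNOT-type) into a register sent to the other party; that party also keeps a copy. We model the public transcript as a classical copy obtained by decohering a further copy given to Eve. All subsequent operations are controlled on these registers, so feedback is implemented exactly as in the original protocol. Alice's and Bob's local environments and outcome copies form the shields $S_A,S_B$. The key is read out only at the very end, by measuring $K_A,K_B$ in the computational basis.

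The crucial observation concerns the reduced state. Tracing out the shields and dephasing the outcome registers returns exactly the original branch-averaged state. In particular, the unconditional state $\rho^{(i)}_{A_i}$ of each transmitted mode is the same, because tracing out purifying registers is the same as averaging over outcomes. Hence every $x_i$ is preserved. No energy constraint is needed on the shields, since only $A_i$ enters \eqref{eq:energy}.

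\emph{Step 2: privacy test.} Let $\psi$ be a purification of the final state on $K_AK_BS_AS_BEZ$, where $EZ$ is Eve's system together with the transcript. After the key registers are measured, the actual state $\kappa$ has error $e$ from $\tau^{\rm key}_d\otimes\kappa_{EZ}$. By Uhlmann's theorem, together with the Fuchs--van de Graaf bound $F\ge(1-e)^2$, there is a purification of the ideal ccq state $\tau^{\rm key}_d\otimes\kappa_{EZ}$ with overlap at least $(1-e)^2$ with the coherent state.

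The next task is to identify such a purification. A purification of the ideal state with key registers kept coherent, and the remaining purifying systems in $S_AS_B$, is precisely a private state $\gamma$ of the form \eqref{eq:private-state-definition}. The twisting unitaries arise from the key-value-dependent purifications of $\kappa_{EZ}$. Here we use the HHHO result that any purification of an ideal key state in which the purifier of Eve's part lies within the shields is a private state, up to a local isometry on the shields. The isometry freedom is absorbed into the definition of $U_{ij}$.

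Monotonicity of fidelity under the partial trace over $EZ$ then gives
\[
\Tr(\Pi_\gamma\zeta_n)\ge F(\zeta_n,\gamma)\ge(1-e)^2 .
\]
The first inequality holds because $\gamma\le\Pi_\gamma$ restricted to rank one, or more simply because $\Pi_\gamma\ge\gamma$ when $\gamma$ is pure. When the shield state $\theta_S$ is mixed, we instead use $\Tr(\Pi_\gamma\zeta)\ge F(\zeta,\gamma)$, which follows from $\gamma$ being supported in $\Pi_\gamma$.

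\emph{Main obstacle.} The delicate point is that Eve's transcript copy $Z$ must be excluded from the shields. The parties, however, also hold copies of the same public messages, and these sit in $S_A,S_B$. This is consistent: the privacy test on $K_AS_A:K_BS_B$ tolerates classical data correlated with Eve. The other concern is that the Uhlmann purification must have its purifying system exactly on the parties' side. I would handle this by fixing $\psi$ as the global pure state including $EZ$. Uhlmann's theorem is then applied with respect to the decomposition $K_AK_BEZ$ versus $S_AS_B$, so the optimizing unitary acts only on $S_AS_B$. Countable outcomes only require separable Hilbert spaces and raise no further issue.
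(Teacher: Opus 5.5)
You follow the paper's route: dilate every instrument, copy each public message coherently to sender, receiver and Eve, note that tracing the added registers preserves the transmitted marginals, then use Fuchs--van de Graaf plus Uhlmann, identify the ideal purification as a private state, and finish with $\Tr(\Pi_\gamma\zeta_n)\ge F(\zeta_n,\gamma)$.

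However, the Uhlmann step fails as written. Across $K_AK_BEZ$ versus $S_AS_B$, Uhlmann gives maximal overlap $F(\Tr_{S_AS_B}\psi,\tau^{\rm key}_d\otimes\kappa_{EZ})$. Fuchs--van de Graaf bounds this below by $(1-e)^2$ only if $\Tr_{S_AS_B}\psi$ equals the \emph{measured} key state $\kappa$. Your Step~1 does not guarantee this. You read the key by an incoherent measurement of $K_A,K_B$ at the end, and a Stinespring dilation of the key-producing instrument need leave no record of the key in the shields. For example, a destructive computational-basis measurement dilates to the bare relabeling $\sum_a\ket{a}_{K_A}\bra{a}_Q$. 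Then $\Tr_{S_AS_B}\psi$ keeps coherences in the key basis, and its fidelity to the dephasing-invariant ideal state can be far below that of $\kappa$. If $\Tr_{S_AS_B}\psi=\Phi_d$ with trivial Eve, then $e=0$ but $F(\Phi_d,\tau^{\rm key}_d)=1/d$.

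The paper avoids this by implementing the final key measurements as coherent local copies kept in the shields. The global vector then genuinely purifies $\kappa$ with purifier $S$, and this local isometry changes neither the optical marginals nor the LOCC structure. Alternatively, you can apply Uhlmann separately to each diagonal block $(\bra{aa}\otimes I)\psi$ and recombine; this gives the same overlap without copies. Two further repairs are needed:
\begin{itemize}
\item Keep Eve's transcript copy coherent, or assign the decohering environment to Eve, so that $\psi$ is actually pure on the cut you use.
\item Allow local enlargement of the shields, both for Uhlmann and to extend the isometries relating the $\ket{\phi_a}$ to the unitaries required in \eqref{eq:private-state-definition}.
\end{itemize}

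You also never check that the retained protocol is still admissible. That is what allows \eqref{eq:telescope} to be applied to $\zeta_n$ in \Cref{lem:finite-error}. Sending a coherently copied register to the other party is quantum communication unless you show two things. First, once Eve's copy is traced out, each step must be a local instrument retaining its dilation, followed by classical communication of the outcome, as in \eqref{eq:key-retained-locc}. Second, the enlarged initial state must still be separable. Both hold for your construction, but the lemma has to deliver them explicitly.
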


\begin{proof}
We make the retained implementation explicit for normal local
instruments on separable Hilbert spaces. At a fixed public history,
write Alice's instrument, with finite or countable outcome $z$, as
\begin{equation}
 \mathcal I_z(T)=\sum_\mu A_{z\mu}T A_{z\mu}^\dagger,
 \qquad \sum_{z,\mu}A_{z\mu}^\dagger A_{z\mu}=I,
 \label{eq:key-countable-instrument}
\end{equation}
where the latter sum converges strongly. Retain a local Kraus register
$S_A^{\rm K}$ and set
$\widetilde V_z=\sum_\mu A_{z\mu}\otimes\ket\mu_{S_A^{\rm K}}$.
Then tracing this register recovers $\mathcal I_z$, and
$\sum_z\widetilde V_z^\dagger\widetilde V_z=I$, as in a
Stinespring dilation~\cite{Stinespring}.
For Alice's public message, use the purifying isometry
\begin{equation}
 V_{\rm pub}=\sum_z\widetilde V_z\otimes
       \ket z_{S_A^{\rm p}}\ket z_{Z_B}\ket z_{Z_E}.
 \label{eq:key-public-isometry}
\end{equation}
Here $Z_B,Z_E$ are Bob's and Eve's public records; $S_A^{\rm p}$ is an
extra local copy that is never subsequently altered. Alice's active
record may be included among the outputs of $\mathcal I_z$.
Orthogonality of the records gives $V_{\rm pub}^\dagger V_{\rm pub}=I$;
the countable sum therefore defines an isometry into separable spaces.

Two reductions establish the operational claims. Tracing
$S_A^{\rm K}S_A^{\rm p}$ recovers the original public instrument,
including Eve's classical record. Tracing $Z_E$ instead gives
\begin{equation}
 \sum_z\widetilde V_zT\widetilde V_z^\dagger\otimes
       \proj z_{S_A^{\rm p}}\otimes\proj z_{Z_B}.
 \label{eq:key-retained-locc}
\end{equation}
This is a local instrument retaining its dilation systems, followed by
ordinary classical communication of $z$, and is therefore LOCC across
the enlarged laboratories. Equation~\eqref{eq:key-public-isometry}
describes a purification of that process; it supplies no extra quantum
communication to the parties.

Use the analogous construction for Bob's messages, purify initial
local preparations locally, and retain discarded local systems.
Future operations follow the original active records while leaving
the added registers unused. Induction over the finite local rounds
and channel calls shows that tracing those registers recovers the
original evolution jointly with Eve at every stage. In particular,
\begin{equation}
 \widetilde\rho^{(i)}_{A_i}=\rho^{(i)}_{A_i},
 \qquad \widetilde x_i=x_i
 \quad\text{for every channel use }i.
 \label{eq:key-exact-energy}
\end{equation}
The enlarged initial state is separable and all between-call maps
remain LOCC after Eve is traced out. This implements the coherent
protocol construction of Ref.~\cite[Sec.~IX.A]{HHHO} in the stated
countable-outcome model, without any transcript-entropy assumption.
The initial assumption excludes any preshared secret key, including
a key encoded in a separable Alice--Bob state.

Implement the final key measurements by coherent local copying,
retaining the copies in the shields $S=S_AS_B$. Denote all of Eve's
systems, including her transcript, by $E_{\rm all}$. The global
vector $|\Psi\rangle_{K_AK_BSE_{\rm all}}$ purifies the actual
measured key state $\kappa_{K_AK_BE_{\rm all}}$; tracing its local
shields $S$ recovers that state. Tracing $E_{\rm all}$ instead gives
the retained quantum state $\zeta_n$. With squared fidelity
$F(\rho,\sigma)=\|\sqrt{\rho}\sqrt{\sigma}\|_1^2$, the
Fuchs--van de Graaf inequality~\cite{FuchsGraaf} gives
\begin{equation}
 F(\kappa,\tau_d^{\rm key}\otimes\kappa_{E_{\rm all}})
 \ge (1-e)^2.
\label{eq:key-ideal-fidelity}
\end{equation}
Uhlmann's theorem~\cite{UhlmannFidelity} supplies a purification of the ideal state,
after local enlargement of the shields if needed, of the form
\begin{equation}
 |\Gamma\rangle=\frac1{\sqrt d}\sum_{a=0}^{d-1}
    |aa\rangle|\phi_a\rangle_{SE_{\rm all}},\qquad
 \Tr_S|\phi_a\rangle\langle\phi_a|=\kappa_{E_{\rm all}},
\label{eq:key-ideal-purification}
\end{equation}
with squared overlap at least $(1-e)^2$ with $|\Psi\rangle$.
All $|\phi_a\rangle$ purify the same Eve state, so they are related
by isometries between their purifying support subspaces. To extend
these to unitaries, append a pure countably infinite ancillary register
locally and embed the original shield in its zero-ancilla subspace.
The purifying supports have equal dimensions, and their orthogonal
complements can both be made countably infinite. An orthonormal-basis
extension then gives a unitary on the full enlarged shield space.
Therefore
$\gamma=\Tr_{E_{\rm all}}|\Gamma\rangle\langle\Gamma|$
is a private state~\cite{HHHO}. The twisting specifies a comparison
state; it need not be implemented by the parties.

Fidelity increases under tracing out Eve, giving
$F(\zeta_n,\gamma)\ge(1-e)^2$. Data processing under
$\{\Pi_\gamma,I-\Pi_\gamma\}$, whose acceptance on $\gamma$ is one,
then gives
\begin{equation}
 \Tr(\Pi_\gamma\zeta_n)\ge F(\zeta_n,\gamma)\ge(1-e)^2.
\label{eq:key-privacy-acceptance}
\end{equation}
No bound on the shield dimension or memory entropy is used.
\end{proof}

\subsection{Finite-error key weak converse}\label{sec:finite-error}
We combine the privacy test and the energy-preserving reduction with
the common adaptive entanglement bound. This yields a rate bound for
the actual classical key, with its finite security error kept explicit.

\begin{lemma}[Private-state and operational-key bounds]\label{lem:finite-error}
Suppose an $n$-use protocol obeys \eqref{eq:energy} and its retained
quantum output $\zeta_n$ passes a privacy test $\Pi_\gamma$ of key
dimension $d$ with probability at least $1-\varepsilon$, where
$0\le\varepsilon<1$. Then
\begin{equation}
 (1-\varepsilon)\log_2d
 \le\sum_{i=1}^{n}c_\eta(x_i)+h_2(\varepsilon)
 \le n c_\eta(N)+h_2(\varepsilon).
 \label{eq:finite-error}
\end{equation}
For an operational classical key of alphabet size $d$ and error $e<1$
defined in \eqref{eq:key-error},
\begin{equation}
 (1-e)^2\log_2d
 \le\sum_{i=1}^{n}c_\eta(x_i)+h_2(2e-e^2)
 \le n c_\eta(N)+h_2(2e-e^2).
 \label{eq:operational-error}
\end{equation}
Here $h_2$ is the binary entropy and
$c_\eta(x)=g(x)-g((1-\eta)x)$, as in the quantum bound.
\end{lemma}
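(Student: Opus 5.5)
The argument has two parts, both assembled from results already in place.

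\emph{First inequality.} Take the retained output $\zeta_n$ on $K_AS_A:K_BS_B$. The common adaptive bound \eqref{eq:telescope} applies to it, since it is the final state of an admissible adaptive protocol with unconditional energies $x_i$. This gives
\[
\ERi(\zeta_n)\le\sum_{i=1}^n c_\eta(x_i)\le n c_\eta(N).
\]
In particular $\ER(\zeta_n)$ is finite, which \Cref{lem:adaptive} guarantees. This is exactly the finiteness hypothesis of \Cref{lem:privacy}. Since $\Tr(\Pi_\gamma\zeta_n)\ge1-\varepsilon$, that lemma gives
\[
(1-\varepsilon)\log_2 d-h_2(\varepsilon)\le\ERi(\zeta_n).
\]
Chaining the two bounds and rearranging proves \eqref{eq:finite-error}.

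\emph{Operational-key bound.} Start from an operational key protocol with error $e<1$. Apply \Cref{lem:key-conversion} to pass to the shield-retaining implementation. That lemma leaves every transmitted marginal unchanged, so $\widetilde x_i=x_i$ and \eqref{eq:energy} still holds. The new implementation is still LOCC interleaved with channel uses, starting from a separable state, so \eqref{eq:telescope} applies with the same $x_i$. Its retained state passes a privacy test of key dimension $d$ with probability at least $(1-e)^2$. Set $\varepsilon:=1-(1-e)^2=2e-e^2$. Because $e<1$, we have $\varepsilon<1$, so the first part applies. Substituting $1-\varepsilon=(1-e)^2$ and $h_2(\varepsilon)=h_2(2e-e^2)$ into \eqref{eq:finite-error} yields \eqref{eq:operational-error}.

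\emph{Main point to check.} The only delicate step is confirming that the privacy test is evaluated on the same bipartition as the REE bound. Concretely, \eqref{eq:telescope} must hold for the state including the shields, across the cut $K_AS_A:K_BS_B$. This is true because the shields are local registers retained during LOCC, and \Cref{lem:adaptive} explicitly covers additional retained local registers. Everything else is substitution.
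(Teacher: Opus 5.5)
Your proof is correct and matches the paper's argument: you combine the adaptive bound \eqref{eq:telescope} on the shield-including state with \Cref{lem:privacy}, then invoke \Cref{lem:key-conversion} (which preserves every $x_i$ and gives acceptance at least $(1-e)^2$) and substitute $\varepsilon=2e-e^2$. Your explicit checks that $\ER(\zeta_n)<\infty$, so \Cref{lem:privacy} applies, and that both bounds refer to the same cut $K_AS_A:K_BS_B$ are useful details the paper leaves implicit.
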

\begin{proof}
The common output bound \eqref{eq:telescope} applies to $\zeta_n$
including its local shield registers.
\Cref{lem:privacy} gives
$\ERi(\zeta_n)\ge(1-\varepsilon)\log_2d-h_2(\varepsilon)$;
combining the two inequalities proves \eqref{eq:finite-error}.

For an operational key, \Cref{lem:key-conversion} produces a retained
protocol with the same optical input marginals and privacy-test
acceptance at least $(1-e)^2$. Its failure is therefore at most
$1-(1-e)^2=2e-e^2$. Apply \eqref{eq:finite-error} with
$\varepsilon=2e-e^2$ to obtain \eqref{eq:operational-error}.
\end{proof}

The error $e$ refers to the measured classical key and Eve, whereas
$\ERi$ is evaluated on the retained quantum state before the key
measurement. The conversion connects these quantities without changing
any transmitted energy. As in the quantum proof, the replica limit
is taken for each fixed finite protocol before the communication limit.

\subsection{Proof of the secret-key-capacity identity}
We take the vanishing-error limit of the key converse. Measuring the
entanglement supplied by the quantum protocol gives the matching
achievable key rate, completing the determination of $K_2$.

\begin{proof}[Secret-key part of \Cref{thm:capacity}]
For an admissible sequence with operational key error $e_n\to0$,
target size $d_n$, and rate $R_n=\log_2d_n/n$,
\Cref{lem:finite-error} gives
\begin{equation}
 R_n\le
 \frac{c_\eta(N)+h_2(2e_n-e_n^2)/n}{(1-e_n)^2}.
 \label{eq:key-weak-limit}
\end{equation}
Taking the limit proves
$K_2(\cL_\eta,N)\le c_\eta(N)$.
Conversely, measuring distilled maximally entangled pairs gives secret
key at the same asymptotic rate~\cite{DevetakWinter,HHHO}.
More explicitly, if the distributed target has infidelity
$\varepsilon$, Uhlmann's theorem and the fidelity--trace-distance
inequality~\cite{UhlmannFidelity,FuchsGraaf} give half trace distance
at most $\sqrt{\varepsilon}$ from an ideal maximally entangled
state tensored with some Eve state. Local measurement is contractive;
replacing that comparison Eve state by the actual marginal costs
at most a further $\sqrt{\varepsilon}$. Hence the operational error
in \eqref{eq:classical-task-error} satisfies
\begin{equation}
 e_{\rm cl}\le\min\{1,2\sqrt{\varepsilon}\}.
 \label{eq:entanglement-to-key-error}
\end{equation}
This conversion is independent of the optical channel and includes
Eve's complete public record. The quantum result established in
Appendix~\ref{sec:quantum} therefore implies
\begin{equation}
 K_2(\cL_\eta,N)
 \ge Q_2(\cL_\eta,N)=c_\eta(N).
\label{eq:key-achievable-rate}
\end{equation}
The operational equivalences \eqref{eq:operational-equivalences} also
give $D_2(\cL_\eta,N)=Q_2(\cL_\eta,N)$ and
$P_2(\cL_\eta,N)=K_2(\cL_\eta,N)$, completing
\Cref{thm:capacity} for all four tasks.
\end{proof}

\section{Extension to the quantum-limited amplifier}
\label{sec:amplifier}
This appendix proves the capacity extension in
Theorem~\ref{thm:amplifier} using the pure-loss analysis. We first
identify the conditional amplifier resources with the loss resources
of Lemma~\ref{lem:resource}. Their REE therefore requires no new
separable comparison. The additional energy estimate uses
degradability, after which the block, replica, and operational
arguments developed above apply. The strong-converse threshold is
proved jointly with pure loss in Appendix~\ref{sec:strong}.
Throughout, $\kappa>1$ and $r_\kappa$ is defined in
\eqref{eq:amp-rate}.

\subsection{Using the same sector teleportation measurement}
\label{sec:amp-simulation}
We verify the covariance hypothesis of Lemma~\ref{prop:sector};
its Bell measurement and contraction identity then give the exact
amplifier simulator without another teleportation proof.
In the two-mode-squeezing dilation of
\eqref{eq:amp-channel}, rotating the system modes by $U\in U(s)$
and the vacuum environment modes by $\bar U$ preserves the
interaction. Thus the input representation
$W_m(U)=\mathrm{Sym}^mU$ has output correction
$V(U)=\bigoplus_{k\ge0}W_k(U)$, a bounded unitary on the full
output Fock space. For each operator $T$ on $\cH_{s,m}$,
\begin{equation}
 \cA_\kappa^{\otimes s}(W_m(U)T W_m(U)^\dagger)
 =V(U)\cA_\kappa^{\otimes s}(T)V(U)^\dagger.
 \label{eq:amp-covariance}
\end{equation}
Applying Lemma~\ref{prop:sector} to
$J^\kappa_{s,m}$ in \eqref{eq:amp-resource} gives an LOCC map
$\mathcal T^\kappa_{s,m}$ such that
\begin{equation}
 (\id_{XY}\otimes\cA_\kappa^{\otimes s})(\rho_{XA^sY})
 =\mathcal T^\kappa_{s,m}(\rho_{XA^sY}\otimes J^\kappa_{s,m})
 \quad\text{if }\supp\rho_{A^s}\subseteq\cH_{s,m}.
 \label{eq:amp-simulation}
\end{equation}
There are at most $d_{s,m}^4$ Bell outcomes, whose probabilities
are input independent. The output correction may act on an
infinite-dimensional space; finiteness of the Bell POVM requires
only the finite input sector. Alice retains exactly $m$ photons
in the resource, and Bob has mean $\kappa m+s(\kappa-1)$.

\subsection{Conditional resources inherited from pure loss}
\label{sec:amp-resources}
We establish the state correspondence that transfers the pure-loss
REE calculation to amplification. For a fixed input occupation
$\mathbf r$ with $|\mathbf r|=m$, the amplifier isometry is
\begin{equation}
 \mathcal V_\kappa\ket{\mathbf r}
 =\kappa^{-(s+m)/2}
 \sum_{\mathbf v\in\mathbb N_0^s}
 \left(\frac{\kappa-1}{\kappa}\right)^{|\mathbf v|/2}
 \sqrt{\prod_{j=1}^s\binom{r_j+v_j}{v_j}}\,
 \ket{\mathbf r+\mathbf v}_B\ket{\mathbf v}_E.
 \label{eq:amp-fock}
\end{equation}
This is the number-basis form of the vacuum two-mode-squeezing
dilation~\cite{IvanSabapathySimon}. The environment photon number
equals the number of added photons.

\begin{lemma}[Conditional loss--amplifier correspondence]
\label{lem:amp-duality}
For an arbitrary input in $\cH_{s,m}$, the added photon number
$L$ has the input-independent distribution
\begin{equation}
 p_\ell=\binom{s+m+\ell-1}{\ell}
        \frac{(\kappa-1)^\ell}{\kappa^{s+m+\ell}},
 \qquad \EE L=(\kappa-1)(s+m).
 \label{eq:amp-added-distribution}
\end{equation}
Let $J^\kappa_{s,m,\ell}$ be the normalized resource conditioned
on $L=\ell$. With $J_{s,k,m}$ denoting the pure-loss resource
of Lemma~\ref{lem:resource} for input number $k$ and received
number $m$, one has
\begin{equation}
 J^\kappa_{s,m,\ell}\cong J_{s,m+\ell,m},
 \label{eq:amp-duality}
\end{equation}
where $\cong$ denotes exchange of the two parties. The normalized
conditional states do not depend on the gain or transmissivity.
\end{lemma}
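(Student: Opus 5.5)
The plan is a direct Fock-basis computation with the amplifier isometry \eqref{eq:amp-fock}, applied to the maximally entangled sector pair. For the distribution, the environment register records $\ket{\mathbf v}$, and conditioning on $L=\ell$ means projecting onto $|\mathbf v|=\ell$. Using \eqref{eq:amp-fock}, the squared norm of the $L=\ell$ component of $\mathcal V_\kappa\ket{\psi}$ for $\ket\psi=\sum_{|\mathbf r|=m}c_{\mathbf r}\ket{\mathbf r}$ is
\[
\kappa^{-(s+m)}\Bigl(\tfrac{\kappa-1}{\kappa}\Bigr)^{\ell}\sum_{|\mathbf v|=\ell}\Bigl\|\sum_{\mathbf r}c_{\mathbf r}\sqrt{\textstyle\prod_j\binom{r_j+v_j}{v_j}}\,\ket{\mathbf r+\mathbf v}\Bigr\|^2 .
\]
Input independence is cleanest via covariance rather than brute force. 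The conditioned operation commutes with $W_m(U)$ up to the output correction, by the joint rotation in the dilation (which maps $|\mathbf v|=\ell$ sectors of $E$ to themselves). Hence the effect $\Tr_{BE}[(I\otimes P^E_\ell)\mathcal V_\kappa\,\cdot\,\mathcal V_\kappa^\dagger]$ on $\cH_{s,m}$ commutes with the irreducible representation, and Schur's lemma makes it $p_\ell I_{s,m}$. Then $p_\ell$ follows by evaluating on the single state $\ket{m,0,\dots,0}$. That evaluation is the multinomial/negative-binomial identity $\sum_{|\mathbf v|=\ell}\binom{m+v_1}{v_1}=\binom{s+m+\ell-1}{\ell}$ (Vandermonde for $\sum_{|\mathbf v|=\ell}\prod_j\binom{r_j+v_j}{v_j}=\binom{m+s+\ell-1}{\ell}$). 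The mean is the negative binomial mean $(s+m)(\kappa-1)$, consistent with the output mean $\kappa m+s(\kappa-1)$.

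For the correspondence, I will compare the two pure tripartite vectors directly. On the amplifier side, the conditioned vector is $\ket{\Omega^\kappa}\propto\sum_{|\mathbf r|=m}\sum_{|\mathbf v|=\ell}\sqrt{\prod_j\binom{r_j+v_j}{v_j}}\ket{\mathbf r}_R\ket{\mathbf r+\mathbf v}_B\ket{\mathbf v}_E$. Substituting $\mathbf n=\mathbf r+\mathbf v$ with $|\mathbf n|=m+\ell$ gives coefficients $\sqrt{\prod_j\binom{n_j}{r_j}}$, with all gain dependence confined to the overall normalization. On the loss side, take $\Phi_{s,m+\ell}$ on $R'A'$ and apply the loss isometry in Fock form: $\ket{\mathbf n}\mapsto\sum_{\mathbf k\le\mathbf n}\sqrt{\prod_j\binom{n_j}{k_j}\eta^{k_j}q^{n_j-k_j}}\ket{\mathbf k}_B\ket{\mathbf n-\mathbf k}_E$. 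Conditioning on received number $m$ yields $\propto\sum_{|\mathbf n|=m+\ell}\sum_{|\mathbf k|=m}\sqrt{\prod_j\binom{n_j}{k_j}}\ket{\mathbf n}_{R'}\ket{\mathbf k}_{B'}\ket{\mathbf n-\mathbf k}_{E'}$, again $\eta$-free. The identification is then $R\leftrightarrow B'$ (both carry $\mathbf r=\mathbf k$, $m$ photons), $B\leftrightarrow R'$ (both $\mathbf n$, $m+\ell$ photons), and $E\leftrightarrow E'$ (both $\mathbf v=\mathbf n-\mathbf k$). The two vectors coincide term by term. Tracing the environment shows $J^\kappa_{s,m,\ell}$ equals $J_{s,m+\ell,m}$ with Alice's and Bob's registers swapped. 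One caveat is that the reference basis convention (Bob's register being the transposed/conjugated side) matters only up to a local complex conjugation in the real Fock basis, which is trivial here since all coefficients are real and nonnegative.

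The main obstacle I expect is bookkeeping rather than concept: justifying that \eqref{eq:amp-fock} is the correct vacuum two-mode-squeezing dilation including the normalization $\kappa^{-(s+m)/2}$, and the per-mode factor $\kappa^{-(1+r_j)/2}((\kappa-1)/\kappa)^{v_j/2}\sqrt{\binom{r_j+v_j}{v_j}}$. Checking it per mode via $\hat b$ in \eqref{eq:amp-channel} with $\hat a^{\dagger r}$ acting on the squeezed vacuum is routine. I would cite \cite{IvanSabapathySimon} and verify normalization by the negative binomial series. The convergence of the infinite $\mathbf v$-sum poses no issue since conditioning on $L=\ell$ leaves finitely many terms.
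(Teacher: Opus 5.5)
Your proposal is correct and follows the paper's proof in its main step. The paper also expands the conditioned purification $\ket{\Omega^\kappa_{s,m,\ell}}$ and the conditioned loss purification of $\Phi_{s,m+\ell}$ in the occupation basis, uses $\binom{r_j+v_j}{v_j}=\binom{n_j}{r_j}$, and matches the two term by term after exchanging the reference and received systems.

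The differences are minor. For input independence, the paper notes that at fixed $\ell$ the $BE$ images of distinct occupations $\mathbf r$ are orthogonal, and applies the generating-function identity for every $\mathbf r$; you instead use covariance plus Schur's lemma and evaluate on a single state. The paper also checks $d_{s,m}a_{s,m,\ell}=d_{s,m+\ell}\binom{m+\ell}{m}$ explicitly, whereas your argument gets it automatically because proportional unit vectors with positive coefficients coincide.
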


\begin{proof}
Multiplying the generating functions $(1-t)^{-(r_j+1)}$ gives
\begin{equation}
 \sum_{|\mathbf v|=\ell}\prod_{j=1}^s\binom{r_j+v_j}{v_j}
 =\binom{s+m+\ell-1}{\ell}=:a_{s,m,\ell}.
 \label{eq:amp-normalization}
\end{equation}
For a fixed $\ell$, the joint $BE$ images of distinct occupations
$\mathbf r$ are orthogonal. Thus the conditional branch effect is
$p_\ell I_{s,m}$, proving input independence also for coherent
sector inputs and inputs correlated with memories. Normalization
and the mean in \eqref{eq:amp-added-distribution} follow from
the same generating function.

Applying \eqref{eq:amp-fock} to half of $\Phi_{s,m}$ gives the
conditional purification
\begin{equation}
 \ket{\Omega^\kappa_{s,m,\ell}}
 =\frac1{\sqrt{d_{s,m}a_{s,m,\ell}}}
 \sum_{\substack{|\mathbf r|=m\\|\mathbf v|=\ell}}
 \sqrt{\prod_{j=1}^s\binom{r_j+v_j}{v_j}}\,
 \ket{\mathbf r}_R\ket{\mathbf r+\mathbf v}_B\ket{\mathbf v}_E.
 \label{eq:amp-conditional-purification}
\end{equation}
Put $k=m+\ell$. The pure-loss purification
$(I\otimes V_{k,m})\ket{\Phi_{s,k}}$, with the symmetric split
from \eqref{eq:symmetric-split}, has the same occupation
coefficients after exchanging reference and received systems.
Indeed, the coefficient for $\mathbf r+\mathbf v=\mathbf n$
in $V_{k,m}\ket{\mathbf n}$ is
$[\prod_j\binom{n_j}{r_j}/\binom{k}{m}]^{1/2}$, and
\begin{equation}
 d_{s,m}a_{s,m,\ell}=d_{s,m+\ell}\binom{m+\ell}{m}.
 \label{eq:amp-dimension-normalization}
\end{equation}
The normalized purifications therefore agree. Tracing their
common environment proves \eqref{eq:amp-duality}.
\end{proof}

\begin{lemma}[Amplifier resource cost]
\label{lem:amp-resource}
The conditional and unconditional resource REEs are
\begin{align}
 \ER(J^\kappa_{s,m,\ell})
   &=\log_2\frac{d_{s,m+\ell}}{d_{s,\ell}},
   \label{eq:amp-conditional-ree}\\
 F_{\kappa,s}(m):=\ER(J^\kappa_{s,m})
   &=\sum_{\ell\ge0}p_\ell
      \log_2\frac{d_{s,m+\ell}}{d_{s,\ell}}.
   \label{eq:amp-resource-ree}
\end{align}
For $\beta_m=\cA_\kappa^{\otimes s}(\pi_{s,m})$ and
$\gamma_m=(\cA_\kappa^c)^{\otimes s}(\pi_{s,m})$, where
$\cA_\kappa^c$ is the complementary channel, this cost also equals
\begin{equation}
 F_{\kappa,s}(m)=S(\beta_m)-S(\gamma_m).
 \label{eq:amp-resource-coherent}
\end{equation}
\end{lemma}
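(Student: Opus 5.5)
We derive all three identities from the correspondence in Lemma~\ref{lem:amp-duality} combined with Lemma~\ref{lem:resource}; no new separable comparison is needed.

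\textbf{Conditional REE.} Fix $\ell$ and put $k=m+\ell$. By \eqref{eq:amp-duality}, $J^\kappa_{s,m,\ell}$ is the pure-loss conditional resource $J_{s,k,m}$ with the two parties exchanged. REE is invariant under exchanging the parties, since the separable set is symmetric under the swap. Lemma~\ref{lem:resource}, applied with input number $k$ and received number $m$ (so the lost number is $k-m=\ell$), then gives
\[
\ER(J^\kappa_{s,m,\ell})=\log_2\frac{d_{s,m+\ell}}{d_{s,\ell}},
\]
which is \eqref{eq:amp-conditional-ree}.

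\textbf{Unconditional REE.} Bob's output photon number in $J^\kappa_{s,m}$ equals $m+L$, because the added photons are exactly the environment count in \eqref{eq:amp-fock}. The branches labelled by $\ell$ therefore occupy orthogonal total-number subspaces of Bob's share. By Lemma~\ref{lem:amp-duality}, these branches occur with the input-independent probabilities $p_\ell$. The local-flags identity \eqref{eq:flags} holds for countably many outcomes, as noted in Appendix~\ref{sec:ree-facts}, so
\[
F_{\kappa,s}(m)=\sum_{\ell\ge0}p_\ell\,\ER(J^\kappa_{s,m,\ell}),
\]
which is \eqref{eq:amp-resource-ree}.

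One technical point must be checked: every term is finite, and the series converges. We have $\log_2(d_{s,m+\ell}/d_{s,\ell})\le m\log_2(1+(s-1+\ell)/(\ell+1))$, which is bounded uniformly in $\ell$, so summability is immediate.

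\textbf{Coherent-information form.} The flagged state has Alice's marginal $\pi_{s,m}$ on her share. For each branch, \eqref{eq:spectrum-result} with exchanged parties shows that $J^\kappa_{s,m,\ell}$ is maximally mixed of rank $d_{s,\ell}$, so $S(J^\kappa_{s,m,\ell})=\log_2 d_{s,\ell}$. Its received marginal is the reference marginal $\pi_{s,m+\ell}$ of the loss picture, giving entropy $\log_2 d_{s,m+\ell}$. I would use the following bookkeeping:
\begin{itemize}
\item $\beta_m=\sum_\ell p_\ell\,\pi_{s,m+\ell}$ has orthogonal blocks, so $S(\beta_m)=H(p)+\sum_\ell p_\ell\log_2 d_{s,m+\ell}$.
\item $\gamma_m$ is the environment marginal, which equals $\sum_\ell p_\ell\,\pi_{s,\ell}$ by \eqref{eq:conditional-environment} after the exchange, so $S(\gamma_m)=H(p)+\sum_\ell p_\ell\log_2 d_{s,\ell}$.
\end{itemize}
Subtracting cancels the $H(p)$ terms and yields \eqref{eq:amp-resource-coherent}.

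\textbf{Main obstacle.} The main difficulty is this entropy bookkeeping in infinite dimensions. We need $H(p)<\infty$ for the negative-binomial law $p_\ell$ (which holds since $\EE L<\infty$, by Lemma~\ref{lem:scalar}) and $S(\gamma_m)<\infty$, so that the subtraction is legitimate. We must also confirm that the environment marginal in each branch is exactly $\pi_{s,\ell}$, which follows from the explicit purification \eqref{eq:amp-conditional-purification} by summing over $\mathbf r$ using the loss identity.
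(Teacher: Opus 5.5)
Your proposal is correct and follows the paper's proof essentially step for step: the conditional REE comes from Lemma~\ref{lem:resource} via the party-exchange correspondence of Lemma~\ref{lem:amp-duality}; the unconditional REE comes from the countable local-flags identity over Bob's orthogonal number sectors $m+\ell$; and the coherent-information form comes from the block-diagonal marginals $\sum_\ell p_\ell\pi_{s,m+\ell}$ and $\sum_\ell p_\ell\pi_{s,\ell}$, whose common count entropy cancels. The one point you flag but leave open, $S(\gamma_m)<\infty$, follows as in the paper from the finite mean photon number of $\gamma_m$, or directly from $\log_2 d_{s,\ell}\le s+\ell-1$ together with $\EE L<\infty$.
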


\begin{proof}
Equation~\eqref{eq:amp-conditional-ree} follows immediately from
\eqref{eq:conditional-ree}, Lemma~\ref{lem:amp-duality}, and
invariance of REE under exchanging the parties. Bob's numbers
$m+\ell$ occupy orthogonal local subspaces, so the countable
local-flags identity \eqref{eq:flags} proves
\eqref{eq:amp-resource-ree}. The series is finite because
$d_{s,m+\ell}/d_{s,\ell}
=\prod_{j=1}^{s-1}(1+m/(\ell+j))\le d_{s,m}$.

The same correspondence transfers the marginal identities in
\eqref{eq:reference-marginal} and
\eqref{eq:conditional-environment}: conditionally, Bob's marginal
is $\pi_{s,m+\ell}$ and the environment marginal is
$\pi_{s,\ell}$. Hence
\begin{equation}
 \begin{aligned}
 S(\beta_m)&=H(L)+\sum_{\ell\ge0}p_\ell\log_2d_{s,m+\ell},\\
 S(\gamma_m)&=H(L)+\sum_{\ell\ge0}p_\ell\log_2d_{s,\ell}.
 \end{aligned}
 \label{eq:amp-marginal-entropies}
\end{equation}
Both optical states have finite mean energy and therefore finite
entropy. Moreover, $H(L)\le g((\kappa-1)(s+m))<\infty$ by
Lemma~\ref{lem:scalar}. Subtraction cancels the same count entropy
and proves \eqref{eq:amp-resource-coherent}.
\end{proof}

\subsection{Transferring the mean input energy to the resource cost}
\label{sec:amp-energy}
This is the additional energy estimate needed for amplification.
It compares the output and environment with product thermal
states through the degrading map; all sector dimensions and
entanglement identities have already been established.

\begin{lemma}[Amplifier energy estimate]
\label{lem:amp-energy}
For every $s\ge1$ and $m\ge0$,
\begin{equation}
 F_{\kappa,s}(m)\le s r_\kappa(m/s).
 \label{eq:amp-sector-cost}
\end{equation}
The function $r_\kappa$ is increasing and concave. Thus, for any
nonnegative integer-valued $M$ with $\EE M=sx<\infty$,
\begin{equation}
 \EE F_{\kappa,s}(M)\le s r_\kappa(x).
 \label{eq:amp-average-cost}
\end{equation}
\end{lemma}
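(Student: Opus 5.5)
The plan is to reduce the sector cost to a sum of single-mode coherent informations of the amplifier, and then invoke the known energy-constrained optimality of the thermal input. The starting point is the coherent form \eqref{eq:amp-resource-coherent}: $F_{\kappa,s}(m)=S(\beta_m)-S(\gamma_m)$, with both entropies finite.

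The quantum-limited amplifier is degradable~\cite{CarusoGiovannetti}. There is a channel $\mathcal D$ with $\cA_\kappa^c=\mathcal D\circ\cA_\kappa$, and I take its Stinespring isometry to be $B\to E'F$. Applying $\mathcal D^{\otimes s}$ mode by mode to $\beta_m$ gives a state $\theta_{E'^sF^s}$ with $\theta_{E'^s}=\gamma_m$. Because the isometry is unitary-equivalent on $B^s$, we have $S(\beta_m)=S(\theta_{E'^sF^s})$. Hence
\begin{equation*}
 F_{\kappa,s}(m)=S(F^s|E'^s)_\theta\le\sum_{j=1}^sS(F_j|E'_j)_\theta
 =\sum_{j=1}^s\bigl[S(\cA_\kappa(\rho_j))-S(\cA_\kappa^c(\rho_j))\bigr].
\end{equation*}
The inequality is strong subadditivity (conditioning reduces entropy) together with subadditivity of conditional entropy. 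Here $\rho_j$ is the single-mode marginal of $\pi_{s,m}$. By permutation invariance, each $\rho_j$ is diagonal in the number basis with mean $m/s$. For $m=0$ all terms vanish, so I assume $m\ge1$.

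The remaining step, which I expect to be the main obstacle, is the single-mode estimate
\begin{equation*}
 S(\cA_\kappa(\rho))-S(\cA_\kappa^c(\rho))\le r_\kappa(x)
\end{equation*}
for every one-mode state of mean $x$. Equality holds for $\rho=\tau_x$, since the outputs are thermal with means $\kappa x+\kappa-1$ and $(\kappa-1)(x+1)$. My first choice is to invoke the established energy-constrained maximization of the amplifier's coherent information by thermal inputs~\cite{QiWilde,WildeQi}, whose single-letter value is exactly $r_\kappa(x)$. That maximization uses concavity of coherent information for degradable channels together with phase covariance. If a self-contained route is preferred, I would first phase-twirl using concavity, which is harmless here since $\rho_j$ is already diagonal. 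I would then attempt a relative-entropy comparison with thermal references through $\mathcal D$, in the spirit of \Cref{lem:energy}. Whether this avoids the entropy photon-number inequality is the point of uncertainty, so I would rely on the cited result.

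Finally, direct differentiation of \eqref{eq:amp-rate} shows that $r_\kappa$ is increasing and concave on $[0,\infty)$. This mirrors \eqref{eq:main-concavity}: the derivative is a positive difference of logarithms, and the second derivative is negative. Summing over $j$ gives \eqref{eq:amp-sector-cost}. Then Jensen's inequality applied to $M$ yields
\begin{equation*}
 \EE F_{\kappa,s}(M)\le s\,\EE r_\kappa(M/s)\le s\,r_\kappa(x),
\end{equation*}
which is \eqref{eq:amp-average-cost}. Unlike \Cref{lem:energy}, no count-entropy overhead appears, because the coherent form already cancels $H(L)$.
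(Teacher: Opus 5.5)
Your argument is correct, but it takes a longer route than the paper's. You single-letterize first. Writing $S(\beta_m)-S(\gamma_m)$ as the conditional entropy $S(F^s|E'^s)_\theta$ of the degraded extension and applying strong subadditivity, you reduce the bound to single-mode coherent informations of the number-diagonal marginals $\rho_j$. You then import the thermal optimality of the amplifier's energy-constrained coherent information from \cite{QiWilde,WildeQi}.

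The paper does not single-letterize and cites no optimization theorem; it works directly on the $s$-mode states. Put $x=m/s$, $y_x=\kappa x+\kappa-1$ and $z_x=(\kappa-1)(x+1)$. Since $\log_2\tau_y^{\otimes s}$ is affine in the total photon number, and $\beta_m,\gamma_m$ have total means exactly $sy_x$ and $sz_x$, one has $D(\beta_m\Vert\tau_{y_x}^{\otimes s})=sg(y_x)-S(\beta_m)$ and $D(\gamma_m\Vert\tau_{z_x}^{\otimes s})=sg(z_x)-S(\gamma_m)$. The explicit phase-conjugating degrading map satisfies $\mathcal D_\kappa(\tau_{y_x})=\tau_{z_x}$. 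Data processing under $\mathcal D_\kappa^{\otimes s}$ then gives $S(\beta_m)-S(\gamma_m)\le s[g(y_x)-g(z_x)]=s\,r_\kappa(x)$ in one line.

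This is exactly the ``self-contained route'' you hesitated over. It needs no phase twirl and no entropy photon-number inequality. Applied to an arbitrary one-mode input of mean $x$, the same comparison reproves the single-mode bound you imported.

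What each approach buys: yours gives a transparent coherent-information reading and would carry over to any degradable channel whose single-mode energy-constrained coherent information is already known. The paper's is self-contained and skips the subadditivity step altogether.

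One small point: positivity of $r_\kappa'$ is not evident from its form, a difference of logarithms weighted by $\kappa$ and $\kappa-1$. Argue as the paper does, from $r_\kappa''<0$ together with $\lim_{x\to\infty}r_\kappa'(x)=0$.
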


\begin{proof}
The amplifier is degradable~\cite{CarusoGiovannetti}:
$\cA_\kappa^c=\mathcal D_\kappa\circ\cA_\kappa$.
A phase-conjugating realization of $\mathcal D_\kappa$, with an
independent vacuum mode $\hat f$, is
\begin{equation}
 \hat c=\sqrt{\frac{\kappa-1}{\kappa}}\,\hat b^\dagger
        +\sqrt{\frac{2\kappa-1}{\kappa}}\,\hat f.
 \label{eq:amp-degrading}
\end{equation}
Substitution of \eqref{eq:amp-channel} gives input coefficient
$\sqrt{\kappa-1}\,\hat a^\dagger$ and vacuum annihilation noise
whose total squared coefficient is $\kappa$, as in the complementary
channel. This verifies the degrading relation on arbitrary inputs.

Set $x=m/s$, $y_x=\kappa x+\kappa-1$, and
$z_x=(\kappa-1)(x+1)$. The states $\beta_m,\gamma_m$ have total
mean photon numbers $sy_x,sz_x$, and the thermal states from
\eqref{eq:thermal-state} obey
$\mathcal D_\kappa(\tau_{y_x})=\tau_{z_x}$.
Their logarithms are affine in photon number, so
\begin{equation}
 \begin{aligned}
 D(\beta_m\Vert\tau_{y_x}^{\otimes s})&=s g(y_x)-S(\beta_m),\\
 D(\gamma_m\Vert\tau_{z_x}^{\otimes s})&=s g(z_x)-S(\gamma_m).
 \end{aligned}
 \label{eq:amp-thermal-comparison}
\end{equation}
Data processing under $\mathcal D_\kappa^{\otimes s}$ says that
the first relative entropy is at least the second. Rearranging
and using \eqref{eq:amp-resource-coherent} gives
\eqref{eq:amp-sector-cost}. The reference thermal states are
faithful because $\kappa>1$, and all terms are finite, also at
$m=0$.

For $x\ge0$, differentiation gives
\begin{equation}
 r_\kappa''(x)=
 -\frac{2\kappa-1}{\ln2\,(x+1)(\kappa x+\kappa-1)
                         ((\kappa-1)(x+1)+1)}<0,
 \qquad \lim_{x\to\infty}r_\kappa'(x)=0.
 \label{eq:amp-concavity}
\end{equation}
Consequently $r_\kappa$ is increasing and concave, and Jensen's
inequality gives \eqref{eq:amp-average-cost}. We also record
\begin{equation}
 r_\kappa(0)=0,\qquad
 \lim_{x\to\infty}r_\kappa(x)=\log_2\frac{\kappa}{\kappa-1}.
 \label{eq:amp-rate-limits}
\end{equation}
Unlike the pure-loss estimate in Lemma~\ref{lem:energy}, this
resource estimate has no residual photon-count entropy term.
\end{proof}

\subsection{The block bound and its unbounded output tail}
\label{sec:amp-block}
We reuse the block argument of Lemma~\ref{prop:block}, changing
only the resource cost, the output mean, and the treatment of
the output support. This last step is needed because the amplifier
can create arbitrarily many photons from a finite input sector.

\begin{lemma}[Amplifier block gain]
\label{lem:amp-block}
Let $\rho_{XA^sY}$ have finite $\ER(XA^s:Y)_\rho$ and
$\Tr(\hat N_A\rho)=sx<\infty$. For
$\omega=(\id_{XY}\otimes\cA_\kappa^{\otimes s})(\rho)$,
\begin{equation}
 \ER(X:B^sY)_\omega
 \le\ER(XA^s:Y)_\rho+s r_\kappa(x)
       +g\bigl(s(\kappa x+\kappa-1)\bigr).
 \label{eq:amp-block}
\end{equation}
The memory spaces may be arbitrary separable Hilbert spaces.
\end{lemma}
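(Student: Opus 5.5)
I will follow the proof of \Cref{prop:block}, first for finite-dimensional memories and a finite total-number cutoff on the optical input, and then remove both restrictions by the same truncation and lower-semicontinuity argument. The only genuinely new step is that a finite input sector now has an infinite-dimensional output, so \Cref{lem:pinching} cannot be applied directly to $\omega$.

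\emph{Finite input cutoff.} Measure the total input number $M$ and write $\rho=\sum_m p_m\rho_m$ after pinching, with $\omega_m=\cA_\kappa^{\otimes s}(\rho_m)$. Phase covariance of $\cA_\kappa^{\otimes s}$ (a global phase rotation of the input maps to a phase rotation of the output, since the environment is rotated conjugately) gives $\cL$-free block structure: output number $K=M+L$, and the pinching $\cP_B$ onto output number sectors, grouped by the \emph{excess} $K$ relative to input, is what we need. I would instead pinch by $K$ directly and note that $\cP_B\omega=\cA_\kappa^{\otimes s}(\cP_A\rho)$ still holds after a further coarse-graining over $L$: input coherences between $m\neq m'$ only produce output coherences between $K=m+\ell$ and $K'=m'+\ell'$ with $K-K'=m-m'$ (by covariance), so after pinching in $K$ no input off-diagonal survives. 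This gives $\cP_B\omega=\sum_m p_m\cP_B\omega_m$, and moreover $\ER(\cP_B\omega_m)\le\ER(\omega_m)+0$ is not needed. Instead I chain
\[
\ER(\cP_B\omega)\le\sum_m p_m\ER(\omega_m)\le\sum_m p_m[\ER(\rho_m)+F_{\kappa,s}(m)]\le\ER(\rho)+s r_\kappa(x),
\]
using convexity, the simulation \eqref{eq:amp-simulation} with LOCC monotonicity and \eqref{eq:subadd}, then \eqref{eq:strong} and \Cref{lem:amp-energy}. Here the mixture over $m$ must be compared with $\cP_B\omega$. I handle this by noting that $\sum_m p_m\omega_m=\cA_\kappa^{\otimes s}(\cP_A\rho)$ and that $\cP_B$ applied to it is a local operation, so $\ER(\cP_B\omega)\le\ER(\sum_m p_m\omega_m)$.

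\emph{Pinching penalty with unbounded output.} Since \Cref{lem:pinching} was proved in finite dimensions, I would first apply the local output truncation channel $\mathcal C_{B,t}(T)=Q^B_tTQ^B_t+\Tr[(I-Q^B_t)T]\proj{0}$ to $\omega$. This is a local channel on Bob that commutes with the number pinching and keeps everything finite dimensional. \Cref{lem:pinching} then gives
\[
\ER(\mathcal C_{B,t}\omega)\le\ER(\mathcal C_{B,t}\cP_B\omega)+H(K_t)\le\ER(\cP_B\omega)+g(s(\kappa x+\kappa-1)),
\]
because the truncated count has mean at most $\Tr(\hat N_B\omega)=s(\kappa x+\kappa-1)$, and $\mathcal C_{B,t}$ is LOCC. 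Letting $t\to\infty$, we have $\mathcal C_{B,t}\omega\to\omega$ in trace norm, and lower semicontinuity of REE~\cite[Corollary~20]{LamiShirokov} yields \eqref{eq:amp-block} for finite-support inputs. Finiteness of $\ER$ of the truncated states follows because the memories and the truncated output are finite dimensional.

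\emph{General inputs.} I would repeat verbatim the truncation $\mathcal C_{X,t}\otimes\mathcal C_{A,t}\otimes\mathcal C_{Y,t}$ from the proof of \Cref{prop:block}. It does not increase $x$ or $\ER(\rho)$, and the amplifier is trace-norm contractive, so $\omega_t\to\omega$. Monotonicity of $r_\kappa$ and $g$ then transfers the bound, and lower semicontinuity gives the claim. The main obstacle I anticipate is the interaction between the two truncations, specifically ensuring that the output pinching identity and the mean $s(\kappa x_t+\kappa-1)$ remain valid. The vacuum replacement is harmless here, since it only lowers input energy and therefore output mean.
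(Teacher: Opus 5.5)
Your proposal is correct and follows essentially the same route as the paper. That route has three parts: phase covariance gives $\cP_B\omega=\cA_\kappa^{\otimes s}(\cP_A\rho)=\sum_m p_m\omega_m$, followed by the convexity, sector-simulation and strong-monotonicity chain with Lemma~\ref{lem:amp-energy}; the output truncation $\mathcal C_{B,t}$ commutes with number pinching, so Lemma~\ref{lem:pinching} applies on finite support with the count-entropy bound $g(s(\kappa x+\kappa-1))$; and lower semicontinuity plus the memory and vacuum-input truncations of Lemma~\ref{prop:block} finish the argument. One expository point: each $\omega_m$ is already block diagonal in output number, so covariance gives the equality $\cP_B\omega=\sum_m p_m\omega_m$ directly, and your extra local-monotonicity comparison in the first step is harmless but redundant.
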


\begin{proof}
First use finite-dimensional memories and a finite input cutoff.
Phase covariance gives the pinching identity
\eqref{eq:main-phase} with $\cA_\kappa$ in place of
$\cL_\eta$. The convexity, sector simulation, and selective
monotonicity argument of \eqref{eq:main-block-chain}, now using
\eqref{eq:amp-average-cost}, therefore yields
\begin{equation}
 \ER(\cP_B\omega)\le\ER(\rho)+s r_\kappa(x).
 \label{eq:amp-pinched-block}
\end{equation}
Only the subsequent output pinching step needs an additional limit.
With $Q_t^B=\sum_{k=0}^tP_k^B$, use the local channel
\begin{equation}
 \mathcal C_{B,t}(T)=Q_t^BTQ_t^B+
          \Tr[(I-Q_t^B)T]\proj{0^{\otimes s}}.
 \label{eq:amp-output-cutoff}
\end{equation}
It commutes with total-number pinching and does not increase the
output mean $s(\kappa x+\kappa-1)$. The truncated state
$\omega_t=(\id_{XY}\otimes\mathcal C_{B,t})(\omega)$ has finite
support, so Lemmas~\ref{lem:pinching} and~\ref{lem:scalar} give
\begin{equation}
 \ER(\omega_t)
 \le\ER(\cP_B\omega)+g\bigl(s(\kappa x+\kappa-1)\bigr)
 \le\ER(\rho)+s r_\kappa(x)+g\bigl(s(\kappa x+\kappa-1)\bigr).
 \label{eq:amp-output-tail-bound}
\end{equation}
The discarded probability tends to zero, hence $\omega_t\to\omega$
in trace norm by the gentle measurement estimate. Lower
semicontinuity of unregularized REE proves \eqref{eq:amp-block}
for finite memory and input support, without an output cutoff.

For arbitrary memories and finite-mean input, use exactly the
local memory truncations and vacuum input replacement from the
proof of Lemma~\ref{prop:block}. They give
$\rho_t\to\rho$, $\ER(\rho_t)\le\ER(\rho)$, and $x_t\le x$
by \eqref{eq:truncate-bounds}. The preceding bound already permits
infinite output support for each $\rho_t$. Monotonicity of
$r_\kappa$ and $g$ makes its right-hand side uniformly bounded
by that of \eqref{eq:amp-block}. Channel contractivity and lower
semicontinuity complete the limit, as in
\eqref{eq:infinite-block}. No memory entropy enters the proof.
\end{proof}

\subsection{Adaptive capacities from the shared operational tools}
\label{sec:amp-capacity-proof}
This subsection applies the replica and target-test arguments
already proved for pure loss. The new block estimate is the only
channel-dependent input to these converse steps.

In \eqref{eq:generic-block-hypothesis}, take $f=r_\kappa$ and
$h_s(x)=g(s(\kappa x+\kappa-1))$. Lemma~\ref{lem:scalar}
ensures $h_s(x)/s\to0$. Thus
\eqref{eq:generic-adaptive-conclusion} gives, for every fixed
adaptive amplifier protocol,
\begin{equation}
 \ERi(\zeta_n)\le\sum_{i=1}^n r_\kappa(x_i)\le n r_\kappa(N).
 \label{eq:amp-adaptive}
\end{equation}
As before, the replica limit precedes the communication limit,
and $\zeta_n$ may include all retained local shields.

\begin{proof}[Proof of the capacity identity in Theorem~\ref{thm:amplifier}]
The quantum target test of Lemma~\ref{lem:quantum-test} and the
channel-independent key reduction and privacy test of
Lemmas~\ref{lem:key-conversion} and~\ref{lem:privacy} imply
\begin{equation}
 \begin{aligned}
 (1-\varepsilon)\log_2d&\le n r_\kappa(N)+h_2(\varepsilon),\\
 (1-e)^2\log_2d&\le n r_\kappa(N)+h_2(2e-e^2).
 \end{aligned}
 \label{eq:amp-finite-errors}
\end{equation}
Here $\varepsilon$ is entanglement infidelity and $e$ is the
operational classical error of \eqref{eq:classical-task-error}.
Letting the corresponding error vanish proves the quantum and
key weak converses. The operational conversions in
Section~\ref{sec:results}, which add no optical calls, also give
the distribution and private-message converses.

For achievability, the established energy-constrained unassisted
quantum coding theorem supplies every rate below
$r_\kappa(N)$~\cite{QiWilde,WildeQi}. Its coherent information is
the output entropy minus the complementary entropy of a thermal
input of mean $N$, namely the two terms in \eqref{eq:amp-rate}.
An achievable quantum code distributes entanglement at the same
rate. Measuring the entangled target generates key with operational
error bounded by \eqref{eq:entanglement-to-key-error}; a one-time
pad gives private transmission. Thus all four lower bounds match.

To meet the stronger per-use unconditional constraint, publicly
permute the $n$ modes of an achievable parallel code uniformly,
and undo the permutation before decoding. The product channel
commutes with this permutation, preserving the conditional error
and rate. Each unconditional input mean becomes
\begin{equation}
 \overline x_i=\frac1n\sum_{j=1}^n x_j\le N.
 \label{eq:amp-per-use}
\end{equation}
The permutation alphabet is finite at each blocklength. The
converse already applies to this smaller protocol class.
At $N=0$, every transmitted mode is vacuum and independent of
the memories; amplification only creates a local state at Bob.
All four capacities are therefore zero, in agreement with
\eqref{eq:amp-rate-limits}.
\end{proof}

\subsection{Gaussian output entanglement from the pure-loss corollary}
\label{sec:amp-output}
The output-state identity \eqref{eq:amp-output-ree} follows
directly from Corollary~\ref{cor:tmsv}. This avoids a second
calculation of regularized REE for a Gaussian state.

\begin{proof}[Proof of \eqref{eq:amp-output-ree}]
For $N>0$, put
\begin{equation}
 M=\kappa N+\kappa-1,\qquad \eta'=\frac{N}{M}\in(0,1).
 \label{eq:amp-gaussian-parameters}
\end{equation}
In the convention where the vacuum covariance matrix is $I_2$,
the centered amplifier output $J^{\rm amp}_{\kappa,N}$ has local
blocks $(2N+1)I_2$, $(2M+1)I_2$, and correlation block
$2\sqrt{\kappa N(N+1)}\,Z$, where $Z=\mathrm{diag}(1,-1)$.
The pure-loss output $J_{\eta',M}$ has these local blocks in
reverse order and correlation block
$2\sqrt{\eta'M(M+1)}\,Z$. Since
$M+1=\kappa(N+1)$, the covariance matrices agree after
exchanging the parties. Centered Gaussian states are determined
by their covariance matrices~\cite{Weedbrook}, so
\begin{equation}
 J^{\rm amp}_{\kappa,N}\cong J_{\eta',M},\qquad
 \ERi(J^{\rm amp}_{\kappa,N})
 =g(M)-g((1-\eta')M)=r_\kappa(N).
 \label{eq:amp-gaussian-correspondence}
\end{equation}
The REE equality is Corollary~\ref{cor:tmsv}, together with
$(1-\eta')M=(\kappa-1)(N+1)$. At $N=0$, the output is a
product state, and both sides of \eqref{eq:amp-output-ree} vanish.
\end{proof}

\section{Mean-energy strong-converse thresholds}\label{sec:strong}
Recall that $R_n=\log_2d_n/n$ is the rate, $f_n$ is the entanglement
fidelity in \eqref{eq:quantum-error}, and $e_n$ is the operational key
error in \eqref{eq:key-error}. The threshold
$Q_2^{\dagger}$ is the infimum of values $C$ such that
every admissible sequence with $\liminf_nR_n>C$ has $f_n\to0$.
The threshold $K_2^{\dagger}$ is defined analogously
with $e_n\to1$. Both use the same mean-energy constraint
$\sum_i x_i\le nN$ as the vanishing-error capacities. The
error-preserving equivalences in Section~\ref{sec:results} give the
same thresholds for entanglement distribution and private transmission.
This appendix treats both pure loss and the amplifier extension using
the same dilution construction and error comparisons.

The operational conversions in \eqref{eq:operational-equivalences}
preserve the errors needed here. Standard teleportation has entanglement
fidelity equal to the resource's overlap with $\Phi_d$. For a uniform
message, one-time-pad encoding is a reversible classical relabeling
after appending an independent uniform register; its public ciphertext
is uniform and independent of Eve's previous marginal. Consequently
the quantum/distribution and key/private equivalences hold also for
the dagger thresholds. In particular, for every $N>0$ and $0<R<-\log_2(1-\eta)$, admissible
fixed-length protocols exist with asymptotic rate $R$ and quantum fidelity
bounded away from zero, or classical-task error bounded below one.
The proof below establishes this stronger operational statement.

The following lemma quantifies the resulting rate--error tradeoff and
proves the lower threshold in \Cref{thm:mean-sc}. The construction
already uses parallel inputs, so it is an admissible special case of
the adaptive model considered throughout the paper.

\subsection{Why unconditional mean energy has a different threshold}
We mix an energetic protocol with vacuum transmission using a public
flag. This construction respects the unconditional energy budget while
retaining a nonzero chance of success at rates above the energy-constrained
vanishing-error capacity.

\begin{lemma}[Vacuum dilution of an entire protocol]\label{lem:sc-dilution}
Let $E>0$ be a mean photon budget per use for an active
entanglement-generation protocol.
Suppose this $n$-use protocol has total mean input energy at most $nE$,
target dimension $d$, and entanglement fidelity $f_n^{\rm on}$.
For each $0<p\le1$ there is a protocol with the same target dimension,
mean energy at most $pnE$, and fidelity
\begin{equation}
 f_n=p f_n^{\rm on}+(1-p)/d.
 \label{eq:sc-dilution-f}
\end{equation}
For a key protocol of error $e_n^{\rm on}$, the analogous
construction has error
\begin{equation}
 e_n=pe_n^{\rm on}+(1-p)(1-1/d).
 \label{eq:sc-dilution-key}
\end{equation}
The key error is measured against an ideal key with the actual Eve
marginal. If the active code obeys a per-use unconditional mean bound
$E$, the diluted code obeys the corresponding bound $pE$.
\end{lemma}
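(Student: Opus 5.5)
The plan is to build the diluted protocol explicitly as a public mixture of two branches. At the start, Alice draws a public bit $F$ with $\Pr(F=1)=p$ and announces it, so Bob and Eve also learn it. If $F=1$, both parties run the active protocol unchanged. If $F=0$, Alice sends vacuum in all $n$ slots and both parties prepare a fixed output of the correct target dimension: the product state $\ket{0}\ket{0}$ on $\bar A\bar B$ in the quantum case. The bit is a finite public random choice, so this is an admissible protocol in the model of Appendix~\ref{sec:protocol-model}.

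\emph{Energy.} The unconditional state at use $i$ is $p\rho^{(i),\rm on}_{A_i}+(1-p)\proj{0}$, so linearity of the trace gives $x_i=p\,x_i^{\rm on}$. Summing yields total mean at most $pnE$. If instead $x_i^{\rm on}\le E$ for each $i$, the per-use bound $pE$ follows term by term.

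\emph{Quantum fidelity.} The final target state is $\rho_n=p\rho_n^{\rm on}+(1-p)\proj{00}$, with the flag discarded. Linearity of $\Tr(\Phi_d\,\cdot)$ and $\bra{00}\Phi_d\ket{00}=1/d$ give \eqref{eq:sc-dilution-f}. For quantum transmission, the off branch outputs a fixed state at Bob while the reference stays maximally mixed, so the state is $\pi_d\otimes\proj0$, which again has overlap $1/d$.

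\emph{Key error.} In the off branch, Alice and Bob output $K_A=K_B=0$ deterministically. Eve's record contains the flag $F$, so the joint state is block diagonal in $F$: $\kappa=p\,\proj1_F\otimes\kappa^{\rm on}+(1-p)\proj0_F\otimes\proj{00}\otimes\kappa^{\rm off}_E$. The ideal state uses the actual Eve marginal, which is also block diagonal in $F$. The trace distance therefore splits into a weighted sum of the two branch distances. The on branch contributes exactly $e_n^{\rm on}$, because the on-branch Eve marginal is the active protocol's marginal. The off branch contributes $\tfrac12\|\proj{00}-\tau_d\|_1=1-1/d$, since $\kappa^{\rm off}_E$ factorizes. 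Together these give \eqref{eq:sc-dilution-key}.

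The main obstacle is this last step. One has to verify that including $F$ in Eve's transcript makes the flag-diagonal decomposition exact, rather than merely an upper bound. Since $F$ is recorded classically in $Z$, orthogonality of the flag blocks gives equality.
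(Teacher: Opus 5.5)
Your construction (a public Bernoulli flag, vacuum in all $n$ slots on the off branch, and the fixed output $\ket0\ket0$) is essentially the paper's proof. The same holds for your three computations: linearity of the input energy, the overlap $\bra{00}\Phi_d\ket{00}=1/d$ for entanglement generation, and the exact splitting of the key error over the orthogonal flag blocks, with off-branch contribution $1-1/d$.

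One side remark is wrong, however. For quantum transmission with an untouched reference, the off-branch state $\pi_d\otimes\proj0$ has overlap $\frac1d\Tr(\pi_d\proj0)=1/d^2$ with $\Phi_d$, not $1/d$. Equivalently, a replacement channel has entanglement fidelity $1/d^2$.

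The paper avoids this by obtaining quantum transmission afterwards through teleportation with the diluted generated state. Teleportation's entanglement fidelity equals that state's overlap with $\Phi_d$, namely $pf_n^{\rm on}+(1-p)/d$. The lemma itself concerns entanglement generation, so your error does not affect the statement. You should still replace that aside with the teleportation argument.
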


\begin{proof}
Alice samples a Bernoulli flag and announces it publicly. With
probability $p$ the parties run the original protocol. Otherwise Alice
sends vacuum in all $n$ slots, Bob discards every optical output,
and the parties output $\ket0\ket0$ in
their $d$-dimensional output registers. Energy is linear in the input
state, so its mean is at most $pnE$. Let $\rho_n^{\rm on}$ be the
active protocol's entanglement output. After discarding their flag,
the diluted entanglement output is
$p\rho_n^{\rm on}+(1-p)\proj{00}$, whose overlap with $\Phi_d$ is
\eqref{eq:sc-dilution-f}.
This construction concerns entanglement generation, for which the
off-branch overlap is $\langle\Phi_d|00\rangle\langle00|\Phi_d\rangle=1/d$.
Quantum transmission is obtained afterwards by teleportation with the
generated state, preserving this entanglement fidelity and requiring
no additional optical transmission.

For the classical key, Eve retains the public flag. The two flag blocks
are orthogonal, so normalized trace distances add with their branch
weights. The distance from the fixed key $00$ to a uniform correlated
key of size $d$ is $1-1/d$, proving \eqref{eq:sc-dilution-key}.
Both branches use $n$ channel slots and produce the same prescribed
output length. The unsuccessful branch is counted in the error;
there is no conditioning of the rate on success.
\end{proof}

\subsection{A shared proof of the two mean-energy thresholds}
The dilution construction applies to either optical channel:
the inactive branch transmits vacuum and discards the received
state, which need not itself be vacuum. We combine it with the
established unconstrained converse, using a single order-of-limits
argument for both channels.

Write $f_{\mathcal N}$ for the finite-energy achievable rate
and $b_{\mathcal N}$ for its infinite-energy limit:
\begin{equation}
 \begin{array}{c|cc}
 \mathcal N & f_{\mathcal N}(x) & b_{\mathcal N}\\ \hline
 \cL_\eta & c_\eta(x) & -\log_2(1-\eta)\\
 \cA_\kappa & r_\kappa(x) & \log_2[\kappa/(\kappa-1)]
 \end{array}
 \label{eq:sc-channel-rates}
\end{equation}
Both $f_{\mathcal N}$ are strictly increasing and converge to
$b_{\mathcal N}$ as $x\to\infty$.

\begin{proof}[Proof of Theorem~\ref{thm:mean-sc} and \eqref{eq:amp-threshold}]
Fix $N>0$ and $0<R<b_{\mathcal N}$. Choose a finite $E>N$
with $R<f_{\mathcal N}(E)$. For pure loss, the sector-hashing
construction of Lemma~\ref{lem:achieve} supplies such rates with
vanishing error. For the amplifier, use the achievable quantum
codes discussed in Appendix~\ref{sec:amp-capacity-proof}.
In both cases the codes can satisfy the per-use unconditional
mean bound $E$: pure-loss sectors are permutation symmetric,
and the amplifier codes can be symmetrized by
\eqref{eq:amp-per-use}. Their distributed entanglement also gives
key with vanishing operational error.

Apply Lemma~\ref{lem:sc-dilution} with $p=N/E>0$. The diluted
protocols obey \eqref{eq:energy}, separately at every use, and
\begin{equation}
 \liminf_{n\to\infty}f_n\ge p>0,\qquad
 \lim_{n\to\infty}e_n=1-p<1.
 \label{eq:sc-positive-success}
\end{equation}
All branches have the same target size and use the same number
of channel slots. Fixed block lengths can be accommodated by
vacuum padding. Thus no threshold below $b_{\mathcal N}$ is
possible under either mean-energy convention.

For the upper threshold, use the unconstrained pure-loss and
quantum-limited-amplifier strong converses of
Ref.~\cite[Theorem~24, Corollary~25, and Eq.~(8.22)]{WTB}, with
the adaptive bosonic simulation control of
Ref.~\cite[Sec.~X.D]{Simulation}. In the squared-fidelity
convention, the imported finite-error bound is
\begin{equation}
 \log_2d\le n b_{\mathcal N}+C(\epsilon),\qquad
 C(\epsilon)=\log_2 6+
       2\log_2\frac{1+\epsilon}{1-\epsilon},
 \qquad 0<\epsilon<1,
 \label{eq:sc-imported-bound}
\end{equation}
for an output of fidelity at least $1-\epsilon$ to a private
state of key dimension $d$. Maximally entangled targets are
included as private states with trivial shields.

For clarity, the simulation limit is taken at fixed protocol,
before the communication limit. If $\mathcal N^\mu$ is a
finite-resource simulation, set
\begin{equation}
 \delta_\mu(E)=
 \sup_{\rho_{RA}:\,\Tr(\hat n_A\rho_A)\le E}
 \bigl\|(\id_R\otimes(\mathcal N-\mathcal N^\mu))
                         (\rho_{RA})\bigr\|_1.
 \label{eq:sc-input-energy-distance}
\end{equation}
For each finite $E$, $\delta_\mu(E)\to0$; only the optical
input is constrained, and the reference is arbitrary. One may
obtain this formulation by an input photon cutoff, whose
discarded probability is bounded by the energy, followed by
uniform convergence on the finite input space. The reference
dimension in that finite-space norm can be bounded by the input
dimension, so no reference-energy condition is needed.

For any fixed admissible protocol, $x_i\le nN$. A hybrid
expansion with original prefixes and simulated suffixes
evaluates every replacement on an actual protocol input.
Contractivity gives
\begin{equation}
 \|\zeta_n-\zeta_n^\mu\|_1
 \le\sum_{i=1}^n
   \bigl\|(\id\otimes(\mathcal N-\mathcal N^\mu))
                      (\rho^{(i)})\bigr\|_1
 \le n\,\delta_\mu(nN)\longrightarrow0.
 \label{eq:sc-hybrid}
\end{equation}
Here $n$ is fixed while $\mu\to\infty$. Thus the use of
\eqref{eq:sc-imported-bound} requires no bound on the memories
or on the inputs produced inside a simulated protocol.

At any rate strictly above $b_{\mathcal N}$, a subsequence
with quantum fidelity bounded below would contradict
\eqref{eq:sc-imported-bound}, since $C(\epsilon)/n\to0$ at fixed
$\epsilon<1$. Therefore the quantum fidelity tends to zero.
For operational key, if $e_n\le e_0<1$ along a subsequence,
Lemma~\ref{lem:key-conversion} gives retained states of
private-state fidelity at least $(1-e_0)^2>0$. Applying the same
bound rules out this subsequence, proving $e_n\to1$.
The error-preserving operational conversions give the identical
thresholds for entanglement distribution and private transmission.
The converse covers the block-average constraint, while the
dilution construction already obeys the per-use constraint.
Both threshold identities follow.
\end{proof}

This threshold proof uses the achievable rates and the imported
unconstrained converses. It does not rely on the finite-energy
weak-converse upper bounds.

\section{Strong converse with a photon-number cutoff}
\label{sec:cutoff-strong}
We prove Theorem~\ref{thm:cutoff-sc} by combining the sector resource
comparison with privacy testing, binomial concentration, and the
standard pinching inequality. We first bound a fixed-sector test and
then control loss fluctuations and intersector coherence. The final
subsection extends the argument to the amplifier, proving the
photon-cutoff statement in Theorem~\ref{thm:amplifier}.

\subsection{Parallel model and finite-error bound}
We specify the input constraint and the quantitative conclusion.
All $n$ optical inputs are prepared before transmission through
$\cL_\eta^{\otimes n}$, with two-way LOCC allowed before and after
this block. The initial state $\rho_{XA^nY}$ is separable across
$XA^n:Y$, with unrestricted correlations between Alice's input and
her reference. The local-round and public-record conventions are
those of Appendix~\ref{sec:task-errors}.

Let $P_m^A$ project onto the $m$-photon sector and impose
\begin{equation}
 M_n=\lfloor nN\rfloor,\qquad
 P^A_{\le M_n}=\sum_{m=0}^{M_n}P_m^A,\qquad
 \rho=P^A_{\le M_n}\rho P^A_{\le M_n}.
 \label{eq:sc-cutoff}
\end{equation}
Identities on the memories are implicit. Coherence between sectors
is allowed. For $N>0$, write $R_n=\log_2d_n/n$ and
\begin{equation}
 a_n=(1+N)[n(1+N)+1].
 \label{eq:sc-prefactor}
\end{equation}
For every $\delta>0$, any final dimension-$d_n$ privacy test has
acceptance probability at most
\begin{equation}
 b_n(\delta):=\min\!\left\{1,(M_n+1)
 \left[e^{-2n\delta^2/N}
       +a_n\,2^{-n[R_n-c_\eta(N)-g(\delta)]}\right]\right\}.
 \label{eq:sc-exponential}
\end{equation}
This also bounds squared fidelity with the private-state target,
including entanglement fidelity when the shields are trivial.
For operational key and private communication, it implies
\begin{equation}
 e_n\ge1-\sqrt{b_n(\delta)}.
 \label{eq:sc-key-error}
\end{equation}
Consequently, the parallel hard-cutoff capacities and their
strong-converse thresholds satisfy
\begin{equation}
 C_{2,\parallel}^{\mathrm{cut}}(\cL_\eta,N)
 =C_{2,\parallel}^{\dagger,\mathrm{cut}}(\cL_\eta,N)
 =c_\eta(N),\qquad C\in\{Q,D,P,K\}.
 \label{eq:sc-cutoff-capacity}
\end{equation}

The same threshold holds under the vanishing-tail condition
\begin{equation}
 \Tr(P^A_{\le M_n}\rho_{A^n})\ge1-\tau_n,\qquad
 0\le\tau_n<1,\qquad \tau_n\longrightarrow0:
 \label{eq:sc-tail}
\end{equation}
replace $b_n(\delta)$ by
$\min\{1,b_n(\delta)+\sqrt{\tau_n}\}$.
Convergence remains exponential if $\tau_n$ decays exponentially.
Such occupation constraints also appear in the classical pure-loss
strong converse~\cite{WW}.

\subsection{Testing a fixed-sector resource}
We reuse the separable comparison of \Cref{lem:resource} to obtain
a finite-error bound, retaining the distribution of the loss count.

\begin{lemma}[Fixed-sector test bound]\label{lem:sc-sector}
For a parallel protocol with input supported on $\cH_{n,m}$ and
separable initial Alice--Bob state, any final dimension-$d$ privacy
test has acceptance probability at most
\begin{equation}
 B_{n,m}(d):=
 \sum_{\ell=0}^{m}\binom m\ell q^\ell\eta^{m-\ell}
 \min\!\left\{1,\frac{d_{n,m}}{d\,d_{n,\ell}}\right\}.
 \label{eq:sc-sector-bound}
\end{equation}
Arbitrary input-reference correlations and two-way LOCC decoding
are allowed.
\end{lemma}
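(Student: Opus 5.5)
Since the input lies in the sector $\cH_{n,m}$, Lemma~\ref{prop:sector} applied to the whole block ($s=n$) lets us simulate the parallel protocol exactly as an LOCC map acting on $\rho\otimes J_{n,m}$. The initial state $\rho_{XA^nY}$ is separable across $XA^n:Y$, and the teleportation measurement, Bob's correction and all later two-way LOCC decoding are LOCC. The final state therefore has the form $\Lambda(\rho\otimes J_{n,m})$ for an LOCC map $\Lambda$ across the Alice--Bob cut, with Alice holding $XA^nR$ and Bob holding $YB^n$. Next, condition on the lost photon number. Bob's received total $k=m-\ell$ lives in orthogonal local subspaces, so $J_{n,m}=\sum_\ell\binom m\ell q^\ell\eta^{m-\ell}J_{n,m,m-\ell}$ is a mixture with local flags. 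The final state is then the same mixture of $\Lambda(\rho\otimes J_{n,m,m-\ell})$. By linearity, the acceptance probability of the privacy test $\Pi_\gamma$ is the weighted average of the conditional acceptances. It therefore suffices to show that each conditional acceptance is at most $\min\{1,d_{n,m}/(d\,d_{n,\ell})\}$.

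For the conditional bound I would use the operator inequality coming from the separable comparison $\sigma=\sigma_{n,m,m-\ell}$ of \eqref{eq:sigma}. By \eqref{eq:spectrum-result} and \eqref{eq:eigen-proved}, $J=\Pi/d_{n,\ell}$ and $\sigma\Pi=\Pi/d_{n,m}$, with $\sigma\ge0$ commuting with $\Pi$. This gives the operator inequality
\begin{equation}
 J_{n,m,m-\ell}=\frac{\Pi}{d_{n,\ell}}\le\frac{d_{n,m}}{d_{n,\ell}}\,\sigma_{n,m,m-\ell}.
\end{equation}
Because $\Lambda$ and the tensoring with $\rho$ are positive maps, $\Lambda(\rho\otimes J)\le (d_{n,m}/d_{n,\ell})\,\Lambda(\rho\otimes\sigma)$. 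The state $\rho\otimes\sigma$ is separable across the Alice--Bob cut, since $\rho$ is separable and $\sigma$ is separable between $R$ (Alice) and $B^n$ (Bob). LOCC preserves separability, so $\Lambda(\rho\otimes\sigma)\in\SEP(K_AS_A:K_BS_B)$. The separable privacy-test bound \eqref{eq:privacy-test-product-overlap} then gives $\Tr[\Pi_\gamma\Lambda(\rho\otimes\sigma)]\le1/d$. Hence the conditional acceptance is at most $d_{n,m}/(d\,d_{n,\ell})$, and it is trivially at most $1$. Averaging over $\ell\sim\Bin(m,q)$ yields $B_{n,m}(d)$.

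The main point to check is the domination step. The LOCC map must be a single fixed trace-preserving completely positive map independent of the resource, which holds because the simulator $\mathcal T_{n,m}$ does not depend on the input. The decoding may have countably many outcomes and act on infinite-dimensional memories, so I would reduce to a normal CP map, which preserves operator order. Separability must also persist in the closed separable set under countable LOCC, which follows from closure and convexity. No regularization or relative-entropy continuity is needed: the bound comes purely from operator domination, which is what makes it suitable for an exponential strong converse. The case $m=0$ is immediate, since $d_{n,0}=1$ and the resource is a product state, so the acceptance is at most $1/d$, consistent with the formula.
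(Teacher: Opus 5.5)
Your proposal is correct and follows the paper's proof essentially step for step. Both arguments use the exact sector simulator, decompose $J_{n,m}$ over the binomial loss count, and invoke the operator domination $J_{n,m,m-\ell}\le (d_{n,m}/d_{n,\ell})\,\sigma_{n,m,m-\ell}$ from the spectrum and eigenvalue identities; both then use preservation of order and separability under the simulator and decoder, the $1/d$ separable privacy-test bound, and a final average. Your explicit remarks that $J_{n,m}$ is block diagonal in Bob's received number, and that $\sigma$ commutes with $\Pi$, spell out what the paper compresses into ``Hermiticity eliminates cross terms with its orthogonal complement.''
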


\begin{proof}
By \eqref{eq:spectrum-result} and \eqref{eq:eigen-proved}, the
conditional resource and its separable comparison satisfy
\begin{equation}
 J_{s,m,k}\le\frac{d_{s,m}}{d_{s,m-k}}\,\sigma_{s,m,k}.
 \label{eq:sc-domination}
\end{equation}
Indeed, $J_{s,m,k}=\Pi_{s,m,k}/d_{s,m-k}$ and the positive operator
$\sigma_{s,m,k}$ equals $I/d_{s,m}$ on this support; Hermiticity
eliminates cross terms with its orthogonal complement.

Replace the channel by \Cref{prop:sector}'s exact simulator.
Conditional on losing $\ell$ photons, its resource is
$J_{n,m,m-\ell}$. Tensoring \eqref{eq:sc-domination} with the
separable initial state and applying the simulator and decoder
preserves operator order and separability of the comparison.
The separable privacy-test bound from \Cref{lem:privacy} therefore
limits acceptance to $\min\{1,d_{n,m}/(d\,d_{n,\ell})\}$.
Averaging over $\ell\sim\Bin(m,q)$ proves the claim.
This decomposition is used only in the bound; the protocol need not
measure the loss count.
\end{proof}

\subsection{Concentration, coherence, and the cutoff theorem}
We make the sector bound uniform, control coherence by pinching,
and then allow a vanishing photon-number tail.

\begin{proof}[Proof of \Cref{thm:cutoff-sc} and its tail extension]
The binary binomial-coefficient bounds~\cite{CoverThomas}, applied to
$d_{n,m}=\frac n{n+m}\binom{n+m}{m}$, give
\begin{equation}
 a_n^{-1}2^{ng(m/n)}\le d_{n,m}\le2^{ng(m/n)},
 \qquad 0\le m\le\lfloor nN\rfloor.
 \label{eq:sc-dimension}
\end{equation}
For the prefactor, set $r=n+m$ and $u=m/r$. The value $m$ is a mode
of $\Bin(r,u)$, so its probability
$\binom rm u^m(1-u)^{r-m}$ lies between $1/(r+1)$ and $1$;
also $rh_2(u)=ng(m/n)$ and $n/(n+m)\ge1/(1+N)$.

For $L\sim\Bin(m,q)$ and $m>0$, Hoeffding's inequality~\cite{Hoeffding1963}
yields
\begin{equation}
 \Prb\{L<qm-n\delta\}\le e^{-2n^2\delta^2/m}
                          \le e^{-2n\delta^2/N}.
 \label{eq:sc-hoeffding}
\end{equation}
For $m=0$ this event is empty. On its complement, writing $x=m/n$,
\begin{equation}
 \frac{d_{n,m}}{d_{n,L}}
 \le a_n2^{n[g(x)-g((qx-\delta)_+)]}
 \le a_n2^{n[c_\eta(N)+g(\delta)]}.
 \label{eq:sc-ratio}
\end{equation}
The last step uses monotonicity of $c_\eta$ from \Cref{lem:scalar}
and $g(y)-g((y-\delta)_+)\le g(\delta)$, a consequence of
concavity and $g(0)=0$.
Splitting \eqref{eq:sc-sector-bound} into this event and its complement gives
\begin{equation}
 \max_{0\le m\le M_n}B_{n,m}(d_n)
 \le e^{-2n\delta^2/N}
       +a_n2^{-n[R_n-c_\eta(N)-g(\delta)]}.
 \label{eq:sc-uniform}
\end{equation}

For a coherent cutoff input, the standard pinching inequality
\cite[Appendix~B, Lemma~4]{OgawaHayashi2004} gives
\begin{equation}
 \rho\le(M_n+1)\cP_A(\rho),\qquad
 \cP_A(\rho)=\sum_{m=0}^{M_n}P_m^A\rho P_m^A.
 \label{eq:sc-operator-pinch}
\end{equation}
Arbitrary separable Hilbert spaces for the memories are allowed:
compress them to finite rank, apply the inequality, and take the
trace-norm limit.
The channel and decoder preserve this order. Since $\cP_A(\rho)$
is a mixture of locally conditioned, initially separable sector
inputs, test acceptance is at most
$(M_n+1)\max_m B_{n,m}(d_n)$. This proves
\eqref{eq:sc-exponential}, including intersector coherence.

A privacy test accepts its target with certainty, so fidelity
monotonicity under the binary test bounds squared fidelity by
acceptance. For operational key, \Cref{lem:key-conversion} gives
acceptance at least $(1-e_n)^2$, proving \eqref{eq:sc-key-error}.
In the parallel model that local dilation preserves the entire joint
optical input, hence its cutoff and tail, as well as Eve's public
record. The conversions in \eqref{eq:operational-equivalences}
transfer the bounds to quantum and private transmission.

If $\Delta=\liminf_nR_n-c_\eta(N)>0$, choose $\delta>0$ with
$g(\delta)<\Delta/2$. Equation~\eqref{eq:sc-exponential} then
decays exponentially, since $M_n+1$ and $a_n$ are polynomial.
For achievability, prepare and transmit $t$ sector-hashing blocks
from \Cref{lem:achieve} in parallel. With $m\le sN$ photons per
block, their total number is $tm\le\lfloor tsN\rfloor$, and their
achievable rates approach $c_\eta(N)$. The energy-preserving task
conversions establish \eqref{eq:sc-cutoff-capacity}.

Finally, under \eqref{eq:sc-tail}, put $P=P^A_{\le M_n}$,
$p=\Tr(P\rho)$, and $\widehat\rho=P\rho P/p$.
Projecting a purification and applying the fidelity--trace-distance
inequality~\cite{UhlmannFidelity,FuchsGraaf} gives
\begin{equation}
 F(\rho,\widehat\rho)\ge p,\qquad
 \tfrac12\norm{\rho-\widehat\rho}_1\le\sqrt{1-p}
                                      \le\sqrt{\tau_n}.
 \label{eq:sc-tail-distance}
\end{equation}
The local projection preserves initial separability. Contractivity
of trace distance therefore adds at most $\sqrt{\tau_n}$ to the
hard-cutoff test bound, proving the tail extension.
\end{proof}

\subsection{Quantum-limited amplifier extension}
\label{sec:amp-cutoff}
We retain the parallel model and input cutoff \eqref{eq:sc-cutoff}.
The conditional resource correspondence changes the dimension ratio,
while a geometric Chernoff bound replaces the binomial estimate.
For $\kappa>1$, $N>0$, and $0<\delta<\kappa-1$, define
\begin{equation}
 a_n^\kappa=[1+\kappa(1+N)]\{n[1+\kappa(1+N)]+1\},
 \label{eq:sc-amp-prefactor}
\end{equation}
and, with $M_n=\lfloor nN\rfloor$ and $R_n=\log_2d_n/n$ as before,
\begin{equation}
 b_n^\kappa(\delta)=\min\!\left\{1,(M_n+1)
 \left[e^{-\frac{n\delta^2}{2\kappa(\kappa-1)(1+N)}}
       +a_n^\kappa\,2^{-n[R_n-r_\kappa(N)-g(\delta)]}\right]\right\}.
 \label{eq:sc-amp-exponential}
\end{equation}
This bounds the final privacy-test acceptance and quantum target
fidelity; the operational classical error obeys
$e_n\ge1-\sqrt{b_n^\kappa(\delta)}$.

\begin{proof}[Proof of the photon-cutoff statement in Theorem~\ref{thm:amplifier}]
Lemma~\ref{lem:amp-duality} transfers the operator domination
\eqref{eq:sc-domination} to
\begin{equation}
 J^\kappa_{n,m,\ell}\le
 \frac{d_{n,m+\ell}}{d_{n,\ell}}\,\sigma^\kappa_{n,m,\ell},
 \qquad
 B^\kappa_{n,m}(d)=\EE_L
 \min\!\left\{1,\frac{d_{n,m+L}}{d\,d_{n,L}}\right\},
 \label{eq:sc-amp-sector}
\end{equation}
where $\sigma^\kappa_{n,m,\ell}$ is the party-exchanged separable
comparison for $J_{n,m+\ell,m}$. The same simulator and privacy-test
argument as in Lemma~\ref{lem:sc-sector} bounds acceptance by
$B^\kappa_{n,m}(d)$. Here $L$ has distribution
\eqref{eq:amp-added-distribution} with $s=n$: it is a sum of
$n+m$ independent geometric variables $G_j$ of mean $\kappa-1$.
Their centered logarithmic Laplace transform satisfies, for $t\ge0$,
\begin{equation}
 \begin{aligned}
 \varphi(t)&:=\ln\EE e^{-t(G_j-(\kappa-1))}
       =(\kappa-1)t-\ln[\kappa-(\kappa-1)e^{-t}],\\
 \varphi''(t)&=
 \frac{\kappa(\kappa-1)e^{-t}}{[\kappa-(\kappa-1)e^{-t}]^2}
 \le\kappa(\kappa-1).
 \end{aligned}
 \label{eq:sc-amp-laplace}
\end{equation}
Since $\varphi(0)=\varphi'(0)=0$, Chernoff's inequality with
$t=\delta/[\kappa(\kappa-1)(1+N)]$ gives, uniformly for $m\le nN$,
\begin{equation}
 \Prb\{L<(\kappa-1)(n+m)-n\delta\}
 \le e^{-\frac{n\delta^2}{2\kappa(\kappa-1)(1+N)}}.
 \label{eq:sc-amp-chernoff}
\end{equation}

The ratio $d_{n,m+\ell}/d_{n,\ell}
=\prod_{j=1}^{n-1}(1+m/(\ell+j))$ decreases with $\ell$.
On the complementary event, put $x=m/n$, $z=(\kappa-1)(1+x)$,
and $v=\lceil n(z-\delta)\rceil/n$. Then $L\ge nv$ and both
$nv$ and $m+nv$ are at most $n\kappa(1+N)$.
Applying \eqref{eq:sc-dimension} with this enlarged cap yields
\begin{equation}
 \frac{d_{n,m+L}}{d_{n,L}}
 \le a_n^\kappa 2^{n[g(x+v)-g(v)]}
 \le a_n^\kappa 2^{n[r_\kappa(N)+g(\delta)]}.
 \label{eq:sc-amp-ratio}
\end{equation}
For the last step, $g(x+v)-g(v)$ decreases with $v$, and
$v\ge z-\delta>0$. Its increase when replacing $z$ by
$z-\delta$ is at most $g(z)-g(z-\delta)\le g(\delta)$;
also $g(x+z)-g(z)=r_\kappa(x)\le r_\kappa(N)$ by
Lemma~\ref{lem:amp-energy}. Splitting \eqref{eq:sc-amp-sector}
over the two events and applying the unchanged input pinching
\eqref{eq:sc-operator-pinch} proves \eqref{eq:sc-amp-exponential}.
The same privacy-test reduction proves the classical error bound.
At any fixed rate above $r_\kappa(N)$, choosing $\delta>0$
sufficiently small makes both terms decay exponentially.

For achievability, transmit repeated sector pairs $\Phi_{s,m}$
with $m=\lfloor sN\rfloor$ and condition on the added count $\ell$.
Forward hashing~\cite{DevetakWinter} achieves
$\log_2[d_{s,m+\ell}/d_{s,\ell}]$ per conditional state.
Retain only finitely many classes $\ell\le T$ and use the fixed-length
construction of Lemma~\ref{lem:achieve}. After its coding limit,
letting $T\to\infty$ recovers $F_{\kappa,s}(m)/s$, since each
conditional rate is at most $\log_2d_{s,m}$. Moreover,
\begin{equation}
 \frac{L}{s}\xrightarrow[s\to\infty]{\Prb}(\kappa-1)(1+N),
 \qquad
 \frac{F_{\kappa,s}(\lfloor sN\rfloor)}s
 \longrightarrow r_\kappa(N).
 \label{eq:sc-amp-achievability}
\end{equation}
Indeed, $\operatorname{Var}(L)=\kappa(\kappa-1)(s+m)$, so the
first limit follows from Chebyshev's inequality. The second follows
from the sector-dimension asymptotics and the uniform bound
$0\le s^{-1}\log_2[d_{s,m+L}/d_{s,L}]\le g(N)$.
All optical blocks are prepared in parallel and their exact total
input photon number satisfies the cutoff; unused slots may be padded
with vacuum. Equation~\eqref{eq:entanglement-to-key-error} and the
operational conversions give the matching lower bounds for all
four tasks. Finally, the projection argument
\eqref{eq:sc-tail-distance} applies unchanged to a vanishing tail,
replacing $b_n^\kappa$ by
$\min\{1,b_n^\kappa+\sqrt{\tau_n}\}$.
\end{proof}

Both cutoff results use one simulator for a parallel optical block.
The adaptive weak converse instead groups the same round across
independent protocol replicas (Appendix~\ref{sec:weak}); the cutoff
proofs do not extend to adaptive inputs.


\begin{thebibliography}{99}

\bibitem{HolevoWerner}
A.~S.~Holevo and R.~F.~Werner,
\href{https://doi.org/10.1103/PhysRevA.63.032312}{``Evaluating capacities of bosonic Gaussian channels,''}
\emph{Physical Review A} \textbf{63}, 032312 (2001).

\bibitem{Weedbrook}
C.~Weedbrook, S.~Pirandola, R.~Garc\'{i}a-Patr\'{o}n, N.~J.~Cerf, T.~C.~Ralph, J.~H.~Shapiro, and S.~Lloyd,
\href{https://doi.org/10.1103/RevModPhys.84.621}{``Gaussian quantum information,''}
\emph{Reviews of Modern Physics} \textbf{84}, 621 (2012).

\bibitem{Scarani}
V.~Scarani, H.~Bechmann-Pasquinucci, N.~J.~Cerf, M.~Du\v{s}ek, N.~L\"utkenhaus, and M.~Peev,
\href{https://doi.org/10.1103/RevModPhys.81.1301}{``The security of practical quantum key distribution,''}
\emph{Reviews of Modern Physics} \textbf{81}, 1301 (2009).

\bibitem{PirandolaAdvances}
S.~Pirandola \emph{et al.},\\
\href{https://opg.optica.org/aop/abstract.cfm?URI=aop-12-4-1012}{``Advances in quantum cryptog\-raphy,''}
\emph{Advances in Optics and Photonics} \textbf{12}, 1012--1236 (2020).

\bibitem{Giovannetti2004}
V.~Giovannetti, S.~Guha, S.~Lloyd, L.~Maccone, J.~H.~Shapiro, and H.~P.~Yuen,
\href{https://doi.org/10.1103/PhysRevLett.92.027902}{``Classical capacity of the lossy bosonic channel: The exact solution,''}
\emph{Physical Review Letters} \textbf{92}, 027902 (2004).

\bibitem{Wolf2007}
M.~M.~Wolf, D.~P\'{e}rez-Garc\'{i}a, and G.~Giedke,
\href{https://doi.org/10.1103/PhysRevLett.98.130501}{``Quantum capacities of bosonic channels,''}
\emph{Physical Review Letters} \textbf{98}, 130501 (2007).

\bibitem{RCI}
R.~Garc\'{i}a-Patr\'{o}n, S.~Pirandola, S.~Lloyd, and J.~H.~Shapiro,
``Reverse coherent information,''
\emph{Physical Review Letters} \textbf{102}, 210501 (2009).
\href{https://arxiv.org/abs/0808.0210}{arXiv:0808.0210}.

\bibitem{PGPBL}
S.~Pirandola, R.~Garc\'{i}a-Patr\'{o}n, S.~L.~Braunstein, and S.~Lloyd,
\href{https://doi.org/10.1103/PhysRevLett.102.050503}{``Direct and reverse secret-key capacities of a quantum channel,''}
\emph{Physical Review Letters} \textbf{102}, 050503 (2009).

\bibitem{PLOB}
S.~Pirandola, R.~Laurenza, C.~Ottaviani, and L.~Banchi,
``Fundamental limits of repeaterless quantum communications,''
\emph{Nature Communications} \textbf{8}, 15043 (2017).
\href{https://arxiv.org/abs/1510.08863}{arXiv:1510.08863}.

\bibitem{DSW}
N.~Davis, M.~E.~Shirokov, and M.~M.~Wilde,
``Energy-constrained two-way assisted private and quantum capacities of
quantum channels,'' \emph{Physical Review A} \textbf{97}, 062310 (2018).
\href{https://arxiv.org/abs/1801.08102}{arXiv:1801.08102}.

\bibitem{Goodenough}
K.~Goodenough, D.~Elkouss, and S.~Wehner,
\href{https://doi.org/10.1088/1367-2630/18/6/063005}{``Assessing the performance of quantum repeaters for all phase-insensitive Gaussian bosonic channels,''}
\emph{New Journal of Physics} \textbf{18}, 063005 (2016).

\bibitem{Simulation}
S.~Pirandola, S.~L.~Braunstein, R.~Laurenza, C.~Ottaviani,
T.~P.~W.~Cope, G.~Spedalieri, and L.~Banchi,
``Theory of channel simulation and bounds for private communication,''
\emph{Quantum Science and Technology} \textbf{3}, 035009 (2018).
\href{https://arxiv.org/abs/1711.09909}{arXiv:1711.09909}.

\bibitem{GBW}
M.~Gschwendtner, A.~Bluhm, and A.~Winter,
``Programmability of covariant quantum channels,''
\emph{Quantum} \textbf{5}, 488 (2021).
\href{https://arxiv.org/abs/2012.00717}{arXiv:2012.00717}.

\bibitem{Winnel}
M.~S.~Winnel, J.~J.~Guanzon, N.~Hosseinidehaj, and T.~C.~Ralph,
``Achieving the ultimate end-to-end rates of lossy quantum communication
networks,'' \emph{npj Quantum Information} \textbf{8}, 129 (2022).
\href{https://arxiv.org/abs/2203.13924}{arXiv:2203.13924}.

\bibitem{WTB}
M.~M.~Wilde, M.~Tomamichel, and M.~Berta,
``Converse bounds for private communication over quantum channels,''
\emph{IEEE Transactions on Information Theory} \textbf{63},
1792--1817 (2017).
\href{https://arxiv.org/abs/1602.08898}{arXiv:1602.08898}.

\bibitem{WW}
M.~M.~Wilde and A.~Winter,
\href{https://doi.org/10.1134/S003294601402001X}{``Strong converse for the classical capacity of the pure-loss bosonic channel,''}
\emph{Problems of Information Transmission} \textbf{50}, 117--132 (2014).
\href{https://arxiv.org/abs/1308.6732}{arXiv:1308.6732}.

\bibitem{QiWilde}
H.~Qi and M.~M.~Wilde,
\href{https://doi.org/10.1103/PhysRevA.95.012339}{``Capacities of quantum amplifier channels,''}
\emph{Phys. Rev. A} \textbf{95}, 012339 (2017).

\bibitem{WildeQi}
M.~M.~Wilde and H.~Qi,
\href{https://doi.org/10.1109/TIT.2018.2854766}{``Energy-constrained private and quantum capacities of quantum channels,''}
\emph{IEEE Trans. Inf. Theory} \textbf{64}, 7802--7827 (2018).

\bibitem{CarusoGiovannetti}
F.~Caruso and V.~Giovannetti,
\href{https://doi.org/10.1103/PhysRevA.74.062307}{``Degradability of bosonic Gaussian channels,''}
\emph{Phys. Rev. A} \textbf{74}, 062307 (2006).

\bibitem{RennerKonig}
R.~Renner and R.~K\"onig,
\href{https://doi.org/10.1007/978-3-540-30576-7_22}{``Universally composable privacy amplification against quantum adversaries,''}
in \emph{Theory of Cryptography (TCC 2005)}, Lecture Notes in Computer Science, Vol.~3378, edited by J.~Kilian (Springer, Berlin, 2005), pp.~407--425.

\bibitem{PortmannRenner}
C.~Portmann and R.~Renner,
\href{https://doi.org/10.1103/RevModPhys.94.025008}{``Security in quantum cryptography,''}
\emph{Reviews of Modern Physics} \textbf{94}, 025008 (2022).

\bibitem{BennettTeleport}
C.~H.~Bennett, G.~Brassard, C.~Cr\'{e}peau, R.~Jozsa, A.~Peres, and W.~K.~Wootters,
\href{https://doi.org/10.1103/PhysRevLett.70.1895}{``Teleporting an unknown quantum state via dual classical and Einstein--Podolsky--Rosen channels,''}
\emph{Physical Review Letters} \textbf{70}, 1895 (1993).

\bibitem{ShannonSecrecy}
C.~E.~Shannon,
\href{https://doi.org/10.1002/j.1538-7305.1949.tb00928.x}{``Communication theory of secrecy systems,''}
\emph{Bell System Technical Journal} \textbf{28}, 656--715 (1949).

\bibitem{OLCSPB}
C.~Ottaviani, R.~Laurenza, T.~P.~W.~Cope, G.~Spedalieri,
S.~L.~Braunstein, and S.~Pirandola,
\href{https://doi.org/10.1117/12.2244899}{``Secret-key capacity of the thermal-loss channel: Improving the lower bound,''}
in \emph{Quantum Information Science and Technology II},
Proc. SPIE \textbf{9996}, 999609 (2016).
\href{https://arxiv.org/abs/1609.02169v1}{arXiv:1609.02169v1}

\bibitem{OPB}
G.~Ortolano, S.~Pirandola, and L.~Banchi,
\href{https://doi.org/10.1103/v53z-9gtm}{``Lower bounding the secret key capacity of bosonic Gaussian channels via optimal Gaussian measurements,''}
\emph{Phys. Rev. A}, accepted (2026).
\href{https://arxiv.org/abs/2512.15502v1}{arXiv:2512.15502v1}

\bibitem{MLG}
F.~A.~Mele, L.~Lami, and V.~Giovannetti,
\href{https://doi.org/10.1038/s41566-024-01595-9}{``Maximum tolerable excess noise in continuous-variable quantum key distribution and improved lower bound on two-way capacities,''}
\emph{Nature Photonics} \textbf{19}, 329--334 (2025).
\href{https://arxiv.org/abs/2303.12867v4}{arXiv:2303.12867v4}

\bibitem{BPinterpolation}
M.~Barber and S.~Pirandola,
\href{https://doi.org/10.1103/v55w-6x44}{``Optimal interpolation of entanglement purification protocols,''}
\emph{Phys. Rev. A} \textbf{114}, 032416 (2026).
\href{https://arxiv.org/abs/2511.09657v2}{arXiv:2511.09657v2}

\bibitem{BPdistillation}
M.~Barber and S.~Pirandola,
\href{https://arxiv.org/abs/2608.13644v1}{``Heuristic lookahead distillation protocol search,''}
arXiv:2608.13644v1 (2026).

\bibitem{BPthermal}
M.~Barber and S.~Pirandola,
\href{https://arxiv.org/abs/2609.27792v1}{``Improved lower bound for the two-way-assisted quantum capacity of the bosonic thermal-loss channel,''}
arXiv:2609.27792v1 (2026).

\bibitem{Mele2025}
F.~A.~Mele, G.~Barbarino, V.~Giovannetti, and M.~Fanizza,
``Achievable rates in non-asymptotic bosonic quantum communication,''
\href{https://arxiv.org/abs/2502.05524v2}{arXiv:2502.05524v2} (2025).

\bibitem{BraunsteinLoock}
S.~L.~Braunstein and P.~van Loock,
\href{https://doi.org/10.1103/RevModPhys.77.513}{``Quantum information with continuous variables,''}
\emph{Reviews of Modern Physics} \textbf{77}, 513 (2005).

\bibitem{HolevoBook}
A.~S.~Holevo,
\href{https://doi.org/10.1515/9783110642490}{\emph{Quantum Systems, Channels, Information: A Mathematical Introduction}},
De Gruyter, Berlin (2019).

\bibitem{ChitambarLOCC}
E.~Chitambar, D.~Leung, L.~Man\v{c}inska, M.~Ozols, and A.~Winter,
\href{https://doi.org/10.1007/s00220-014-1953-9}{``Everything you always wanted to know about LOCC (but were afraid to ask),''}
\emph{Communications in Mathematical Physics} \textbf{328}, 303--326 (2014).

\bibitem{Umegaki}
H.~Umegaki,
\href{https://doi.org/10.2996/kmj/1138844604}{``Conditional expectation in an operator algebra. IV. Entropy and information,''}
\emph{Kodai Mathematical Seminar Reports} \textbf{14}, 59--85 (1962).

\bibitem{VedralReview}
V.~Vedral,
\href{https://doi.org/10.1103/RevModPhys.74.197}{``The role of relative entropy in quantum information theory,''}
\emph{Reviews of Modern Physics} \textbf{74}, 197 (2002).

\bibitem{HolevoShirokovWerner}
A.~S.~Holevo, M.~E.~Shirokov, and R.~F.~Werner,
\href{https://arxiv.org/abs/quant-ph/0504204}{``Separability and entanglement-breaking in infinite dimensions,''}
arXiv:quant-ph/0504204 (2005).

\bibitem{Vedral1997}
V.~Vedral, M.~B.~Plenio, M.~A.~Rippin, and P.~L.~Knight,
\href{https://doi.org/10.1103/PhysRevLett.78.2275}{``Quantifying entanglement,''}
\emph{Physical Review Letters} \textbf{78}, 2275 (1997).

\bibitem{VedralPlenio1998}
V.~Vedral and M.~B.~Plenio,
\href{https://doi.org/10.1103/PhysRevA.57.1619}{``Entanglement measures and purification procedures,''}
\emph{Physical Review A} \textbf{57}, 1619 (1998).

\bibitem{LimitsEntanglement}
M.~Horodecki, P.~Horodecki, and R.~Horodecki,
\href{https://doi.org/10.1103/PhysRevLett.84.2014}{``Limits for entanglement measures,''}
\emph{Physical Review Letters} \textbf{84}, 2014 (2000).

\bibitem{HorodeckiReview}
R.~Horodecki, P.~Horodecki, M.~Horodecki, and K.~Horodecki,
\href{https://doi.org/10.1103/RevModPhys.81.865}{``Quantum entanglement,''}
\emph{Reviews of Modern Physics} \textbf{81}, 865 (2009).

\bibitem{Lindblad}
G.~Lindblad,
\href{https://doi.org/10.1007/BF01609396}{``Completely positive maps and entropy inequalities,''}
\emph{Communications in Mathematical Physics} \textbf{40}, 147--151 (1975).

\bibitem{UhlmannEntropy}
A.~Uhlmann,
\href{https://doi.org/10.1007/BF01609834}{``Relative entropy and the Wigner--Yanase--Dyson--Lieb concavity in an interpolation theory,''}
\emph{Communications in Mathematical Physics} \textbf{54}, 21--32 (1977).

\bibitem{Watrous}
J.~Watrous,
\href{https://www.cambridge.org/core/books/theory-of-quantum-information/AE4AA5638F808D2CFEB070C55431D897}{\emph{The Theory of Quantum Information}},
Cambridge University Press, Cambridge (2018).

\bibitem{PlenioVirmani}
M.~B.~Plenio and S.~Virmani,
\href{https://arxiv.org/abs/quant-ph/0504163}{``An introduction to entanglement measures,''}
\emph{Quantum Information and Computation} \textbf{7}, 1--51 (2007).

\bibitem{PlenioVirmaniPapadopoulos}
M.~B.~Plenio, S.~Virmani, and P.~Papadopoulos,
\href{https://doi.org/10.1088/0305-4470/33/22/101}{``Operator monotones, the reduction criterion and the relative entropy,''}
\emph{Journal of Physics A: Mathematical and General} \textbf{33}, L193--L197 (2000).

\bibitem{DevetakWinter}
I.~Devetak and A.~Winter,
``Distillation of secret key and entanglement from quantum states,''
\emph{Proceedings of the Royal Society A} \textbf{461}, 207--235 (2005).
\href{https://arxiv.org/abs/quant-ph/0306078}{arXiv:quant-ph/0306078}.

\bibitem{StateMerging}
M.~Horodecki, J.~Oppenheim, and A.~Winter,
\href{https://doi.org/10.1007/s00220-006-0118-x}{``Quantum state merging and negative information,''}
\emph{Communications in Mathematical Physics} \textbf{269}, 107--136 (2007).

\bibitem{CoverThomas}
T.~M.~Cover and J.~A.~Thomas,
\href{https://doi.org/10.1002/047174882X}{\emph{Elements of Information Theory}},
2nd ed., Wiley, Hoboken (2006).

\bibitem{Bhatia}
R.~Bhatia,
\href{https://doi.org/10.1007/978-1-4612-0653-8}{\emph{Matrix Analysis}},
Springer, New York (1997).

\bibitem{FultonHarris}
W.~Fulton and J.~Harris,
\href{https://doi.org/10.1007/978-1-4612-0979-9}{\emph{Representation Theory: A First Course}},
Graduate Texts in Mathematics, Vol.~129, Springer, New York (1991).

\bibitem{Harrow}
A.~W.~Harrow,
\href{https://arxiv.org/abs/1308.6595}{``The church of the symmetric subspace,''}
arXiv:1308.6595 (2013).

\bibitem{Perelomov}
A.~M.~Perelomov,
\href{https://arxiv.org/abs/math-ph/0203002}{``Coherent states for arbitrary Lie group,''}
\emph{Communications in Mathematical Physics} \textbf{26}, 222--236 (1972).

\bibitem{CollinsSniady}
B.~Collins and P.~\'{S}niady,
\href{https://doi.org/10.1007/s00220-006-1554-3}{``Integration with respect to the Haar measure on unitary, orthogonal and symplectic group,''}
\emph{Communications in Mathematical Physics} \textbf{264}, 773--795 (2006).

\bibitem{Jamiolkowski}
A.~Jamio\l{}kowski,
\href{https://doi.org/10.1016/0034-4877(72)90011-0}{``Linear transformations which preserve trace and positive semidefiniteness of operators,''}
\emph{Reports on Mathematical Physics} \textbf{3}, 275--278 (1972).

\bibitem{Choi}
M.-D.~Choi,
\href{https://doi.org/10.1016/0024-3795(75)90075-0}{``Completely positive linear maps on complex matrices,''}
\emph{Linear Algebra and its Applications} \textbf{10}, 285--290 (1975).

\bibitem{BraunsteinKimble}
S.~L.~Braunstein and H.~J.~Kimble,
\href{https://doi.org/10.1103/PhysRevLett.80.869}{``Teleportation of continuous quantum variables,''}
\emph{Physical Review Letters} \textbf{80}, 869 (1998).

\bibitem{WernerTeleport}
R.~F.~Werner,
\href{https://doi.org/10.1088/0305-4470/34/35/332}{``All teleportation and dense coding schemes,''}
\emph{Journal of Physics A: Mathematical and General} \textbf{34}, 7081--7094 (2001).

\bibitem{LiuzzoScorpo}
P.~Liuzzo-Scorpo, A.~Mari, V.~Giovannetti, and G.~Adesso,
\href{https://doi.org/10.1103/PhysRevLett.119.120503}{``Optimal continuous variable quantum teleportation with limited resources,''}
\emph{Physical Review Letters} \textbf{119}, 120503 (2017); erratum, \textbf{120}, 029904 (2018).

\bibitem{LaurenzaFinite}
R.~Laurenza, S.~L.~Braunstein, and S.~Pirandola,
\href{https://doi.org/10.1038/s41598-018-33332-y}{``Finite-resource teleportation stretching for continuous-variable systems,''}
\emph{Scientific Reports} \textbf{8}, 15267 (2018).

\bibitem{Rockafellar}
R.~T.~Rockafellar,
\href{https://www.jstor.org/stable/j.ctt14bs1ff}{\emph{Convex Analysis}},
Princeton University Press, Princeton (1970).

\bibitem{Reduction}
M.~Horodecki and P.~Horodecki,
\href{https://doi.org/10.1103/PhysRevA.59.4206}{``Reduction criterion of separability and limits for a class of distillation protocols,''}
\emph{Physical Review A} \textbf{59}, 4206 (1999).

\bibitem{WinterGentle}
A.~Winter,
\href{https://doi.org/10.1109/18.796385}{``Coding theorem and strong converse for quantum channels,''}
\emph{IEEE Transactions on Information Theory} \textbf{45}, 2481--2485 (1999).

\bibitem{LamiShirokov}
L.~Lami and M.~E.~Shirokov,
``Attainability and lower semi-continuity of the relative entropy of
entanglement and variations on the theme,''
\emph{Annales Henri Poincar\'{e}} \textbf{24}, 4069--4137 (2023).
\href{https://arxiv.org/abs/2105.08091}{arXiv:2105.08091}.

\bibitem{ChristandlMullerHermes}
M.~Christandl and A.~M\"uller-Hermes,
\href{https://doi.org/10.1007/s00220-017-2885-y}{``Relative entropy bounds on quantum, private and repeater capacities,''}
\emph{Communications in Mathematical Physics} \textbf{353}, 821--852 (2017).
\href{https://arxiv.org/abs/1604.03448}{arXiv:1604.03448}.

\bibitem{BennettPurification}
C.~H.~Bennett, G.~Brassard, S.~Popescu, B.~Schumacher, J.~A.~Smolin, and W.~K.~Wootters,
\href{https://doi.org/10.1103/PhysRevLett.76.722}{``Purification of noisy entanglement and faithful teleportation via noisy channels,''}
\emph{Physical Review Letters} \textbf{76}, 722 (1996); erratum, \textbf{78}, 2031 (1997).

\bibitem{BennettMixed}
C.~H.~Bennett, D.~P.~DiVincenzo, J.~A.~Smolin, and W.~K.~Wootters,
\href{https://doi.org/10.1103/PhysRevA.54.3824}{``Mixed-state entanglement and quantum error correction,''}
\emph{Physical Review A} \textbf{54}, 3824 (1996).

\bibitem{RennerSecurity}
R.~Renner,
\href{https://arxiv.org/abs/quant-ph/0512258}{``Security of quantum key distribution,''}
\emph{International Journal of Quantum Information} \textbf{6}, 1--127 (2008).

\bibitem{SecureBound}
K.~Horodecki, M.~Horodecki, P.~Horodecki, and J.~Oppenheim,
\href{https://doi.org/10.1103/PhysRevLett.94.160502}{``Secure key from bound entanglement,''}
\emph{Physical Review Letters} \textbf{94}, 160502 (2005).

\bibitem{HHHO}
K.~Horodecki, M.~Horodecki, P.~Horodecki, and J.~Oppenheim,
``General paradigm for distilling classical key from quantum states,''
\emph{IEEE Transactions on Information Theory} \textbf{55}, 1898 (2009).
\href{https://arxiv.org/abs/quant-ph/0506189}{arXiv:quant-ph/0506189}.

\bibitem{Stinespring}
W.~F.~Stinespring,
\href{https://doi.org/10.1090/S0002-9939-1955-0069403-4}{``Positive functions on $C^*$-algebras,''}
\emph{Proceedings of the American Mathematical Society} \textbf{6}, 211--216 (1955).

\bibitem{FuchsGraaf}
C.~A.~Fuchs and J.~van de Graaf,
\href{https://arxiv.org/abs/quant-ph/9712042}{``Cryptographic distinguishability measures for quantum-mechanical states,''}
\emph{IEEE Transactions on Information Theory} \textbf{45}, 1216--1227 (1999).

\bibitem{UhlmannFidelity}
A.~Uhlmann,
\href{https://doi.org/10.1016/0034-4877(76)90060-4}{``The transition probability in the state space of a $*$-algebra,''}
\emph{Reports on Mathematical Physics} \textbf{9}, 273--279 (1976).

\bibitem{IvanSabapathySimon}
J.~S.~Ivan, K.~K.~Sabapathy, and R.~Simon,
\href{https://doi.org/10.1103/PhysRevA.84.042311}{``Operator-sum representation for bosonic Gaussian channels,''}
\emph{Phys. Rev. A} \textbf{84}, 042311 (2011).

\bibitem{Hoeffding1963}
W.~Hoeffding,
\href{https://doi.org/10.1080/01621459.1963.10500830}{``Probability inequalities for sums of bounded random variables,''}
\emph{Journal of the American Statistical Association} \textbf{58}, 13--30 (1963).

\bibitem{OgawaHayashi2004}
T.~Ogawa and M.~Hayashi,
\href{https://doi.org/10.1109/TIT.2004.828155}{``On error exponents in quantum hypothesis testing,''}
\emph{IEEE Transactions on Information Theory} \textbf{50}, 1368--1372 (2004).
\href{https://arxiv.org/abs/quant-ph/0206151}{arXiv:quant-ph/0206151}.
\end{thebibliography}
\end{document}